\title{On the Capacity Regions of Two-Way Diamond Channels}
\author{Mehdi Ashraphijuo, Vaneet Aggarwal and Xiaodong Wang \thanks{ The authors are with the Electrical Engineering Department, Columbia University, New York, NY 10027, email: \{mehdi,wangx\}@ee.columbia.edu, vaneet@alumni.princeton.edu. }}
\newtheorem{theorem}{Theorem}
\newtheorem{corollary}{Corollary}
\newtheorem{remark}{Remark}
\newtheorem{example}{Example}
\begin{document}
\maketitle
\begin{abstract}
In this paper, we study the capacity regions of two-way diamond channels. We show that for a linear deterministic model the capacity of the diamond channel in each direction can be simultaneously achieved for all values of channel parameters, where the forward and backward channel parameters are not necessarily the same. We divide the achievability scheme into three cases, depending on the forward and backward channel parameters. For the first case, we use a reverse amplify-and-forward strategy in the relays. For the second case, we use four relay strategies based on the reverse amplify-and-forward with some modifications in terms of replacement and repetition of some stream levels. For the third case, we use two relay strategies based on performing two rounds of repetitions in a relay. The proposed schemes for deterministic channels are used to find the capacity regions within constant gaps for two special cases of the Gaussian two-way diamond channel. First, for the general Gaussian two-way relay channel with a simple coding scheme the smallest gap is achieved compared to the prior works. Then, a special symmetric Gaussian two-way diamond model is considered and the capacity region is achieved within four bits.
\end{abstract}

\

{\bf Index terms:} Two-way diamond channel, reverse amplify-and-forward, cut-set bound, linear deterministic channel, Gaussian channel, two-way relay channel, rate region with constant gap.

\newpage
\section{Introduction}

%So far, the main focus has been on the one-way relay channel, which was introduced by \cite{Meulen} and further investigated in \cite{cover1}. In general, cooperative communication schemes are particulary important when reliable communication can not be guaranteed by using a conventional point-to-point connection. Cooperation between two source nodes for communication to a common receiver was proposed in \cite{Sendonaris}. There, a non-cooperative phase is followed by a cooperative one and it is shown that this strategy outperforms non-cooperative strategies. Cooperation by using distributed space-time coding techniques in networks has been analyzed in \cite{Laneman,Nabar,Oechtering,Sezgin}. Recent information-theoretic studies on relay channels can be found in e.g. \cite{Kramer} and references therein. Relaying can be expected to be adopted in current and future wireless systems, as it has been introduced in the 802.16j (WiMAX) standard. For the general one-way relay network studied in \cite{Avestimehr} and the interference channel studied in \cite{Etkin,Bresler}, they showed that their scheme can achieve to within 3 bits of the capacity for all channel parameter values.

Two-way communication between two nodes was first studied by Shannon \cite{Shannon}.  There have been many attempts recently to demonstrate two-way communications experimentally \cite{Chen:1998aa,Khandani,Bliss:2007,Radunovic:2009aa,Aryafar,tech_report,Bharadia}. The two-way relay channel where two nodes communicate to each other in the presence of a single relay, has been widely studied \cite{b1,b2,i4,i5,i6,i7,i8,Rankov,s2,s5,s8,s16,s20,s22,Avestim-t2,s4,s12,s13}. In this paper, we will consider the two-way diamond channel, where two nodes communicate to each other in the presence of two relays.

Some achievable rate regions for the two-way relay channel are based on strategies like decode-and-forward, compress-and-forward, and amplify-and-forward \cite{Avestim-t2,s4,s7,s12,s13,s14,s19}. The capacity region of the two-way half-duplex relay channel, where the relay decodes the message is characterized in \cite{Oechtering}. Network coding type techniques have been proposed \cite{Katti,Hausl,Baik} in order to improve the transmission rate. While inferior to traditional routing at low signal-to-noise-ratios (SNR), it was shown that network coding achieves twice the rate of routing at high SNR \cite{Katti2}. The authors of \cite{i7} considered the half-duplex two-way relay channel with unit channel gains, and found that a combination of a decode-and-forward strategy using lattice codes and a joint decoding strategy is asymptotically optimal at high SNR. The authors of \cite{Avestim-t1,Avestim-t2} studied the capacity of the full-duplex two-way relay channel with two users and one relay, and found that a rate within three bits for each user to the capacity can be simultaneously achieved by both users. The result was further extended in \cite{i6,s21}, where lattice codes were used to bring the gap down from three bits to one bit for some special case of channel gains.

The diamond channel was first introduced in \cite{th1}, and consists of one transmitter, two relays and a receiver. In this paper, we study the capacity of the full-duplex two-way diamond channel. In \cite{d1}, several techniques, i.e., amplify-and-forward, hybrid decode-amplify-and-forward with linear combination, hybrid decode-amplify-and-forward with multiplexed coding, decode-and-forward, and partial decode-and-forward, have been considered for achievability in a Gaussian diamond reciprocal channel with half-duplex nodes, and it is shown that these techniques achieve DoF of at most 1, 0.25, 0.25, 0.25, and 0.75, respectively. The two-way half-duplex $K$-relay channel has been studied using the amplify-and-forward strategy at the relays \cite{d2,d5,d4,th2}.

The design of relay beamformers based on minimizing the transmit power subject to the received signal-to-noise ratio constraints was considered in \cite{th2}. Furthermore, achievability schemes using time-sharing are investigated in \cite{d3} for a symmetric reciprocal diamond channel with half-duplex nodes and the inner and outer bounds are compared using simulations. However, we show that the achievability scheme in \cite{d3} has an unbounded gap from the capacity. We note that, to the best of our knowledge, none of the prior works gave a capacity achieving strategy for a two-way diamond channel.

In this paper, we consider a linear deterministic model which was proposed in \cite{Avestimehr}, and has been shown to lead to approximate capacity results for Gaussian channels in \cite{detex,Avestim-t2,Tsedete,Mohajer,J3,dete1,dette,Alireza}. We study the capacity region of a two-way linear deterministic diamond channel where the forward and the backward channel gains are not necessarily the same. We find that the capacity in each direction can be simultaneously achieved. Thus, each user can transmit at a rate which is not affected by the fact that the relays receive the superposition of the signals.

In order to achieve the capacity in each direction separately, we develop new transmission strategies by the transmitters and the relays. The strategies proposed for the one-way diamond channel in \cite{Avestimehr} do not directly work for two-way channels. The reason is that they are dependent on the channel parameters in the forward direction; but for two-way channels we need a strategy that is optimal for both directions. For the special case when the diamond channel reduces to a two-way relay channel (channel gains to and from one of the relays are zeros), our proposed strategy reduces to a reverse amplify-and-forward strategy, where the relay reverses the order of the received signals to form the transmitted signal. The proposed strategy in this case is different from the one in \cite{Avestim-t2} for two-way relay channels, since the relay strategy in  \cite{Avestim-t2} depends on the channel parameters, while ours simply reverses the order of the input. On the other hand, the transmission strategy at the source nodes in our approach is dependent on the channel parameters unlike that in \cite{Avestim-t2}. Thus, the proposed strategy in this paper makes the relay strategy simpler by compensating in the transmission strategy at the source nodes. This proposed simple relay strategy leads to a novel strategy for Gaussian channels. Thus, we extend the achievability scheme to Gaussian channels, and obtain a simpler approach to achieving capacity for a two-way relay channel compared with that in \cite{i6,s21}.

%In the transmission from the relays, although the channel setting from each relay is broadcast, nodes $A$ and $B$ can achieve their point-to-point channel capacities without being affected by each other because of the side information on the transmitted message at each node . We use the recently proposed deterministic approach to capture the essence of the problem and to determine good transmission strategies at nodes $A$ and $B$ and relay transmission strategies for the Gaussian channel. Motivated by the deterministic approach in \cite{Avestimehr} for Gaussian networks, here we make progress towards the goal of ``approximating'' the capacity region of the two-way diamond channel.
%Furthermore, the channel parameters in the forward and backward channel directions are assumed to be general with any value, i.e., channel reciprocity is not assumed. Our focus is on the case where there is no direct link between the two nodes.
%We need to mention that the strategies in \cite{Avestimehr} for one-way diamond channels are not extendable to two-way channels.

For a general two-way diamond channel, we give different strategies based on the parameters of both the forward and backward channels. Depending on the forward and backward channel gains we consider four cases; these cases are further subdivided. Two special cases are Cases 3.1.2 and 4.1.2. Our first main result is that if neither the forward, nor the backward channel is of one of these two cases, then the proposed reverse amplify-and-forward strategy at the relays is optimal.

We next consider the case that exactly one of the forward and backward channels is of Case 3.1.2 or 4.1.2. Without loss of generality, we assume that the forward channel is of one of the two mentioned cases. For each of these two cases, we give four new strategies at the relay which involve various modifications to the reverse amplify-and-forward strategy, such as repeating some of the streams on multiple levels or changing the order of transmission at some levels at one of the relays. Furthermore, the transmission strategy for the forward direction is rather straightforward by simply sending capacity number of bits at the lowest levels. We show that all these modified strategies achieve the capacity in the forward direction. The choice of the strategies then depends on the parameters in the backward direction. We show that for each case of the backward channel, at least one of the four proposed strategies achieves the capacity for the backward direction. Finally, the case when both the forward and backward channels are of Case 3.1.2 or 4.1.2 is considered. Here, a modified form of the relay strategies proposed above is used to achieve the capacity in both directions.

As an extension to the Gaussian model, first we consider the general Gaussian two-way relay channel and show that the proposed achievability scheme leads to a smaller gap to the cut-set outer bound compared to the previous works \cite{Avestim-t2}. Noting that the treatment for linear deterministic model involves many cases, extending all of the cases of deterministic channel to the Gaussian channel model is challenging. Thus, we consider a special case where the forward and backward channels are Gaussian versions of Case 1 in the linear deterministic model. We take the symmetric case where channel gains from the nodes to each relay are equal and also channel gains from each relay to the nodes are equal. For this special case, under certain conditions, we obtain the achievable rate of each direction that is within four bits of the capacity. The achievability scheme employs lattice codes, and is the first leading to an approximate capacity result for two-way diamond channels.

The remainder of this paper is organized as follows. Section II introduces the model for a two-way linear deterministic diamond channel and presents the main capacity result that shows the capacity in each direction can be achieved. Sections III, IV and V present the proofs for various cases of the channel parameters. Sections VI introduces the model for a two-way Gaussian diamond channel and describes our results on the capacity regions. The results include achieving the capacity of each direction within one bit for a two-way relay channel (if the upper-bound for two directions are equal) and otherwise achieving within one bit for the direction with the lower upper-bound and within two bits for the other direction. The results also include achieving the capacity of each direction within four bits for a special case of two-way diamond channel. Section VII concludes the paper. The detailed proofs of various results in Sections III, IV and V are given in Appendices \ref{apdx_sc1}, \ref{apdx_sc2} and \ref{apdx_sc3}, respectively.

\section{Capacity Region of Deterministic Two-Way Diamond Channel}\label{sec:det}

\subsection{Deterministic Two-Way Diamond Channel Model}

The linear deterministic channel model was proposed in \cite{Avestimehr} to focus on signal interactions instead of the additive noise, and to obtain insights for the Gaussian channel. As shown in  Figure \ref{fig:Example0d}, a two-way diamond channel consists of two nodes (denoted by $A$ and $B$) who wish to communicate to each other through two relays (denoted by $R_1$ and $R_2$). We use non-negative integers $n_{Ak}$, $n_{Bk}$, $n_{kA}$, and $n_{kB}$,  to represent the channel gains from node $A$ to $R_k$, node $B$ to  $R_k$, $R_k$ to node $A$, and $R_k$ to node $B$, respectively, for $k \in\{1, 2\}$. In this paper, the links in the direction from $A$ to $B$ are said to be in the forward direction and those from $B$ to $A$ are in the backward direction.

\begin{figure}[htbp]
\centering
	\includegraphics[width=15cm]{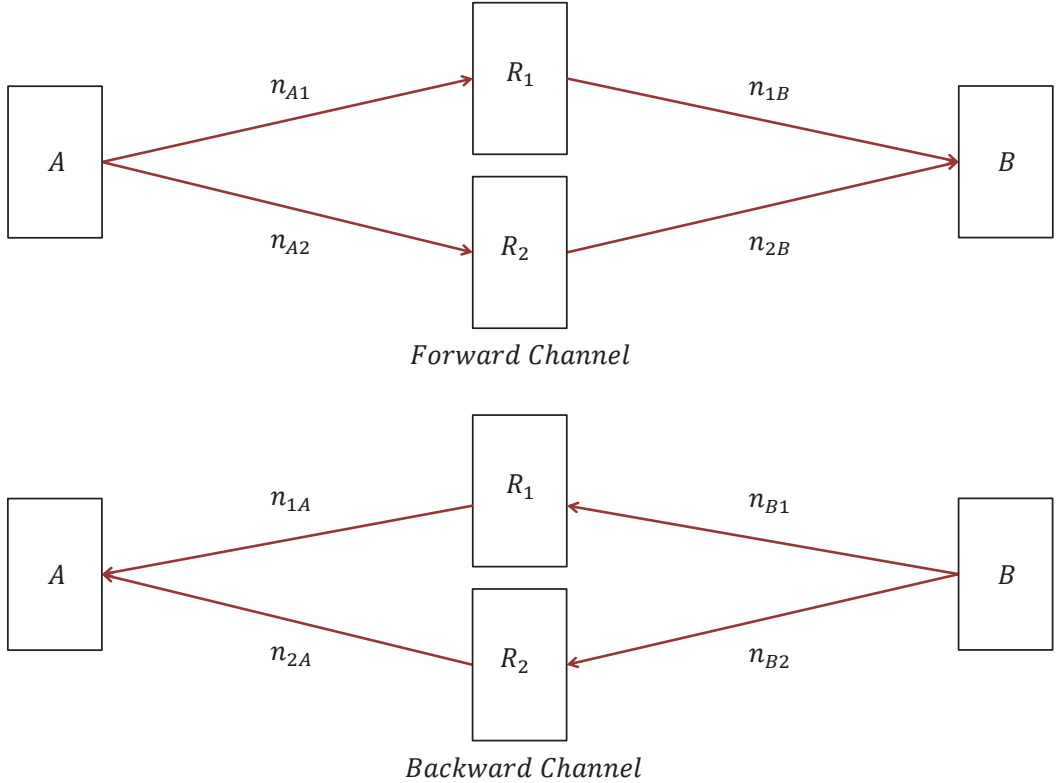}
\caption{A deterministic two-way diamond channel.}
\label{fig:Example0d}
\end{figure}

Let us define $q_{AR}\triangleq\max_k \{n_{Ak}\}$, $q_{RB} \triangleq \max_k \{n_{kB}\}$, $q_{BR}\triangleq\max_k \{n_{Bk}\}$, $q_{RA}\triangleq\max_k \{n_{kA}\}$,  $q^I_k\triangleq\max \{n_{Ak},n_{Bk}\}$, and $q^O_k\triangleq\max \{n_{kA},n_{kB}\}$ for $k \in\{1, 2\}$. Furthermore, denote the channel input at transmitter $u$, for $u\in \{A, B\}$, at time $i$ as $X_{u,i} = [X_{u,i}^{q_{uR}}, \cdots, X_{u,i}^2, X_{u,i}^1]^T \in {\mathsf {F}_2}^{q_{uR}}$, such that $X_{u,i}^1$ and $X_{u,i}^{q_{uR}}$ represent the least and the most significant bits of the transmitted signal, respectively. Also, we define $X^R_{uk,i} = [X_{u,i}^{q_{uR}}, \cdots, X_{u,i}^{q_{uR}-n_{uk}+2}, X_{u,i}^{q_{uR}-n_{uk}+1}, \underset{{q^I_{k}-n_{uk}}}{\underbrace{0,...,0}}]^T$, for $k \in \{1, 2\}$. At each time $i$, the received signal at $R_k$ is given by
\begin{equation}
Y_{k,i}=D_{q^I_{k}}^{q^I_k-n_{Ak}}X^R_{Ak,i}+D_{q^I_{k}}^{q^I_k-n_{Bk}}X^R_{Bk,i} \ \ {\text {mod 2}},
\end{equation}
where $D_{q^I_{k}}$ is a $q^I_{k}\times q^I_{k}$ shift matrix as Eq. (9) in \cite{Avestimehr}. Also if we have $Y_{k,i} = [Y_{k,i}^{q^I_{k}}, \cdots, Y_{k,i}^2, Y_{k,i}^1]^T$, define $V_{k,i} = [0, \cdots, 0, Y_{k,i}^{\min(q^I_{k},q^O_{k})}, \cdots, Y_{k,i}^2, Y_{k,i}^1]^T$, for $k \in \{1, 2\}$,  where the first $(q^O_{k}-q^I_{k})^+$ elements of $V_{k,i}$ are zero.

%\begin{remark}\label{pxp}
%Sometimes a receiver receives equal streams and in those cases needs to add them in modulos that are prime numbers $p\neq 2$ to be able to decode them.
%\end{remark}

Furthermore, define $T_{k,i}\triangleq f_{k,i}(V_{k,1},...,V_{k,i-1})$ where $f_{k,i}:{\left(R^{q^O_{k}}\right)}^{i-1}\rightarrow R^{q^O_{k}}$ is a function at $R_k$ which converts $V_{k,1},...,V_{k,i-1}$ to the output signal at time $i$. We represent $T_{k,i}$'s elements as $T_{k,i} = \left[T_{k,i}^1, T_{k,i}^2, \cdots, T_{k,i}^{q^O_{k}}\right]^T$. Also, we define $T^{\prime}_{ku,i} = [T_{k,i}^1, T_{k,i}^2, \cdots, T_{k,i}^{n_{ku}}, \underset{q_{Ru}-n_{ku}}{\underbrace{0,...,0}}]^T$ for $u \in \{A, B\}$. At each time $i$, the received signal at the receivers $u\in \{A, B\}$ is given by
\begin{equation}
Y_{u,i}=\sum_{k=1}^{2} D_{q_{Ru}}^{q_{Ru}-n_{ku}}T^{\prime}_{ku,i} \ \ {\text {mod 2}}.
\end{equation}
Source $u$ picks a message $W_u$ that it wishes to communicate to $\bar{u}$ ($u,\bar{u}\in \{A,B\}$, $u\neq\bar{u}$), and transmits signal at each time $i$ which is a function of $W_u$ and $Y^{i-1}_{u}=\{Y_{u,i-1},Y_{u,i-2},...,Y_{u,1}\}$. Each destination $\bar u$ uses a decoder, which is a mapping $g_{\bar u}: R^m\times |W_{\bar u}|\rightarrow \{1,...,|W_u|\}$ from the $m$ received signals and the message at the receiver to the source message indices ($|W_u|$ is the number of messages of node $u$ that can be chosen). We say that the rate pair ($R_A\triangleq \frac{\log|W_A|}{m}, R_B\triangleq \frac{\log|W_B|}{m}$) is achievable if the probability of error in decoding both messages by their corresponding destinations can be made arbitrarily close to 0 as $m\rightarrow\infty$. The capacity region is the convex hull of all the achievable rate pairs $(R_A,R_B)$.

\subsection{Capacity of Two-Way Linear Deterministic Diamond Channel}

In this subsection, we state the main result that the cut-set bound for the diamond channel in each direction can be simultaneously achieved, thus giving the capacity region for the two-way linear deterministic diamond channel. It can be seen from Figure \ref{fig:Example0d} that $\max\{n_{A1},n_{A2}\}$ and $\max\{n_{1B},n_{2B}\}$ are cut-set bounds on the transmissions from $A$ and to $B$, respectively. Moreover, $n_{A1}+n_{2B}$ and $n_{A2}+n_{1B}$ are cut-set bounds on the sum of the two paths for the transmission from $A$ to $B$. The same observation can be made for the other direction.

\begin{theorem}\label{thm_det}
For the  two-way linear deterministic diamond channel, the capacity region is given as follows:
\begin{eqnarray}\label{r-d}
{R}_{A}&\le& C_{AB} \triangleq \min\{\max\{n_{A1},n_{A2}\},\max\{n_{1B},n_{2B}\},n_{A1}+n_{2B},n_{A2}+n_{1B}\},\label{r-d1}\\
{R}_{B}&\le& C_{BA} \triangleq \min\{\max\{n_{B1},n_{B2}\},\max\{n_{1A},n_{2A}\},n_{B1}+n_{2A},n_{B2}+n_{1A}\}\label{r-d2}.
\end{eqnarray}
\end{theorem}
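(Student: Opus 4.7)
The plan is to split the argument into a converse (outer bound) and an achievability part, with the latter occupying the bulk of the work and driving the case analysis that the rest of the paper will carry out in Sections III--V. For the converse, I would argue that the right-hand sides of \eqref{r-d1} and \eqref{r-d2} are standard cut-set upper bounds applied independently to each direction of the deterministic diamond channel: $\max\{n_{A1},n_{A2}\}$ bounds what can leave source $A$ in one channel use, $\max\{n_{1B},n_{2B}\}$ bounds what can be received at destination $B$, and the sums $n_{A1}+n_{2B}$, $n_{A2}+n_{1B}$ come from cuts that separate $\{A,R_1\}$ from $\{R_2,B\}$ and vice versa. Since the two-way capacity region can never exceed the one-way cut-set region in either direction (feedback does not enlarge cut-set bounds for deterministic channels of this type), inequalities \eqref{r-d1} and \eqref{r-d2} must hold for every achievable pair.

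For achievability the core point is to exhibit, for every choice of channel parameters, a pair of coding strategies (one at the sources, one at the relays) under which node $A$ transmits at rate $C_{AB}$ and node $B$ simultaneously transmits at rate $C_{BA}$, with both decoders recovering their intended messages exactly from their received sequences after stripping their own contribution. I would first reduce to a single-letter construction (one-shot transmission of $C_{AB}$ bits from $A$ and $C_{BA}$ bits from $B$ is enough, since the channel is deterministic and linear over $\mathbb{F}_2$). Each destination knows its own transmitted symbols and therefore knows the part of $Y_{u,i}$ produced by its own message, so the two-way problem reduces to showing that each direction, viewed in isolation under a relay rule that is \emph{universal} (works regardless of which message is being carried), delivers an injective map from the source bits to the residual observation at the destination.

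The main body of the proof then consists of a case split on the quadruple $(n_{A1},n_{A2},n_{1B},n_{2B})$ for the forward direction and symmetrically on $(n_{B1},n_{B2},n_{1A},n_{2A})$ for the backward direction. Following the outline in the introduction, I would organize the cases as follows:
\begin{enumerate}
\item \textbf{Generic cases:} When neither the forward nor the backward channel falls into the special sub-cases 3.1.2 or 4.1.2, choose at each relay the reverse amplify-and-forward (RAF) rule that maps the received vector to its coordinate reversal before transmission. Show that with an appropriate level-assignment at the source, the RAF map makes the composite end-to-end linear operator over $\mathbb{F}_2$ have rank equal to $C_{AB}$ in the forward direction and $C_{BA}$ in the backward direction simultaneously, because RAF does not depend on the direction's parameters.
\item \textbf{One direction special:} When exactly one direction (say the forward) is in Case 3.1.2 or 4.1.2, introduce four modified relay strategies that locally perturb RAF by repeating certain stream levels or permuting transmission order at one relay. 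Verify that each modification still achieves $C_{AB}$ in the forward direction, and then show, by enumerating the possible backward sub-cases, that at least one of the four modifications also achieves $C_{BA}$ in the backward direction.
\item \textbf{Both directions special:} When both directions are in Cases 3.1.2/4.1.2, combine two rounds of the repetition idea (one per relay) to construct a single relay pair of rules that simultaneously achieves both $C_{AB}$ and $C_{BA}$.
\end{enumerate}

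The hard part, and the reason the paper devotes three full sections and three appendices to it, will be verifying in each sub-case that the chosen relay rule produces an invertible linear map (after the destination subtracts its own contribution) whose image contains the intended $C_{AB}$ (resp.\ $C_{BA}$) independent bits. I expect the bookkeeping for Cases 3.1.2 and 4.1.2---where RAF alone loses rank because of a collision between the two relay paths at particular bit levels---to be the real obstacle: the modifications must be chosen so that the repair in one direction does not destroy the rank property already achieved in the other direction, and the combinatorics of which sub-case of the backward channel is compatible with which of the four modified relay rules is exactly what forces the four-strategy menu rather than a single universal fix.
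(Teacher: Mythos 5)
Your proposal follows essentially the same route as the paper: a cut-set converse, reduction to one-way decoding by cancelling each destination's self-interference, and a three-part achievability argument (reverse amplify-and-forward for the generic cases, a menu of four modified relay strategies when exactly one direction is of Case 3.1.2 or 4.1.2, and a combined repetition strategy when both are). The only cosmetic difference is that the paper certifies decodability in each sub-case by exhibiting an explicit successive decode-and-subtract order rather than by arguing that the end-to-end $\mathbb{F}_2$ map has full rank, which is the same injectivity claim in different clothing.
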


We note that the outer-bound is the cut-set bound, and thus the proof is straightforward. We will prove the achievability of the rate pair $(C_{AB},C_{BA})$.

%These (cut-set) bounds are maximum capacity for any one-way relay channel, even with feedback. They will be still an upper bound for two-way relay channels, because feedback cannot increase the capacity beyond cut-set bounds.

We consider four main cases and several subcases depending on the forward channel parameters as follows.

{\bf Case 1:} $C_{AB}=n_{A2}+n_{1B}$.

{\bf Case 2:} $C_{AB}=n_{A1}+n_{2B}$.

{\bf Case 3:} $C_{AB}=\max\{n_{A1},n_{A2}\}$. We call it {\bf Type 1}, if $\max\{n_{A1},n_{A2}\}=n_{A1}$, and {\bf Type 2} otherwise. For {\bf Type i}, where $i,j\in\{1,2\}, i\neq j$, we have:

\indent \indent {\bf Case 3.1:} $n_{iB}< C_{AB}$. We divide it into two sub-cases:

\indent \indent \indent {\bf Case 3.1.1:} $n_{jB}\ge n_{Aj}+n_{iB}$.

\indent \indent \indent {\bf Case 3.1.2:} $n_{jB} < n_{Aj}+n_{iB}$.

\indent \indent {\bf Case 3.2:} $n_{iB}\ge C_{AB}$.

{\bf Case 4:} $C_{AB}=\max\{n_{1B},n_{2B}\}$ We call it {\bf Type 1}, if $\max\{n_{1B},n_{2B}\}=n_{1B}$, and {\bf Type 2} otherwise. For {\bf Type i}, where $i,j\in\{1,2\}, i\neq j$, we have:

\indent \indent {\bf Case 4.1:} $n_{Ai}< C_{AB}$. We divide it into two sub-cases:

\indent \indent \indent {\bf Case 4.1.1:} $n_{Aj}\ge n_{jB}+n_{Ai}$.

\indent \indent \indent {\bf Case 4.1.2:} $n_{Aj}< n_{jB}+n_{Ai}$.

\indent \indent {\bf Case 4.2:} $n_{Ai}\ge C_{AB}$.

Similarly we divide the backward channel into four main cases and several subcases where the case definition is obtained by interchanging $A$ and $B$ in the forward direction cases. For instance, Case 1 in the backward direction is $C_{BA}=n_{B2}+n_{1A}$.

%depending on its parameters as below:
%
%{\bf Case 1:} $C_{BA}=n_{B2}+n_{1A}$,
%
%{\bf Case 2:} $C_{BA}=n_{B1}+n_{2A}$,
%
%{\bf Case 3:} $C_{BA}=\max\{n_{B1},n_{B2}\}$,
%
%\indent \indent {\bf Case 3.1:} $n_{1A}< n$,
%
%\indent \indent \indent {\bf Case 3.1.1:} $n_{2A}\ge n_{B2}+n_{1A}$,
%
%\indent \indent \indent {\bf Case 3.1.2:} $n_{2A} < n_{B2}+n_{1A}$,
%
%\indent \indent {\bf Case 3.2:} $n_{1A}\ge n$,
%
%{\bf Case 4:} $C_{BA}=\max\{n_{1A},n_{2A}\}$,
%
%\indent \indent {\bf Case 4.1:} $n_{B1}< n$,
%
%\indent \indent \indent {\bf Case 4.1.1:} $n_{B2}\ge n_{2A}+n_{B1}$,
%
%\indent \indent \indent {\bf Case 4.1.2:} $n_{B2}< n_{2A}+n_{B1}$,
%
%\indent \indent {\bf Case 4.2:} $n_{B1}\ge n$.

%{\bf Note:} Achievability schemes that we give here can also achieve the capacity of relay channel with more than two relays for many parameters of the channel.

We divide the proof into three parts, depending on the cases in which forward and backward channel gain parameters lie. The first part is when neither the forward channel nor the backward channel is of Case 3.1.2 or 4.1.2 (Section \ref{neitherr}). The second part is when exactly one of the forward and backward channels is of Case 3.1.2 or 4.1.2 (Section \ref{one}). And finally the third part is when both the forward and backward channels are of Case 3.1.2 or 4.1.2 (Section \ref{bothare}).

\section{Neither the forward channel nor backward channel is of Case 3.1.2 or 4.1.2}\label{neitherr}

In this scenario, we use a reverse amplify-and-forward strategy in the relays to achieve the rate pair $(C_{AB}, C_{BA})$. Assume a particular relay (say $R_i$) gets $n_{Ai}$ levels from node $A$ and $n_{Bi}$ levels from node $B$ and transmits $q^O_i$ levels, as shown in Figure \ref{fig:relay} for $n_{Ai}=3$, $n_{Bi}=6$, and $q^O_i=7$. It receives $Y_{A1}={\left[a_{n_{Ai}},...,a_1\right]}^T$ from node $A$ and $Y_{B1}={\left[b_{n_{Bi}},...,b_1\right]}^T$ from node $B$. Then it sends out the following signal to nodes $A$ and $B$
\begin{eqnarray}
X_{R_i} = \left[ \begin{array}{l}
a_{1}\\
\vdots\\
a_\min(n_{Ai},q^O_i)\\
0_{(q^O_i-n_{Ai})^+}
\end{array}\right] + \left[ \begin{array}{l}
b_{1}\\
\vdots\\
b_\min(n_{Bi},q^O_i)\\
0_{(q^O_i-n_{Bi})^+}
\end{array}\right] \ \ {\text {mod 2}}.
\end{eqnarray}
\begin{figure}[htbp]
\centering
	\includegraphics[width=8cm]{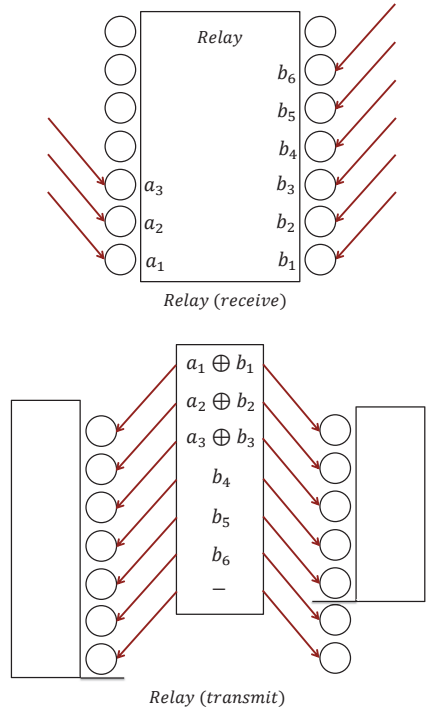}
\caption{Reverse amplify-and-forward as a two-way relay function.}
\label{fig:relay}
\end{figure}
%
%\noindent Case 1. $\min\{n_1,n_2\}\ge n_3$: ${[a_{1}+b_{1},...,a_{n_3}+b_{n_3}]}^T$
%
%\noindent Case 2. $n_1\ge n_3 > n_2$: ${[a_{1}+b_{1},...,a_{n_2}+b_{n_2},a_{n_2+1},...,a_{n_3}]}^T$
%
%\noindent Case 3. $n_2\ge n_3 > n_1$: ${[a_{1}+b_{1},...,a_{n_1}+b_{n_1},b_{n_1+1},...,b_{n_3}]}^T$
%
%\noindent Case 4. $n_3\ge n_1 > n_2$: ${[a_{1}+b_{1},...,a_{n_2}+b_{n_2},a_{n_2+1},...,a_{n_1},0,...,0]}^T$
%
%\noindent Case 5. $n_3\ge n_2 > n_1$: ${[a_{1}+b_{1},...,a_{n_1}+b_{n_1},b_{n_1+1},...,b_{n_2},0,...,0]}^T$
%
%\noindent Case 6. $n_3 > n_1 = n_2$: ${[a_{1}+b_{1},...,a_{n_1}+b_{n_1},0,...,0]}^T$
%In another word, in $J_{m}$ that we introduced before, the elements $j_{ab}$ that $a+b=m+1$ is true for their values, are equal to $1$, and all the other elements are equal to $0$, i.e. the ? of matrix $J_{m}$ is equal to one.
We call this relay strategy as ``Relay Strategy 0'' (also called reverse amplify-and-forward). We will keep the strategy at the relays the same, and for different cases use different strategies for transmission at nodes $A$ and $B$. Since we need to show that the rate pair $(C_{AB},C_{BA})$ is achievable, it is enough to show that there is a transmission strategy for node $A$ such that with the above relay strategy, node $B$ is able to decode the data in a one-way diamond channel because any interference by node $B$ on the received signal can be canceled by node $B$ which knows the interfering signal (Showing it for one direction is enough since the same arguments hold for the other). Thus, we only consider one-way diamond channel for this case. We further consider the case when $n_{A1}, n_{A2}, n_{1B}, n_{2B}>0$ since otherwise the diamond channel reduces to a relay channel or no connection between the nodes $A$ and $B$, and in both cases it is easy to see that node $A$ sending $C_{AB}$ bits on the lowest levels achieves this rate in the forward direction.

Appendix \ref{apdx_sc1} proves that there is a transmission strategy for each of the cases (except for Case 3.1.2 or 4.1.2) such that the above relay strategy achieves the capacity for one-way diamond channel.

\begin{example}
Consider the case $(n_{A1},n_{A2},n_{1B},n_{2B},n_{B1},n_{B2},n_{1A},n_{2A})=(6,2,3,7,6,3,4,8)$. With these parameters, the forward channel is of Case 1, and the backward channel is of Case 3.1.1 Type 1. We use the transmission strategies corresponding to these cases given in Appendix \ref{apdx_sc1}, and shown in Figure \ref{fig:Example1.1} that the desired messages can be decoded by both nodes $A$ and $B$.
\end{example}

\begin{figure}[htbp]
\centering
\subfigure[Transmission to relays.]{
	\includegraphics[width=8.5cm]{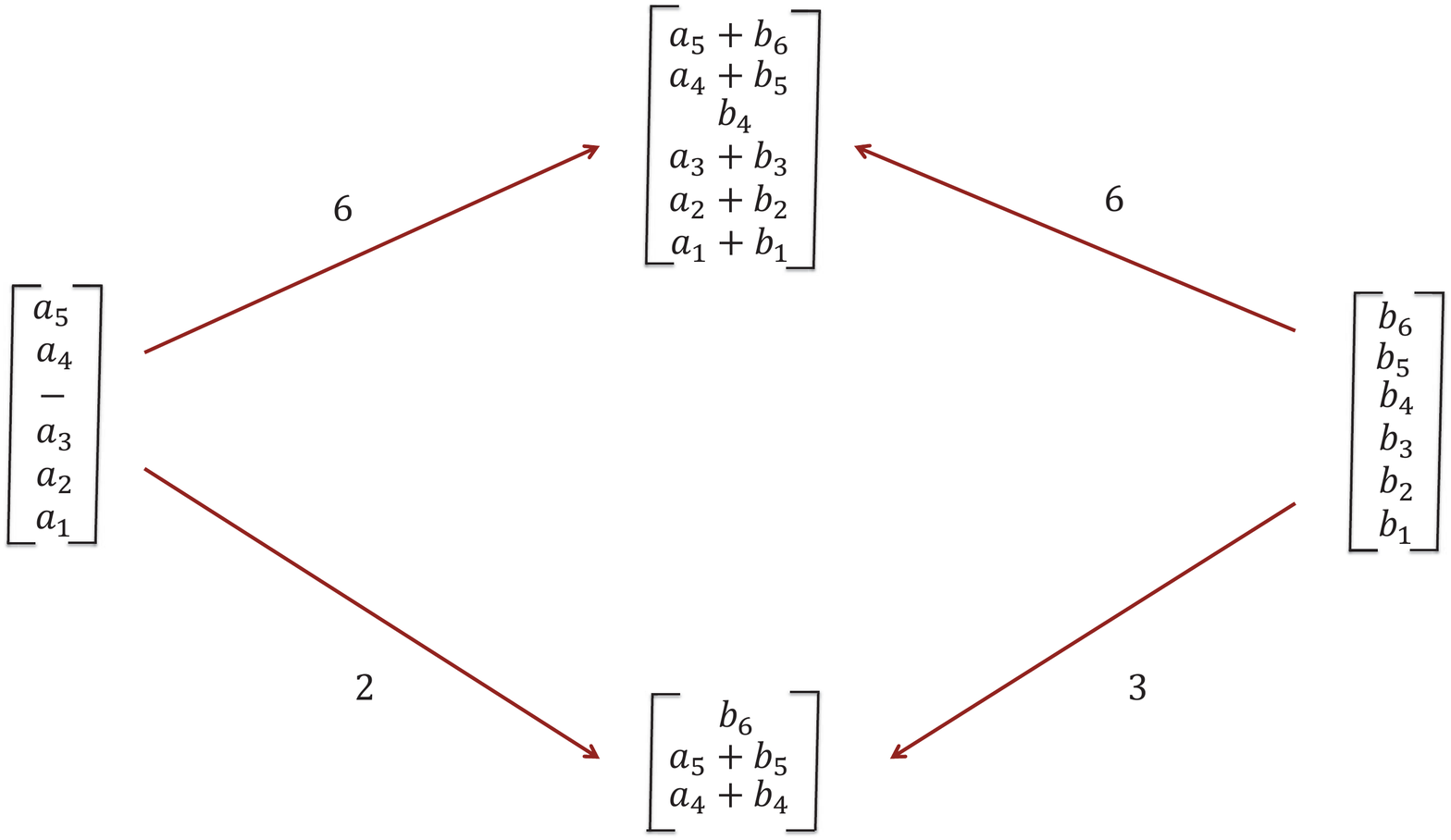}
\label{fig:subfig111}
}
\subfigure[Reception from relays.]{
	\includegraphics[width=8.5cm]{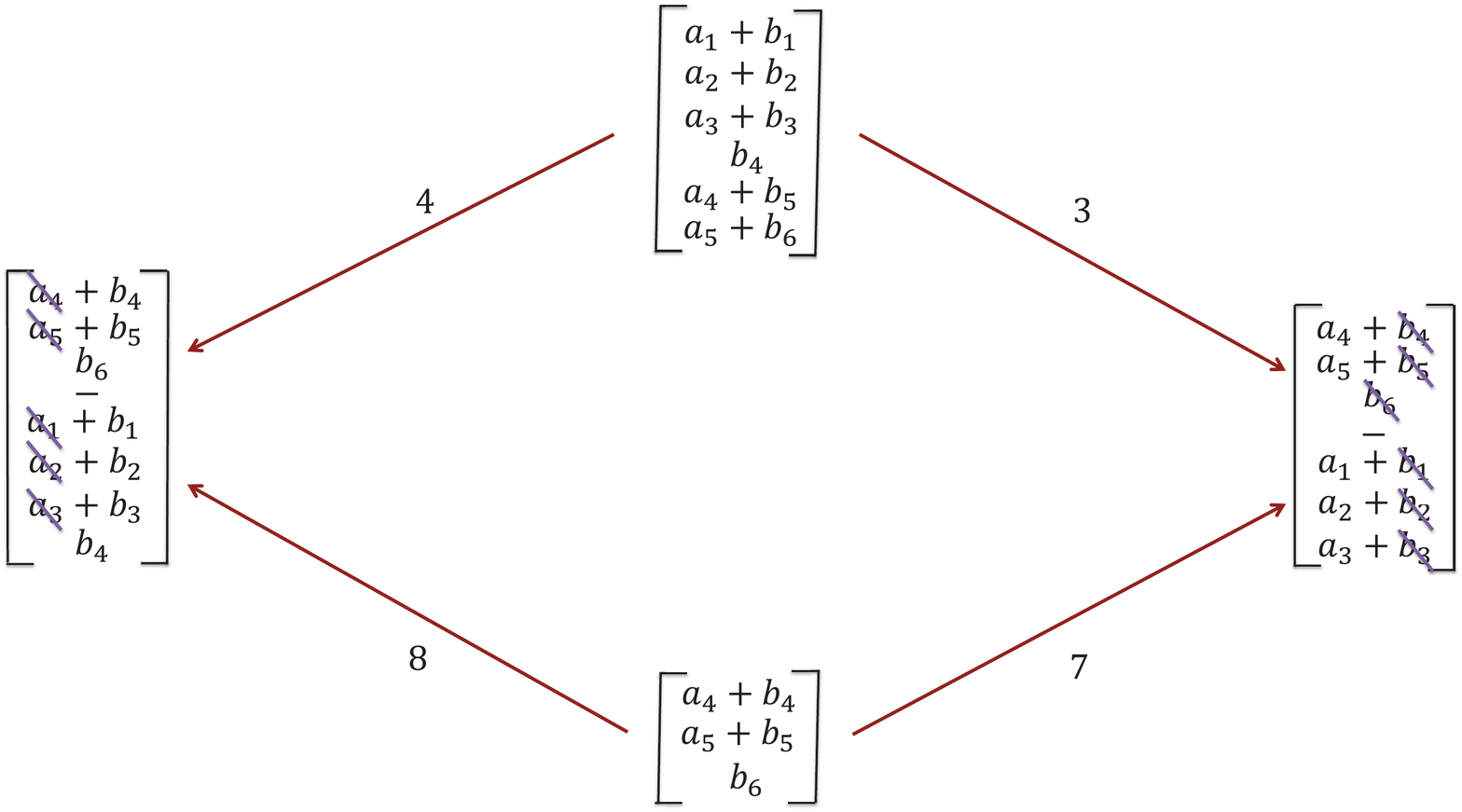}
\label{fig:subfig222}
}
\caption[Optional caption for list of figures]{Example for $(n_{A1},n_{A2},n_{1B},n_{2B},n_{B1},n_{B2},n_{1A},n_{2A})=(6,2,3,7,6,3,4,8)$.}
\label{fig:Example1.1}
\end{figure}

\section{Exactly one of the forward and backward channels is of Case 3.1.2 or 4.1.2}\label{one}

We assume that the forward channel is of Case 3.1.2 or 4.1.2 without loss of generality. The other case where the backward channel is of Case 3.1.2 or 4.1.2 can be proven symmetrically. Since we need to show that the rate pair $(C_{AB},C_{BA})$ is achievable, we will describe a few relay strategies for which the same transmission strategy is used at node $A$ such that node $B$ is able to decode the corresponding message. Furthermore, we will show that at least one of these strategies is optimal for the backward channel for each case of the backward channel parameters. As before we consider the case when $n_{A1}, n_{A2}, n_{1B}, n_{2B}>0$. In the remainder of this section, we assume that the forward channel is of Case 3.1.2. The case that the forward channel is of Case 4.1.2 is treated in Appendix \ref{apdx_sc2}.
%since otherwise the diamond channel reduces to a relay channel or there is no connection between the nodes $A$ and $B$, and in both cases it is easy to see that node $A$ sending $C_{AB}$ bits on the lowest levels achieves this rate in the forward direction.

%$n_{2B}\le 2n_{1B}-2n_{A1}+2n_{A2}$

%$B$ can decode all the $n$ streams as it can be seen from Figure \ref{fig:Examplec}.

%\begin{figure}[htbp]
%\centering
%	\includegraphics[width=19cm]{2,}
%\caption{Achievability scheme for $K=2$ and the case of $C_{AB}=\max\{n_{A1},n_{A2}\}$ for $n_{1B}< n$ and $n_{1B}> n_{2B}-(n-k)$.}
%\label{fig:Examplec}
%\end{figure}

%{\bf Subsubcase 3.1.3:} $2n_{1B}-2n_{A1}+2n_{A2}<n_{2B}<n_{A2}+n_{1B}$ {\bf (not solved yet)}

%For this case, we cannot decode all of the data bits by the strategy that we discussed before for the relay functions.
When the forward channel is of Case 3.1.2, node $A$ uses the same transmission strategy as Case 3.1.1 in Appendix \ref{apdx_sc1}, i.e., it transmits ${[a_{C_{AB}},...,a_{1}]}^T$. Also, the transmission strategy for node $B$ depends on the channel gains in the backward direction of the channel, and is the same as that used in Appendix \ref{apdx_sc1} for each set of parameters.

For the relay strategy, we will choose one of the four strategies explained in the following depending on the backward channel parameters. We will prove that all of these strategies are optimal for the forward channel for any set of parameters.

The parameters associated with each relay strategy proposed here are only based on the forward channel gains, and we will show that at least one of the proposed strategies is optimal for each choice of the backward channel parameters. Note that using Relay Strategy 0 in both relays, node $B$ cannot necessarily decode the message if the forward channel is of Case 3.1.2 or 4.1.2, when the above transmission strategy is used by node $A$. An example is illustrated in Figure \ref{fig:Exampleno} when the forward channel is of Case 3.1.2 and the parameters are $n_{A1}=4$, $n_{A2}=3$, $n_{1B}=3$ and $n_{2B}=5$.

\begin{figure}[htbp]
\centering
	\includegraphics[width=15cm]{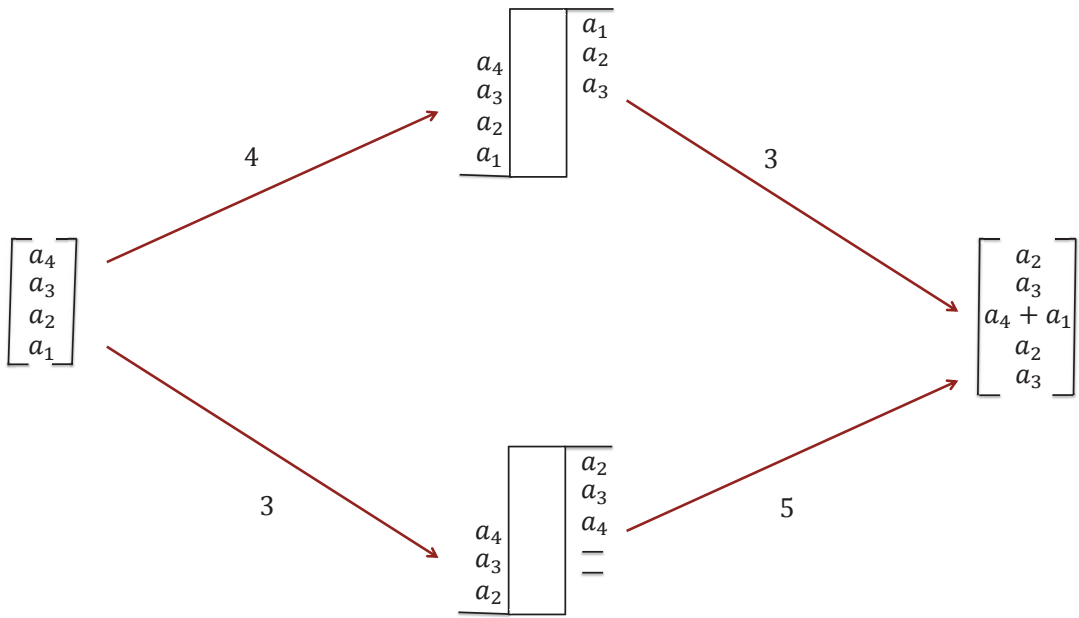}
\caption{An example that Relay Strategy 0 does not work.}
\label{fig:Exampleno}
\end{figure}
% and also make one of the following modifications on the relay function depending on the parameters on the channel on the other direction in a way that it can still achieve its capacity:

\begin{remark} All relay strategies in this subsection and in Appendix \ref{apdx_sc2}, are defined with respect to the forward channel parameters (and in favor of the forward channel direction\footnote{In the sense that the strategies are designed so  that the forward communication achieves the capacity.}) because we assumed that the forward channel is either of Case 3.1.2 or 4.1.2 and the backward channel is not of these cases. We note that Relay Strategy 0 is symmetric and is not dependent on the channel gains in any direction. In Section \ref{bothare}, we will generalize some of these strategies to be based on the parameters of both the forward and backward channels.
\end{remark}

% and Appendix \ref{apdx_sc3}

\subsection{ \bf Relay Strategy 1:}

If the forward channel is of Case 3.1.2 Type $i$, then Relay Strategy 0 is used at $R_i$, and Relay Strategy 1 is used at $R_{\bar{i}}$, where $i,\bar{i}\in\{1,2\}, i\neq \bar{i}$. Here, we define Relay Strategy 1 at $R_2$ (forward channel of Case 3.1.2 Type $1$), while that for $R_1$ can be obtained by interchanging roles of relays $R_1$ and $R_2$ (interchanging 1 and 2 and forward channel of Case 3.1.2 Type $2$). As shown in Figure \ref{fig:rs1}, if $R_2$ receives a block of $n_{2B}$ bits, first it will reverse them as in Relay Strategy 0 and then changes the order of the first $n_{1B}-(n_{A1}-n_{A2})$ streams\footnote{In the following relay strategies, we divide the streams into multiple sub-streams. The number of streams in each sub-stream is a non-negative number when the forward channel is of Case 3.1.2 Type 1.} with the next $n_{A1}-n_{1B}$ streams.

\begin{figure}[htbp]
\centering
	\includegraphics[width=10cm]{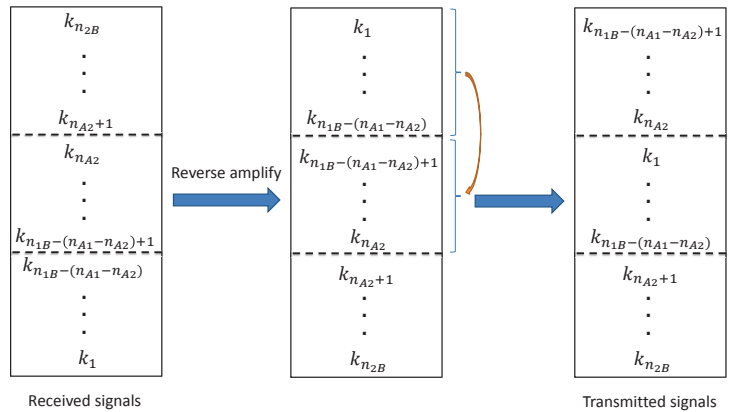}
\caption{Relay Strategy 1 at $R_2$.}
\label{fig:rs1}
\end{figure}

Node $A$ transmits ${[a_{C_{AB}},...,a_{1}]}^T$. The received signals can be seen in Figure \ref{fig:s1s}. We use $(R_i, B_j)$ to denote block number $j$ from $R_i$. Bits that are not delivered to node $B$ from $R_1$ using Relay Strategy 0, ($a_{n_{1B}+1}, ..., a_{n_{A1}}$), are all sent at the highest levels from $R_2$ to node $B$ and thus are decoded with no interference (block $(R_2,B_1)$). The remaining bits can be decoded by starting from the lowest level of reception in $B$ ($a_{n_{1B}}$ in block $(R_1,B_4)$) and removing the effect of the decoded bits and going up.

\begin{figure}[htbp]
\centering
	\includegraphics[width=15cm]{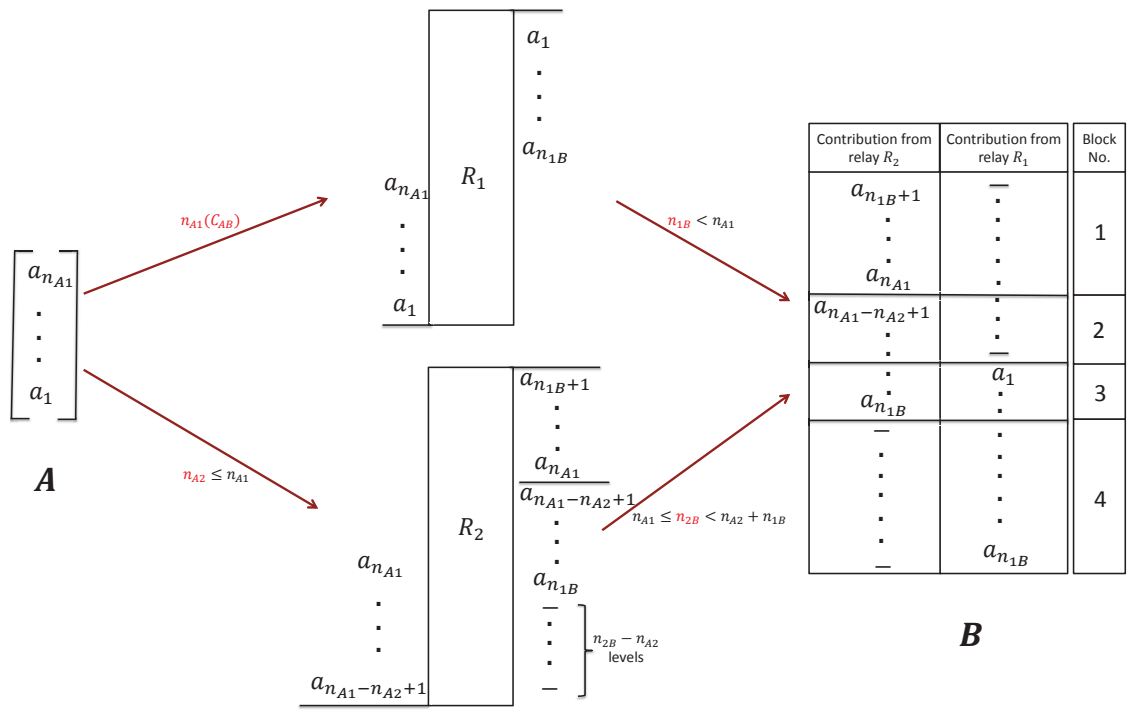}
\caption{Received signals by using Relay Strategy 1 when the forward channel is of Case 3.1.2 Type 1.}
\label{fig:s1s}
\end{figure}

\begin{example}
Consider the case $(n_{A1},n_{A2},n_{1B},n_{2B},n_{B1},n_{B2},n_{1A},n_{2A})=(6,4,5,7,6,5,1,7)$. With these parameters, the forward channel is of Case 3.1.2 Type 1, and the backward channel is of Case 3.1.1 Type 1. For the backward channel, we use the transmission strategy corresponding to Case 3.1.1 given in Appendix \ref{apdx_sc1} (transmit ${[b_{C_{BA}}, \cdots, b_1]}^T$) and for the forward channel, we transmit ${[a_{C_{AB}},...,a_{1}]}^T$, as explained at the beginning of this section. Also, $R_1$ uses Relay Strategy 0, and  $R_2$ uses Relay Strategy 1. Figure \ref{fig:Example.S.1} illustrates that the desired messages can be decoded by both nodes $A$ and $B$.
\end{example}

\begin{figure}[htbp]
\centering
\subfigure[Transmission to relays.]{
	\includegraphics[width=8.5cm]{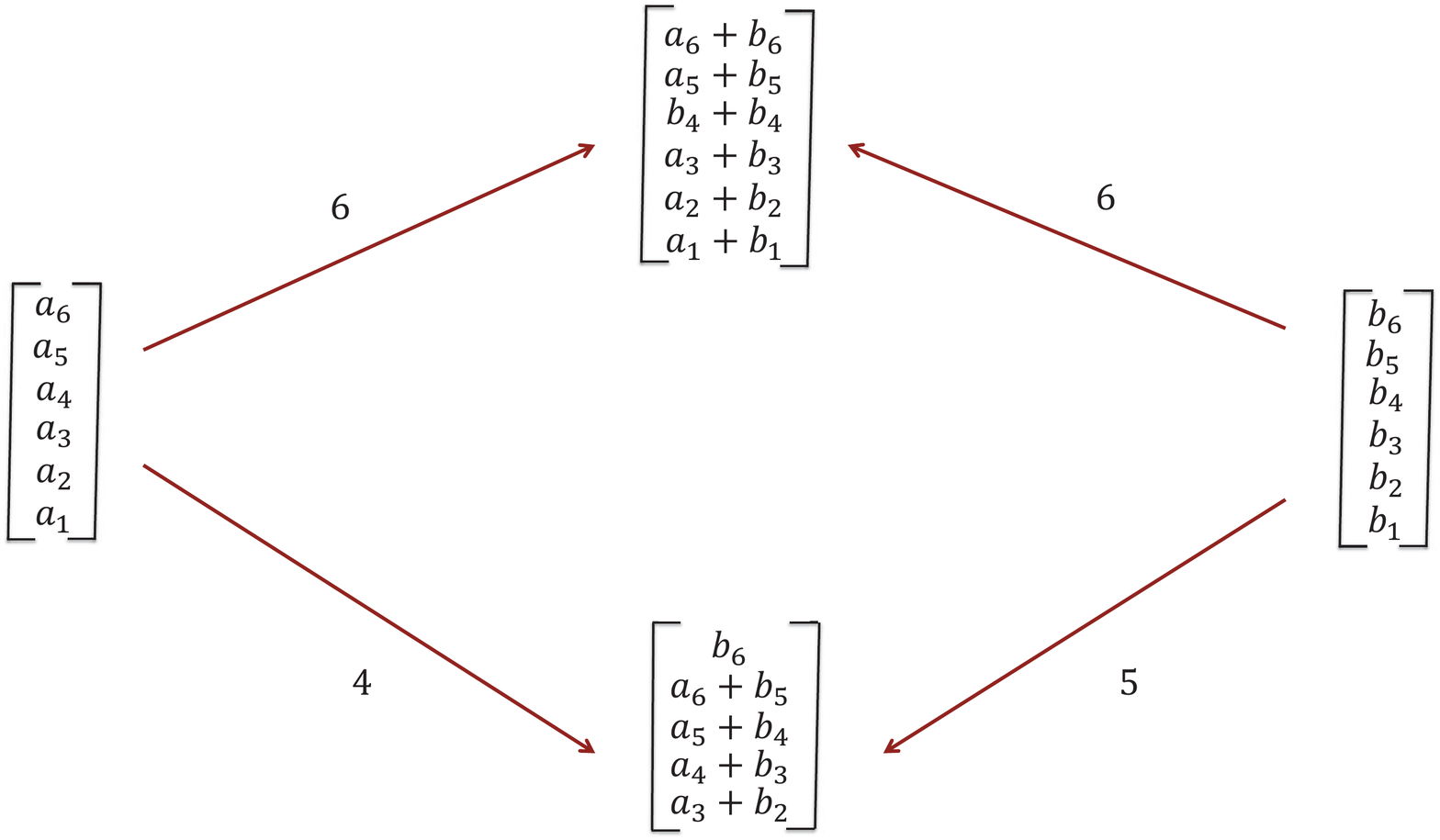}
\label{fig:subfigS11}
}
\subfigure[Reception from relays.]{
	\includegraphics[width=8.5cm]{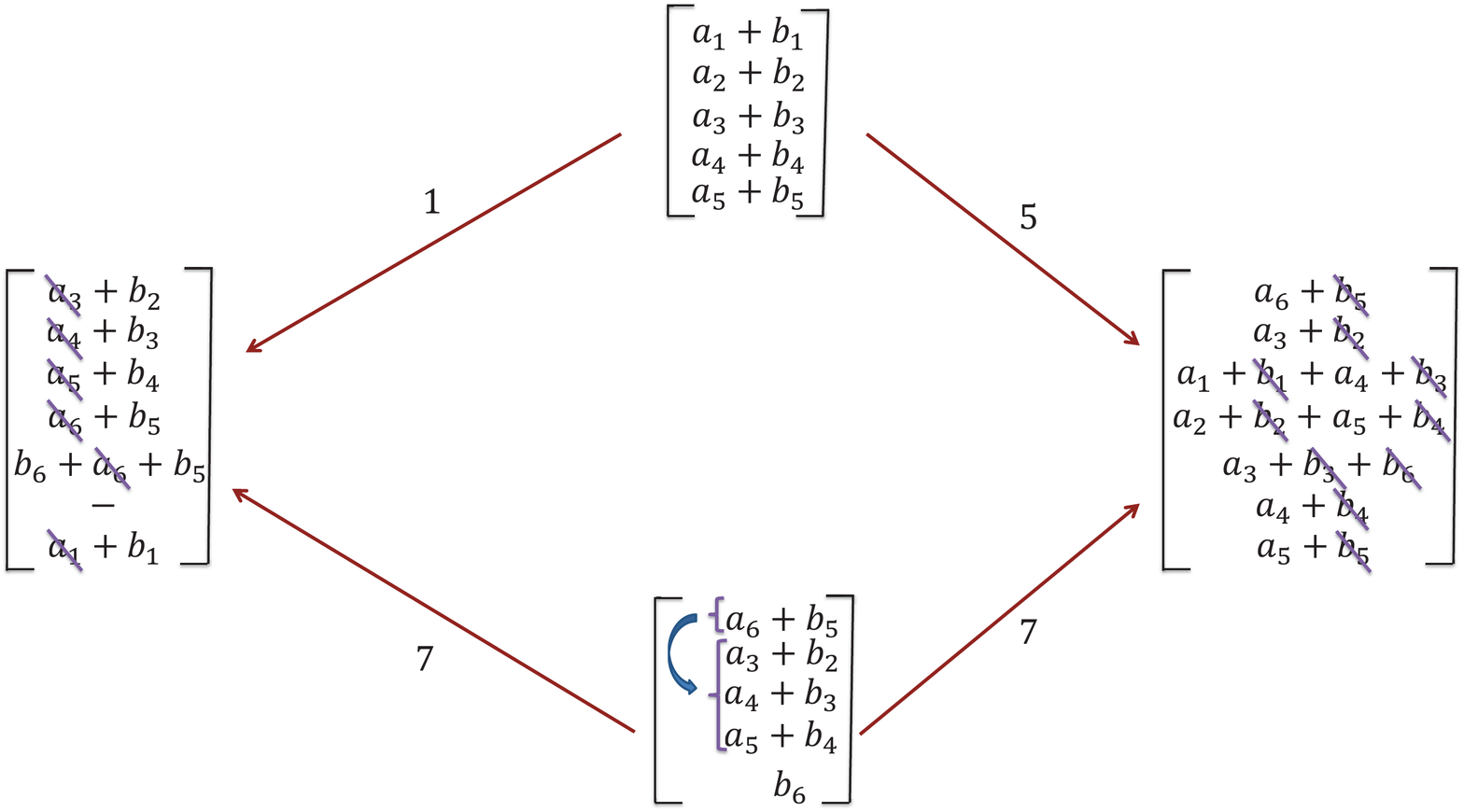}
\label{fig:subfigS12}
}
\caption[Optional caption for list of figures]{Example for $(n_{A1},n_{A2},n_{1B},n_{2B},n_{B1},n_{B2},n_{1A},n_{2A})=(6,4,5,7,6,5,1,7)$ using Relay Strategy 1.}
\label{fig:Example.S.1}
\end{figure}

\subsection{\bf Relay Strategy 2:}

If the forward channel is of Case 3.1.2 Type $i$, then Relay Strategy 0 is used at $R_i$, and Relay Strategy 2 is used at $R_{\bar{i}}$, where $i,\bar{i}\in\{1,2\}, i\neq \bar{i}$. Here, we define Relay Strategy 2 at $R_2$ (forward channel of Case 3.1.2 Type $1$), while that for $R_1$ can be obtained by interchanging roles of $R_1$ and $R_2$ (interchanging 1 and 2 and forward channel of Case 3.1.2 Type $2$). It is similar to Relay Strategy 0 with the only difference that $R_2$ repeats a part of the top $n_{A2}$ streams after reverse-amplify-and-forward, as explained below in nine separate scenarios, based on the parameters of the forward channel. We note that the repetition of streams is based on the received signal at the relay. However, we describe below only the forward direction to show that the messages can be decoded.

%We divide the strategy into nine different ones depending on the correlation and interaction of the received signals from two relays based on the channel gains as below. For example while $u_2$ is true, $a_{n_{A1}-n_{A2}+1},...,a_{n_{A1}-n_{A2}+n_{2B}-n_{1B}}$ which is received without interference from highest levels of reception from $R_2$, is also received from $R_1$ without any interference from $R_2$, and otherwise at least some of them have interference.

\begin{figure}[htbp]
\centering
	\includegraphics[width=10cm]{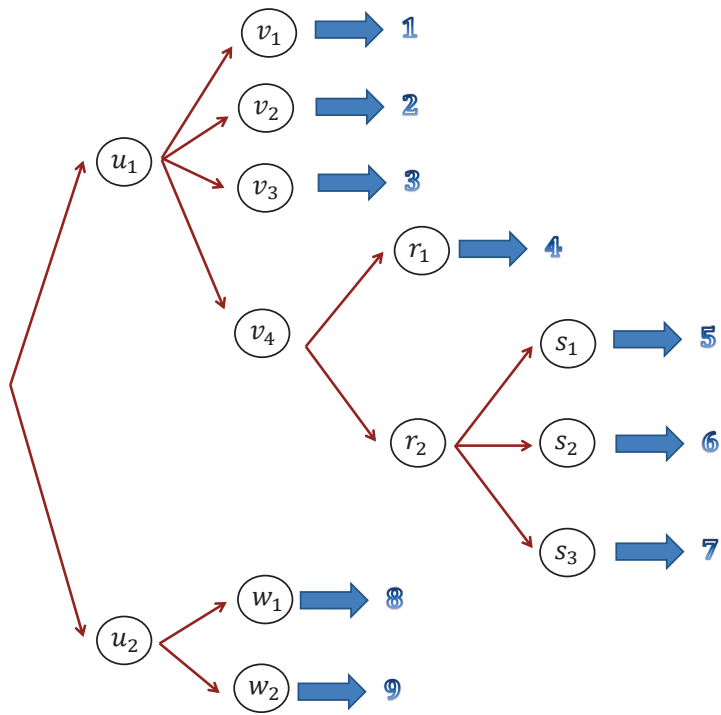}
\caption{Dividing the 4-dimensional space consisting of $(n_{A1},n_{A2},n_{1B},n_{2B})$ into nine subspaces.}
\label{fig:sup1}
\end{figure}

As shown in Figure \ref{fig:sup1}, we define the partition of the four-dimensional space $(n_{A1},n_{A2},n_{1B},n_{2B})$ into nine parts that lead to different received signal structures in node $B$, as shown in Figures \ref{fig:1}-\ref{fig:6}, respectively. Specifically, $\{u_1,u_2\}=\{n_{2B}+(n_{A1}-n_{A2}) \le n_{A2}+n_{1B},n_{A2}+n_{1B}< n_{2B}+(n_{A1}-n_{A2})\}$, $\{v_1,v_2,v_3,v_4\}=\{n_{1B}\le(n_{A1}-n_{A2})+(n_{2B}-n_{1B}),n_{1B}-(n_{A1}-n_{A2})\le(n_{A1}-n_{A2})+(n_{2B}-n_{1B})< n_{1B},n_{A2}-(n_{2B}-n_{1B})\le(n_{A1}-n_{A2})+(n_{2B}-n_{1B})< n_{1B}-(n_{A1}-n_{A2}),(n_{A1}-n_{A2})+(n_{2B}-n_{1B})< n_{A2}-(n_{2B}-n_{1B})\}$, $\{w_1,w_2\}=\{n_{1B}-(n_{A1}-n_{A2}) \le n_{2B}-n_{1B},n_{1B}-(n_{A1}-n_{A2}) > n_{2B}-n_{1B}\}$, $\{r_1,r_2\}=\{2(2(n_{1B}-n_{2B}+n_{A2})-n_{A1})+n_{2B}-n_{A1}\le 2 n_{A2}-n_{A1}+n_{1B}-n_{2B},2(2(n_{1B}-n_{2B}+n_{A2})-n_{A1})+n_{2B}-n_{A1}> 2 n_{A2}-n_{A1}+n_{1B}-n_{2B}\}$ and $\{s_1,s_2,s_4\}=\{n_{A1}-n_{A2}\ge n_{1B}-2(n_{A1}-n_{A2}+n_{2B}-n_{1B}),n_{2B}-n_{A2}\ge n_{1B}-2(n_{A1}-n_{A2}+n_{2B}-n_{1B})> n_{A1}-n_{A2},n_{1B}-2(n_{A1}-n_{A2}+n_{2B}-n_{1B})> n_{2B}-n_{A2}\}$.

\begin{enumerate}
  \item $(u_1,v_1)$: Figure \ref{fig:1} depicts the received signal at node $B$ (ignoring the effect of transmitted signal from $B$) assuming that both relays use Relay Strategy 0. The repetitions will be described below to show that messages can be decoded with the proposed strategies.
\begin{figure}[htbp]
\centering
	\includegraphics[width=8cm]{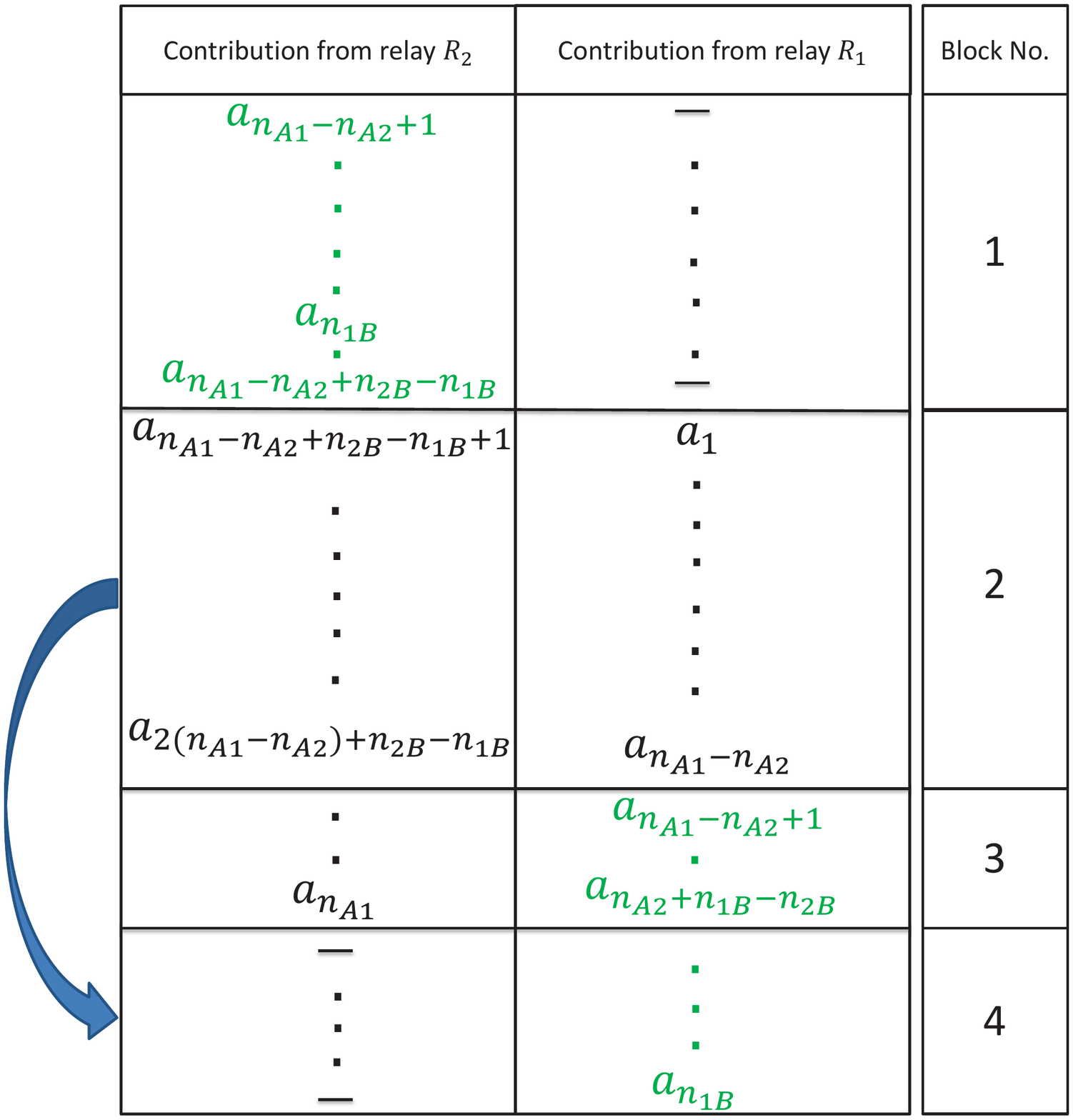}
\caption{The received signals at node $B$ (ignoring the effect of transmitted signal from $B$) assuming that both relays use Relay Strategy 0 for channel parameters of case $(u_1,v_1)$.}
\label{fig:1}
\end{figure}

      $R_2$ repeats the streams in block $(R_2,B_2)$ on block $(R_2,B_4)$. Using this strategy, block $(R_2,B_1)$ will be decoded from the top levels of the received signal from $R_2$ since there is no interference from the other relay. Then, subtract the corresponding signals (blocks $(R_1,B_3)$ and $(R_1,B_4)$). Furthermore, block $(R_2,B_4)$ can be decoded from repetitions because their interference is already decoded. Then, subtract the corresponding signals (block $(R_2,B_2)$). Consequently, block $(R_1,B_2)$ are decoded because their interference (block $(R_2,B_2)$) was decoded earlier. Finally, block $(R_2,B_3)$ can be decoded because all its interference signals have been decoded.

  \item $(u_1,v_2)$: As shown in Figure \ref{fig:2},
\begin{figure}[htbp]
\centering
	\includegraphics[width=8cm]{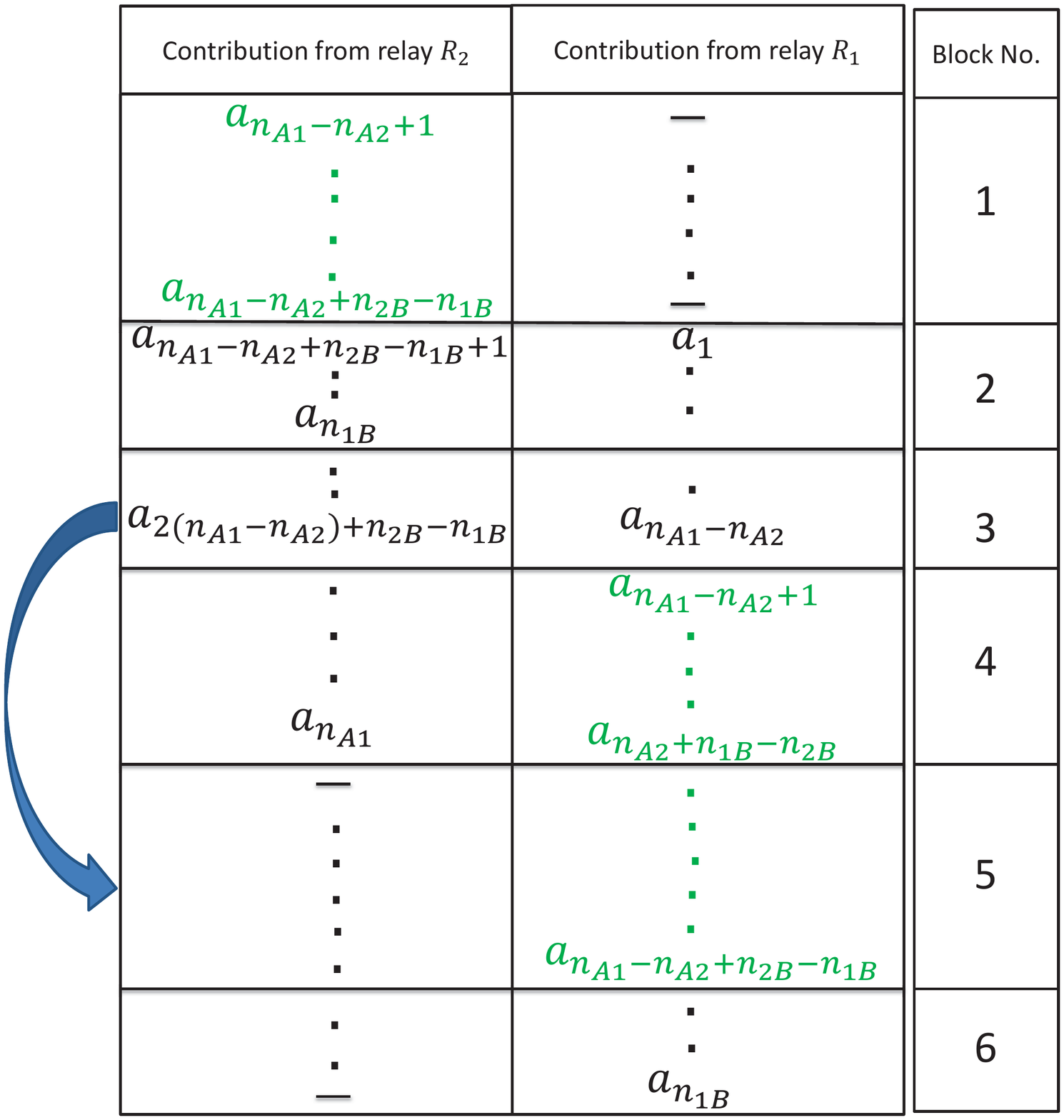}
\caption{The received signals at node $B$ (ignoring the effect of transmitted signal from $B$) assuming that both relays use Relay Strategy 0 for channel parameters of case $(u_1,v_2)$.}
\label{fig:2}
\end{figure}
      $R_2$ repeats block $(R_2, B_3)$ on block $(R_2, B_5)$. The decoding order is
      \begin{eqnarray}
&&{\text {decode \& subtract\ }} (R_2,B_1) \rightarrow {\text {subtract\ }} (R_1,B_4) \& (R_1,B_5) \rightarrow {\text {decode \& subtract\ }} (R_1,B_6) \rightarrow \nonumber\\
&&{\text {decode \& subtract\ }} (R_2,B_5) \rightarrow {\text {subtract\ }} (R_2,B_3) \rightarrow {\text {decode \& subtract\ }}(R_1,B_3) \rightarrow\nonumber\\
&& {\text {subtract\ }} (R_2,B_2) \rightarrow {\text {decode \& subtract\ }} (R_1,B_2).\nonumber
\end{eqnarray}

  \item $(u_1,v_3)$: As shown in Figure \ref{fig:3},
\begin{figure}[htbp]
\centering
	\includegraphics[width=8cm]{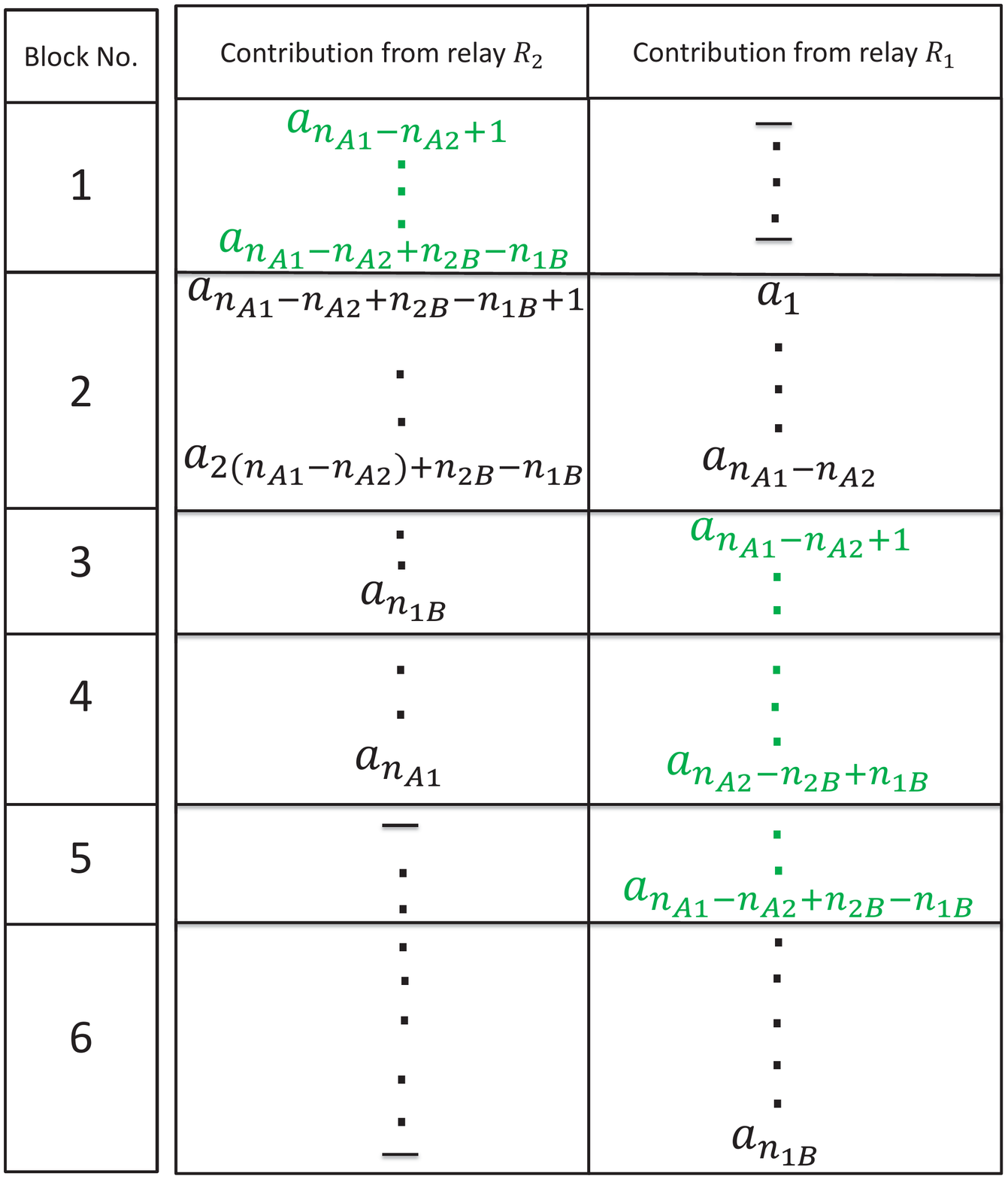}
\caption{The received signals at node $B$ (ignoring the effect of transmitted signal from $B$) assuming that both relays use Relay Strategy 0 for channel parameters of case $(u_1,v_3)$.}
\label{fig:3}
\end{figure}
    this case does not need repetition. The decoding order is
      \begin{eqnarray}
&&{\text {decode \& subtract\ }} (R_2,B_1) \rightarrow {\text {subtract\ }} (R_1,B_3) \& (R_1,B_4) \& (R_1,B_5) \rightarrow \nonumber\\
&&{\text {decode \& subtract\ }} (R_2,B_3) \& (R_2,B_4) \rightarrow {\text {decode \& subtract\ }} (R_1,B_6) \rightarrow {\text {subtract\ }} (R_2,B_2) \rightarrow \nonumber\\
&& {\text {decode \& subtract\ }} (R_1,B_2).\nonumber
\end{eqnarray}

  \item $(u_1,v_4,r_1)$: As shown in Figure \ref{fig:4.1},
\begin{figure}[htbp]
\centering
	\includegraphics[width=8cm]{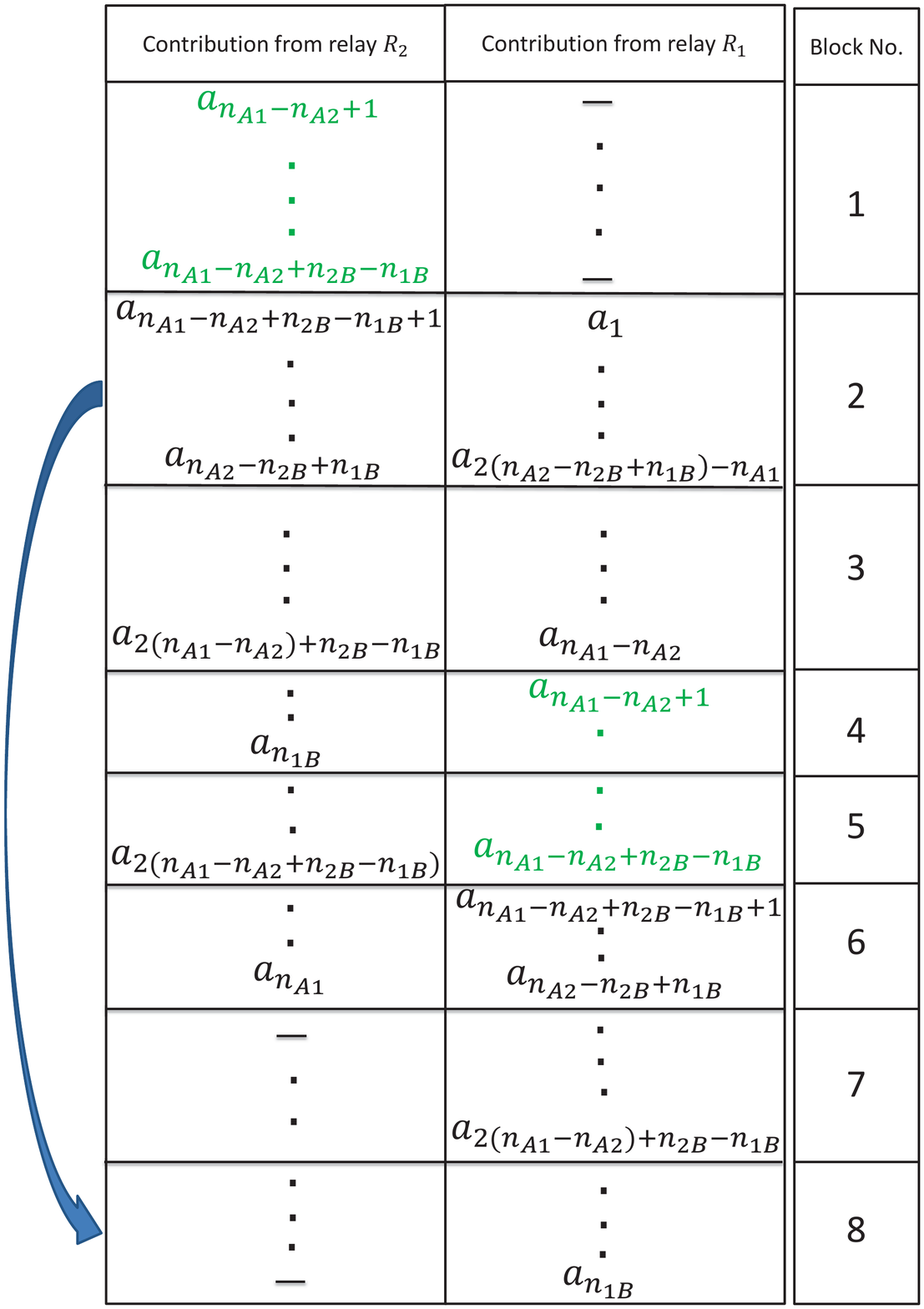}
\caption{The received signals at node $B$ (ignoring the effect of transmitted signal from $B$) assuming that both relays use Relay Strategy 0 for channel parameters of case $(u_1,v_4,r_1)$.}
\label{fig:4.1}
\end{figure}
      $R_2$ repeats block $(R_2, B_2)$ on block $(R_2, B_8)$. The decoding order is
      \begin{eqnarray}
&& {\text {decode \& subtract\ }} (R_2,B_1) \rightarrow {\text {subtract\ }}  (R_1,B_4) \& (R_1,B_5) \rightarrow \nonumber\\
&& {\text {decode \& subtract\ }} (R_2,B_4) \& (R_2,B_5) \rightarrow  {\text {decode \& subtract\ }} (R_1,B_8) \rightarrow \nonumber\\
&& {\text {decode \& subtract\ }} (R_2,B_8) \rightarrow {\text {subtract\ }}(R_2,B_2) \& (R_1,B_7) \rightarrow  {\text {decode \& subtract\ }} (R_2,B_6) \rightarrow \nonumber\\
&& {\text {decode \& subtract\ }} (R_1,B_7) \rightarrow {\text {subtract\ }} (R_2,B_3)\rightarrow {\text {decode \& subtract\ }} (R_1,B_2) \& (R_1,B_3).\nonumber
\end{eqnarray}

  \item $(u_1,v_4,r_2,s_1)$: As shown in Figure \ref{fig:4.21},
\begin{figure}[htbp]
\centering
	\includegraphics[width=8cm]{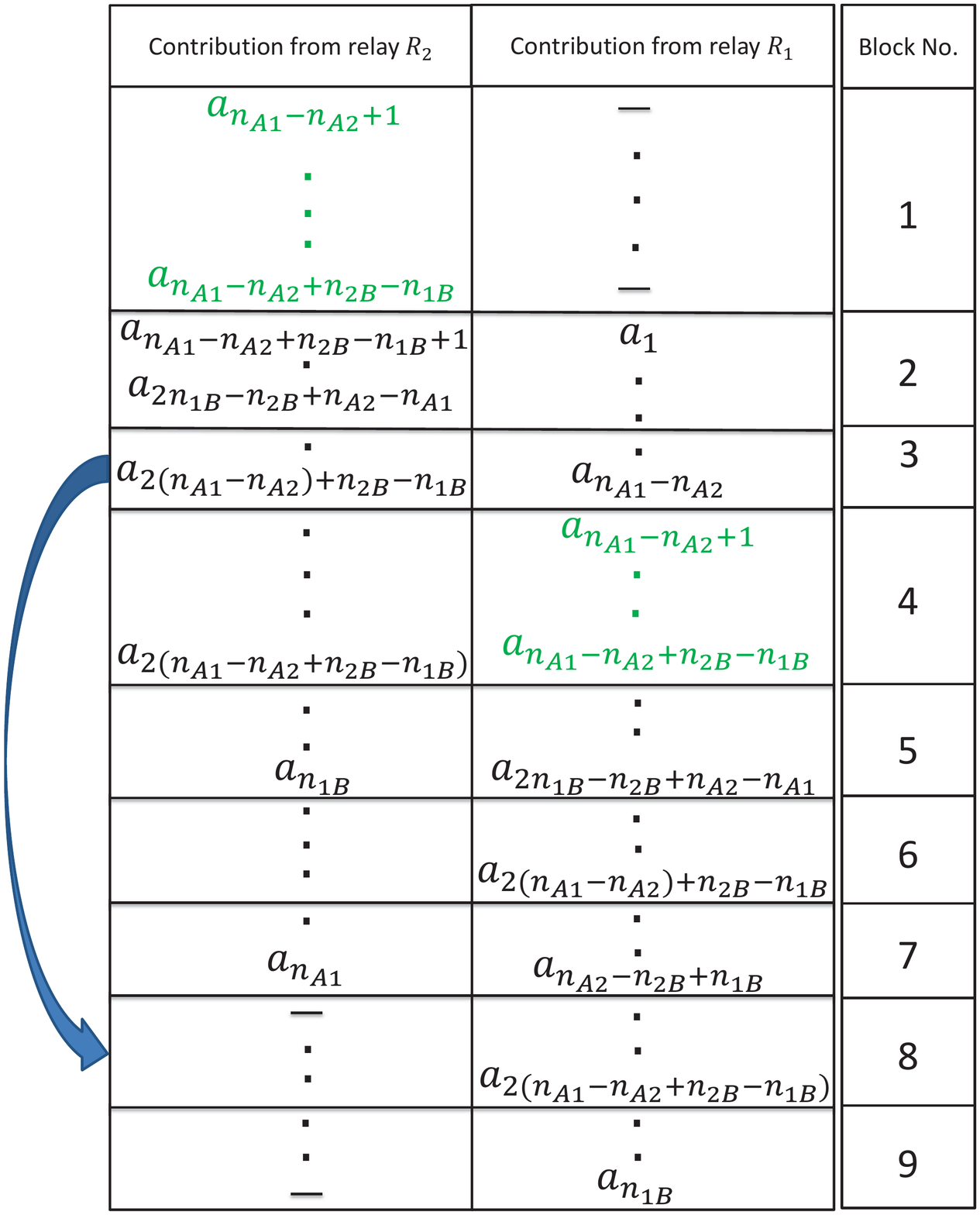}
\caption{The received signals at node $B$ (ignoring the effect of transmitted signal from $B$) assuming that both relays use Relay Strategy 0 for channel parameters of case $(u_1,v_4,r_2,s_1)$.}
\label{fig:4.21}
\end{figure}
      $R_2$ repeats block $(R_2, B_3)$ on block $(R_2, B_8)$. The decoding order is
      \begin{eqnarray}
&& {\text {decode \& subtract\ }} (R_2,B_1) \rightarrow {\text {subtract\ }}(R_1,B_4) \rightarrow {\text {decode \& subtract\ }} (R_2,B_4) \rightarrow \nonumber\\
&& {\text {decode \& subtract\ }} (R_1,B_7) \& (R_1,B_8) \rightarrow  {\text {decode \& subtract\ }} (R_1,B_9) \rightarrow {\text {subtract\ }} (R_2,B_5) \rightarrow \nonumber\\
&& {\text {decode \& subtract\ }} (R_2,B_8) \rightarrow {\text {subtract\ }} (R_2,B_3)\& (R_1,B_6) \rightarrow {\text {decode \& subtract\ }} (R_1,B_5) \rightarrow \nonumber\\
&& {\text {subtract\ }} (R_2,B_2)\rightarrow  {\text {decode \& subtract\ }} (R_2,B_6) \& (R_2,B_7).\nonumber
\end{eqnarray}

  \item $(u_1,v_4,r_2,s_2)$: As shown in Figure \ref{fig:4.22},
\begin{figure}[htbp]
\centering
	\includegraphics[width=8cm]{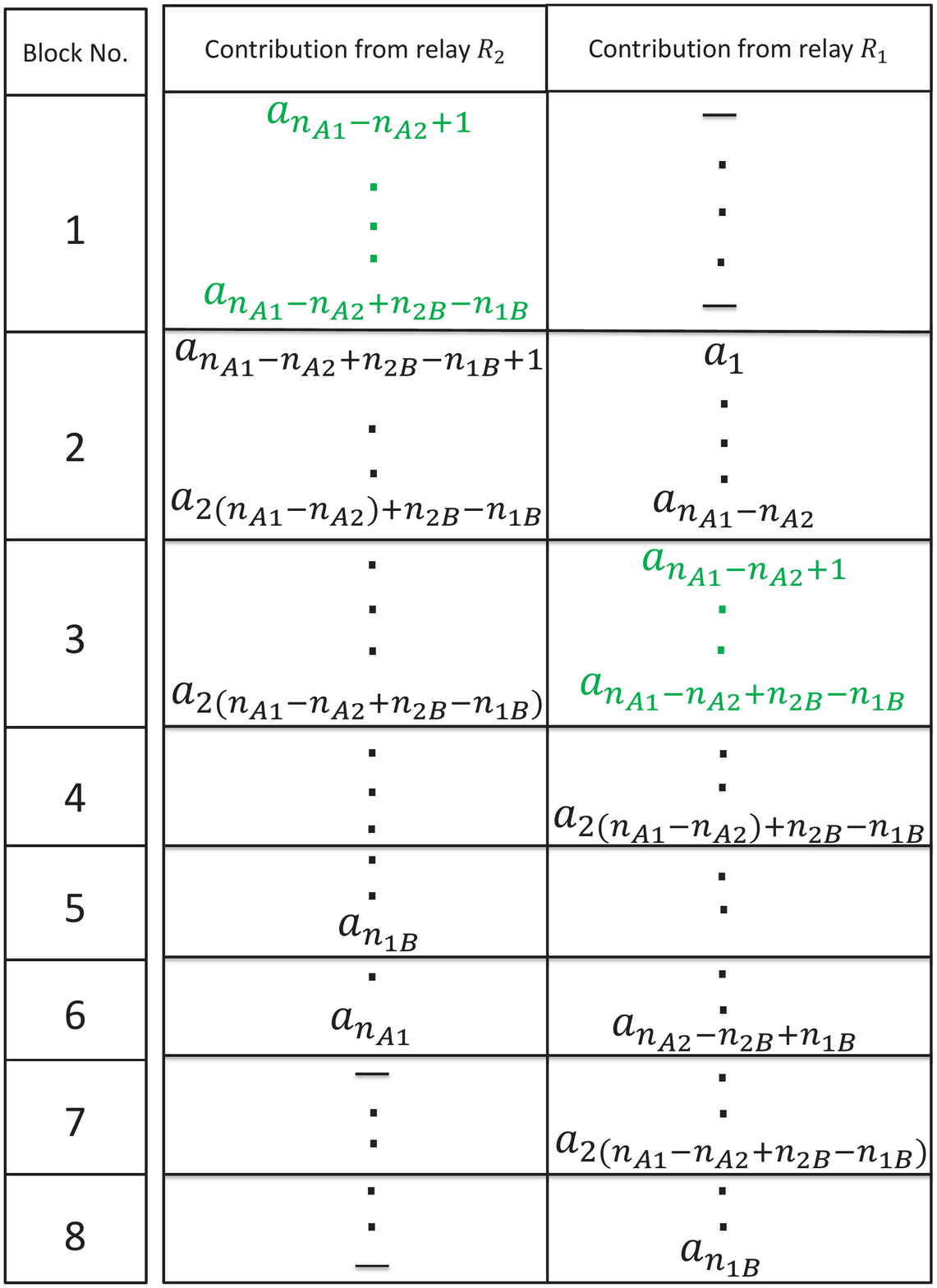}
\caption{The received signals at node $B$ (ignoring the effect of transmitted signal from $B$) assuming that both relays use Relay Strategy 0 for channel parameters of case $(u_1,v_4,r_2,s_2)$.}
\label{fig:4.22}
\end{figure}
      this case does not need repetition. The decoding order is
      \begin{eqnarray}
&&{\text {decode \& subtract\ }} (R_2,B_1) \rightarrow {\text {subtract\ }} (R_1,B_3) \rightarrow \nonumber\\
&&{\text {decode \& subtract\ }}(R_2,B_3) \& (R_1,B_7) \& (R_1,B_8) \rightarrow {\text {subtract\ }} (R_1,B_5) \& (R_1,B_6) \& (R_2,B_4) \rightarrow \nonumber\\
&& {\text {decode \& subtract\ }} (R_2,B_5) \& (R_2,B_6) \rightarrow {\text {decode \& subtract\ }} (R_1,B_7) \rightarrow {\text {subtract\ }} (R_2,B_2) \rightarrow \nonumber\\
&&{\text {decode \& subtract\ }}(R_1,B_2).\nonumber
\end{eqnarray}

  \item $(u_1,v_4,r_2,s_3)$: As shown in Figure \ref{fig:4.23},
\begin{figure}[htbp]
\centering
	\includegraphics[width=8cm]{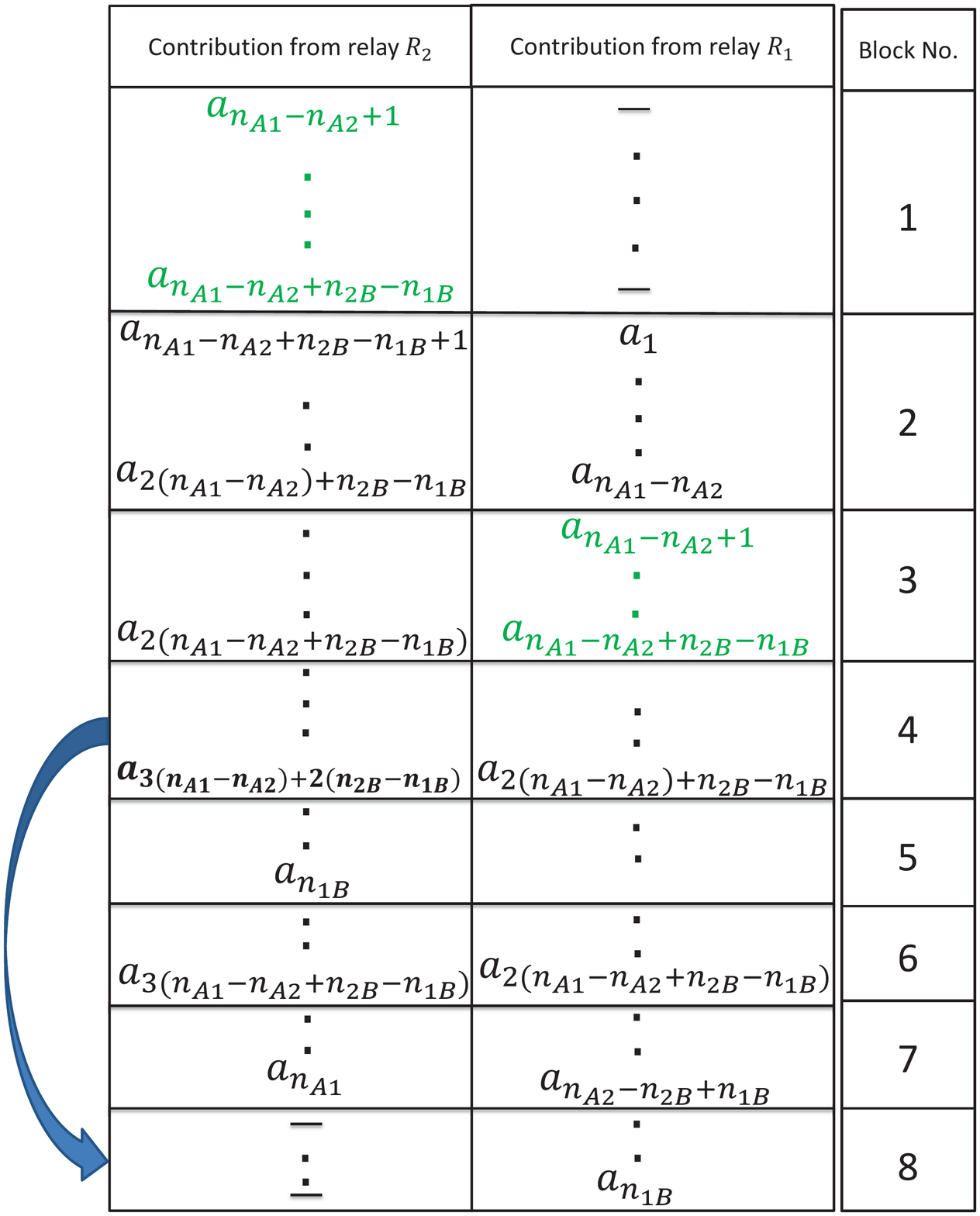}
\caption{The received signals at node $B$ (ignoring the effect of transmitted signal from $B$) assuming that both relays use Relay Strategy 0 for channel parameters of case $(u_1,v_4,r_2,s_3)$.}
\label{fig:4.23}
\end{figure}
      $R_2$ repeats block $(R_2, B_4)$ on block $(R_2, B_8)$. The decoding order is
      \begin{eqnarray}
&& {\text {decode \& subtract\ }} (R_2,B_1) \rightarrow {\text {subtract\ }} (R_1,B_3) \rightarrow {\text {decode \& subtract\ }} (R_2,B_3) \rightarrow \nonumber\\
&& {\text {subtract\ }} (R_1,B_5) \& (R_1,B_6) \rightarrow  {\text {decode \& subtract\ }} (R_2,B_5) \& (R_2,B_6) \rightarrow \nonumber\\
&& {\text {decode \& subtract\ }} (R_2,B_8) \rightarrow {\text {subtract\ }} (R_1,B_7) \& (R_1,B_8) \& (R_2,B_4) \rightarrow \nonumber\\
&& {\text {decode \& subtract\ }} (R_2,B_7) \rightarrow  {\text {decode \& subtract\ }} (R_1,B_4) \rightarrow {\text {subtract\ }} (R_2,B_2) \rightarrow \nonumber\\
&&{\text {decode \& subtract\ }} (R_1,B_2).\nonumber
\end{eqnarray}

  \item $(u_2,w_1)$: As shown in Figure \ref{fig:5},
\begin{figure}[htbp]
\centering
	\includegraphics[width=8cm]{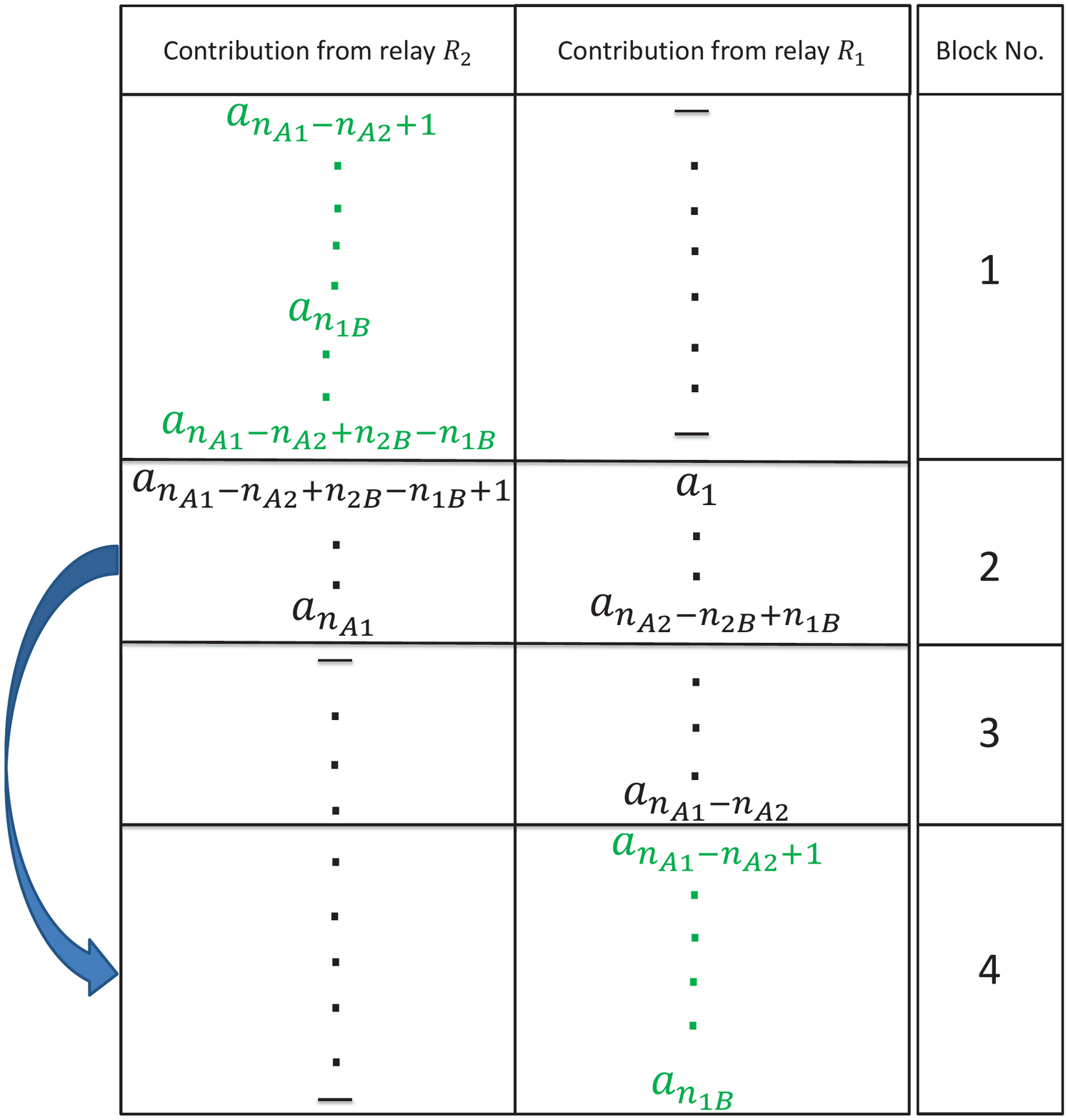}
\caption{The received signals at node $B$ (ignoring the effect of transmitted signal from $B$) assuming that both relays use Relay Strategy 0 for channel parameters of case $(u_2,w_1)$.}
\label{fig:5}
\end{figure}
      $R_2$ repeats block $(R_2, B_2)$ on block $(R_2, B_4)$. The decoding order is
      \begin{eqnarray}
&& {\text {decode \& subtract\ }} (R_2,B_1) \rightarrow {\text {subtract\ }} (R_1,B_4) \rightarrow {\text {decode \& subtract\ }}(R_2,B_4) \rightarrow \nonumber\\
&& {\text {subtract\ }}(R_1,B_2) \rightarrow  {\text {decode \& subtract\ }} (R_1,B_2) \rightarrow {\text {decode \& subtract\ }}(R_1,B_3).\nonumber
\end{eqnarray}

  \item $(u_2,w_2)$: As shown in Figure \ref{fig:6},
\begin{figure}[htbp]
\centering
	\includegraphics[width=8cm]{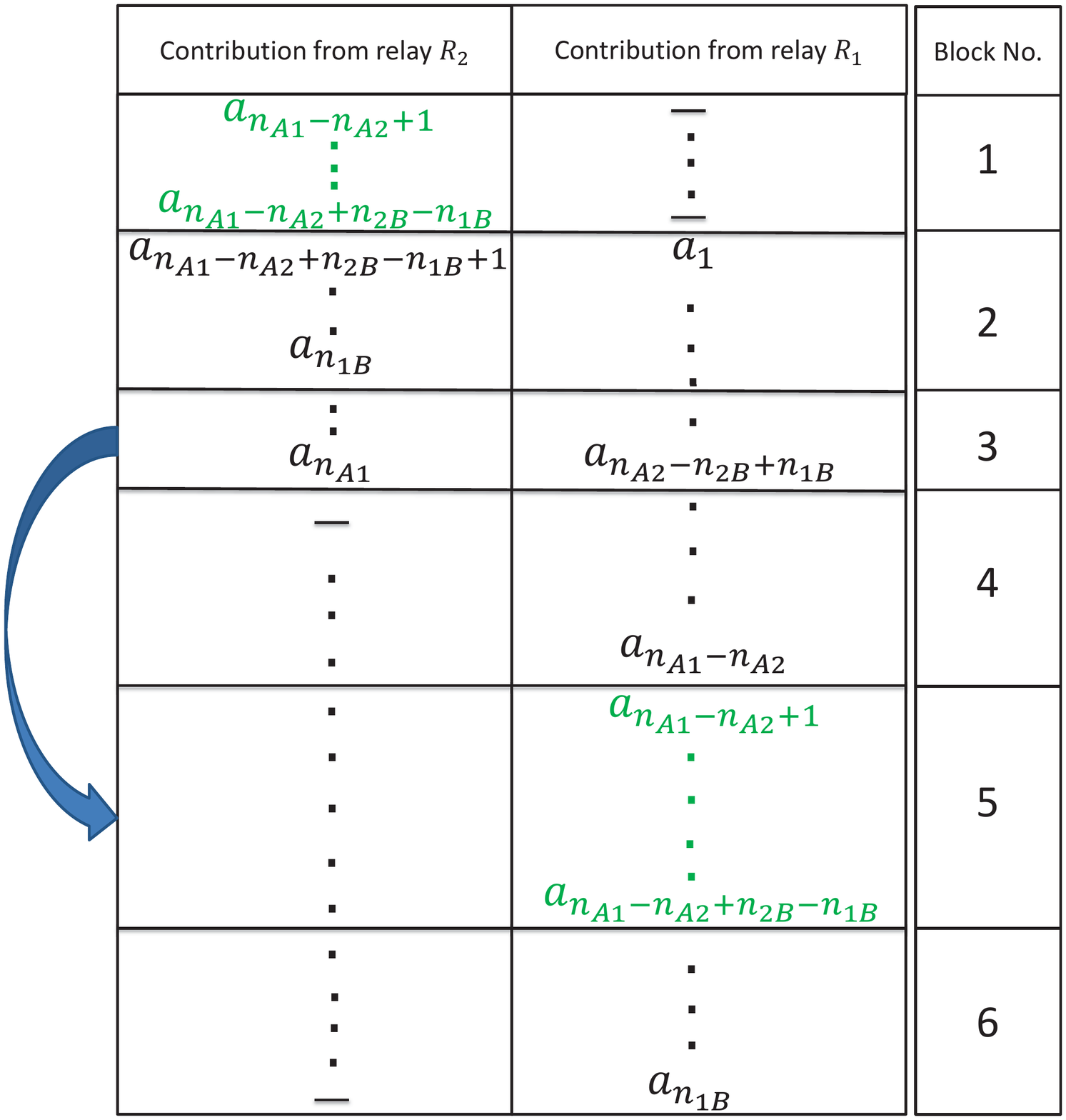}
\caption{The received signals at node $B$ (ignoring the effect of transmitted signal from $B$) assuming that both relays use Relay Strategy 0 for channel parameters of case $(u_2,w_2)$.}
\label{fig:6}
\end{figure}
      $R_2$ repeats block $(R_2, B_3)$ on block $(R_2, B_5)$. The decoding order is
      \begin{eqnarray}
&& {\text {decode \& subtract\ }} (R_2,B_1) \rightarrow {\text {subtract\ }} (R_1,B_5) \rightarrow {\text {decode \& subtract\ }} (R_2,B_5) \rightarrow \nonumber\\
&& {\text {subtract\ }} (R_2,B_3) \rightarrow  {\text {decode \& subtract\ }} (R_1,B_3) \rightarrow {\text {decode \& subtract\ }} (R_1,B_6) \rightarrow \nonumber\\
&& {\text {subtract\ }} (R_2,B_2) \rightarrow  {\text {decode \& subtract\ }} (R_1,B_2) \rightarrow {\text {decode \& subtract\ }} (R_1,B_4).\nonumber
\end{eqnarray}

\end{enumerate}
\begin{remark} In all cases above, we can see that for every $V$ streams that we want to repeat, there are $V+(n_{2B}-n_{A1})$ empty spots available, which makes it flexible to place the $V$ streams.
\end{remark}

\begin{example}
Consider the case $(n_{A1},n_{A2},n_{1B},n_{2B},n_{B1},n_{B2},n_{1A},n_{2A})=(6,4,5,7,6,3,6,4)$. With these parameters, the forward channel is of Case 3.1.2 Type 1, and the backward channel is of Case 1. We use the transmission strategy for node $B$ for Case 1 given in Appendix \ref{apdx_sc1} for the backward channel and transmit ${[a_{C_{AB}},...,a_{1}]}^T$ for the forward channel. Also, $R_1$ uses Relay Strategy 0, and $R_2$ uses Relay Strategy 2. The desired messages can be decoded by both nodes $A$ and $B$, as illustrated in Figure \ref{fig:Example.S.2}.
\end{example}

\begin{figure}[htbp]
\centering
\subfigure[Transmission to relays.]{
	\includegraphics[width=8.5cm]{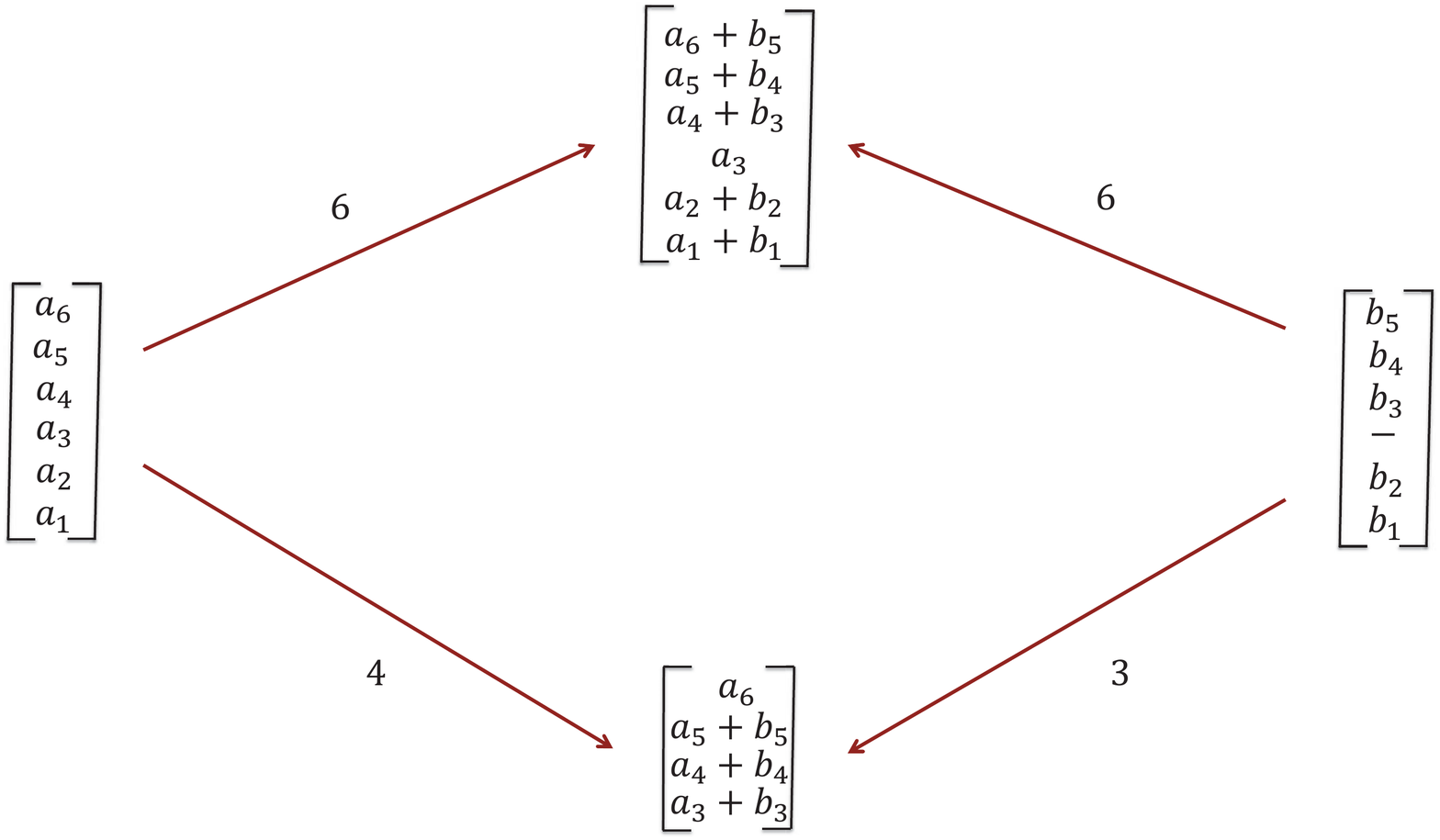}
\label{fig:subfigS2.1}
}
\subfigure[Reception from relays.]{
	\includegraphics[width=8.5cm]{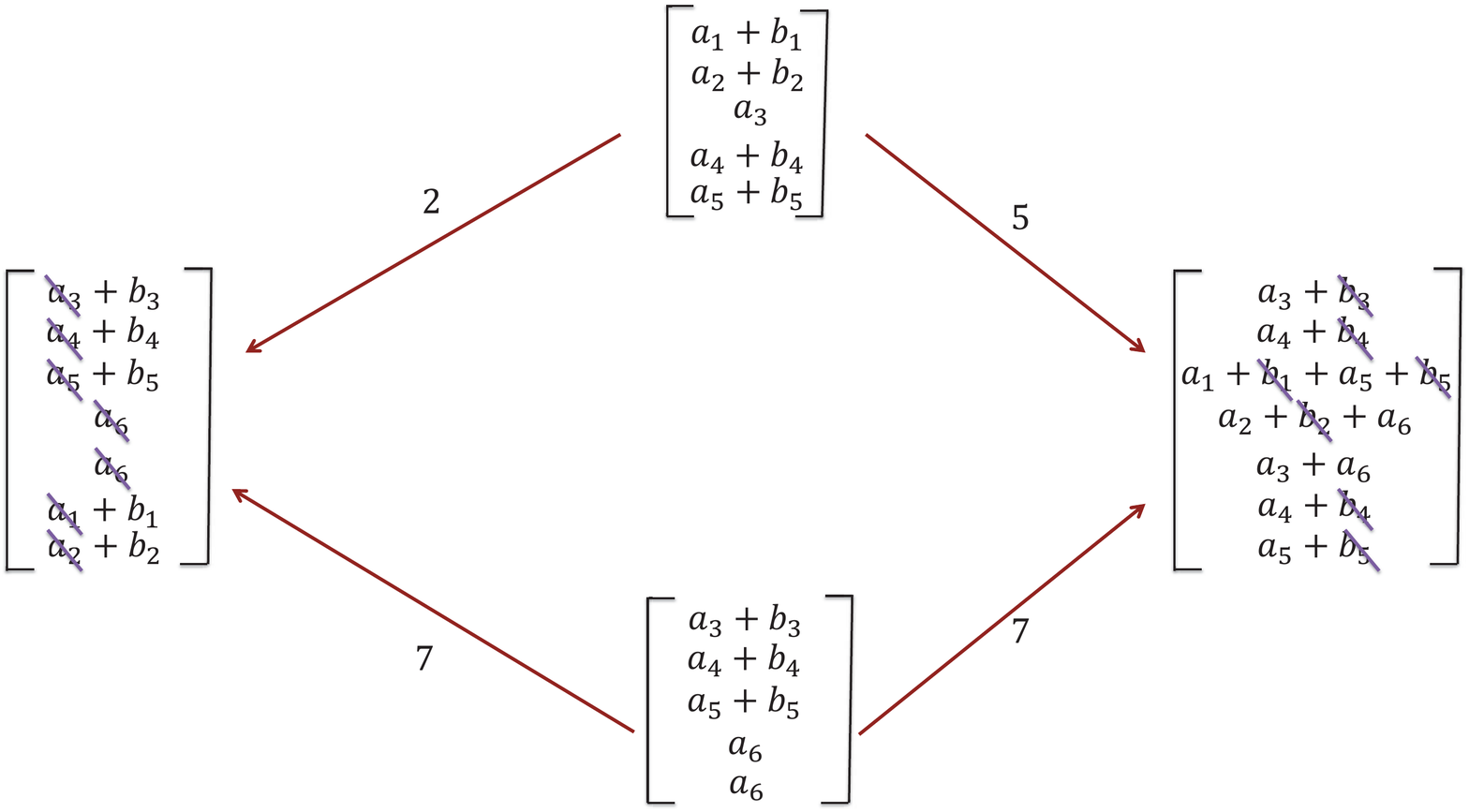}
\label{fig:subfigS2.2}
}
\caption[Optional caption for list of figures]{Example for $(n_{A1},n_{A2},n_{1B},n_{2B},n_{B1},n_{B2},n_{1A},n_{2A})=(6,4,5,7,6,3,6,4)$ using Relay Strategy 2.}
\label{fig:Example.S.2}
\end{figure}

\begin{remark} For $(u_1,v_3)$ and $(u_1,v_4,r_2,s_2)$ Relay Strategy 2 is equivalent to Relay Strategy 0. An example is depicted in Figure \ref{fig:Exampleqq} where $(n_{A1},n_{A2},n_{1B},n_{2B})=(10,8,7,10)$ (Case 3.1.2 Type 1, $(u_1,v_3)$).
\end{remark}

\begin{figure}[htbp]
\centering
	\includegraphics[width=15cm]{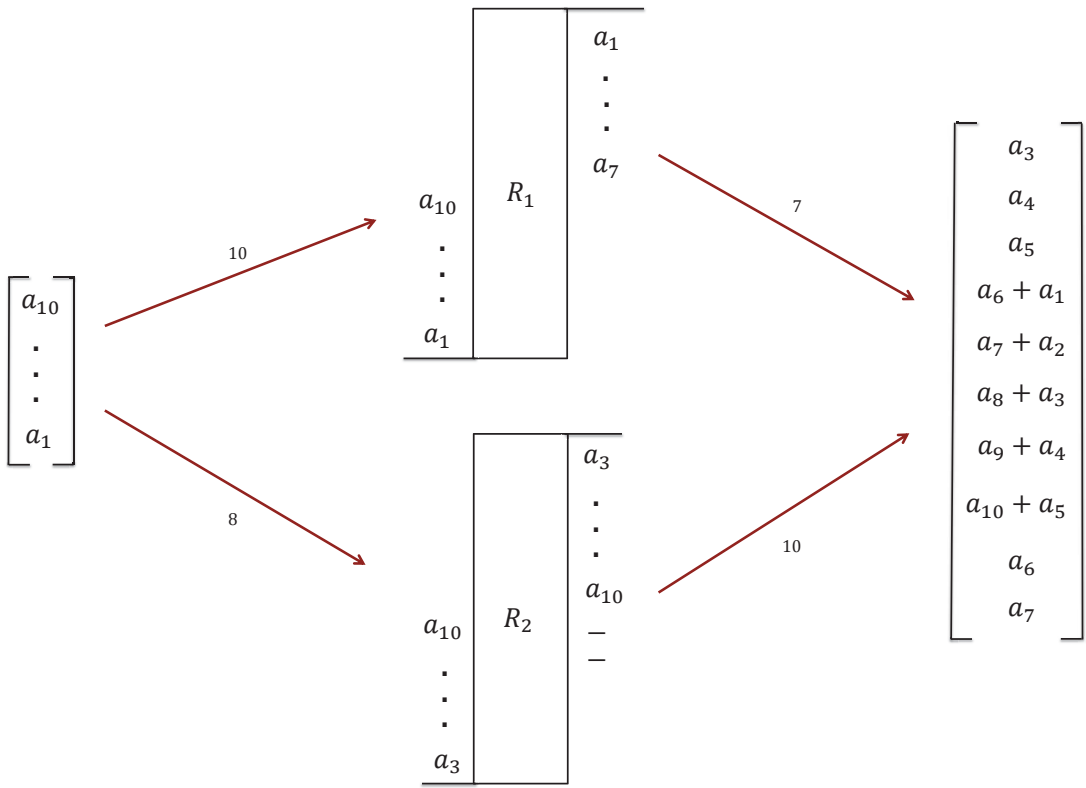}
\caption{An example of the Case 3.1.2 that can be decoded by using Relay Strategy 0 at both relays.}
\label{fig:Exampleqq}
\end{figure}

\subsection{\bf Relay Strategy 3:}

If the forward channel is of Case 3.1.2 Type $i$, then Relay Strategy 0 is used at $R_i$, and Relay Strategy 3 is used at $R_{\bar{i}}$, where $i,\bar{i}\in\{1,2\}, i\neq \bar{i}$. Here, we define Relay Strategy 3 at $R_2$ (forward channel of Case 3.1.2 Type $1$), while that for $R_1$ can be obtained by interchanging roles of relays $R_1$ and $R_2$ (interchanging 1 and 2 and forward channel of Case 3.1.2 Type $2$). As shown in Figure \ref{fig:rs3}, if $R_2$ receives a block of $n_{2B}$ bits, first it will reverse them as in Relay Strategy 0 and then changes the order of the $n_{A2}-(n_{2B}-n_{1B})$ streams right after the first $n_{2B}-n_{1B}$ streams, with the following $n_{2B}-n_{A2}$ streams.

\begin{figure}[htbp]
\centering
	\includegraphics[width=10cm]{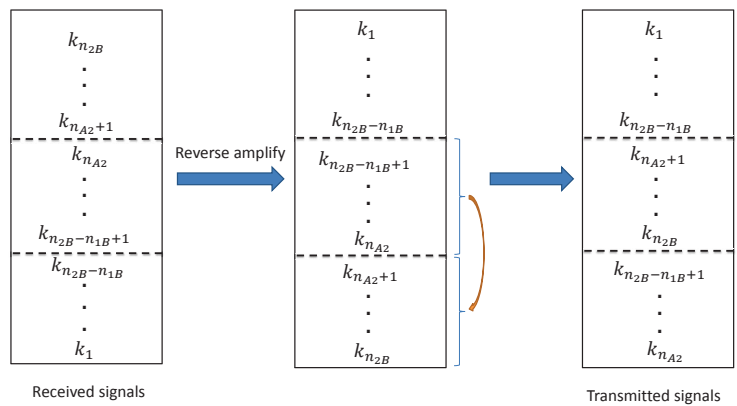}
\caption{Relay Strategy 3 at $R_2$.}
\label{fig:rs3}
\end{figure}

Node $A$ transmits ${[a_{C_{AB}},...,a_{1}]}^T$. The received signals can be seen in Figure \ref{fig:s3s}. The block $(R_2,B_1)$ will be decoded from the top levels of the received signal from $R_2$ without any interference from $R_1$. We then subtract the corresponding signals in blocks $(R_1,B_3)$ and $(R_1,B_4)$. Also, bits that were not delivered to node $B$ from $R_2$ using Relay Strategy 0, ($a_{1}, ..., a_{n_{A1}-n_{A2}}$), are decoded from block $(R_1,B_2)$ without any interference. The remaining bits can be decoded by starting from the highest remaining level ($a_{(n_{A1}-n_{A2})+(n_{2B}-n_{1B})+1}$ in block $(R_2,B_4)$) and removing the effect of the decoded bits.

%bit streams $a_{n_{A1}-n_{A2}+1},...,a_{(n_{A1}-n_{A2})+(n_{2B}-n_{1B})}$

\begin{figure}[htbp]
\centering
	\includegraphics[width=15cm]{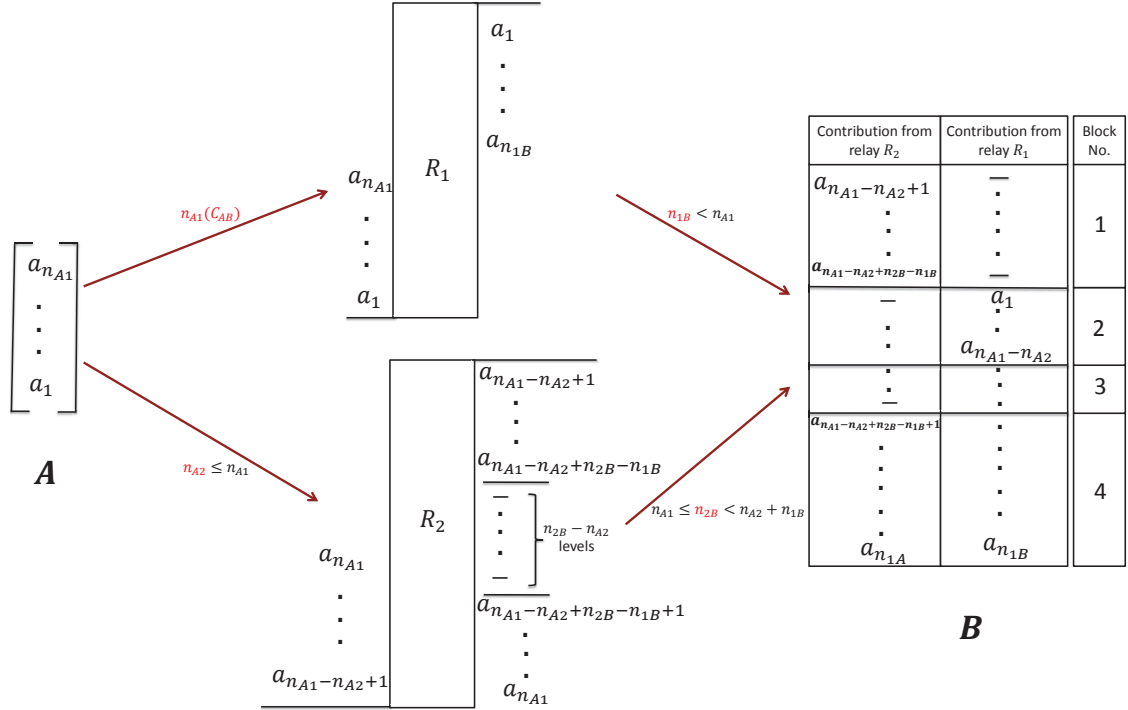}
\caption{Received signals by using Relay Strategy 3 when the forward channel is of Case 3.1.2 Type 1.}
\label{fig:s3s}
\end{figure}

\begin{example}
Consider the case $(n_{A1},n_{A2},n_{1B},n_{2B},n_{B1},n_{B2},n_{1A},n_{2A})=(6,4,6,5,5,7,6,7)$. With these parameters, the forward channel is of Case 3.1.2 Type 1, and the backward channel is of Case 3.2 Type 1. We use the transmission strategy for the backward channel corresponding to this case given in Appendix \ref{apdx_sc1} (transmit $[b_{C_{BA}}, \cdots, b_1]^T$) and transmit ${[a_{C_{AB}},...,a_{1}]}^T$ for the forward channel. Also, $R_1$ uses Relay Strategy 0, and $R_2$ uses Relay Strategy 3. Figure \ref{fig:Example.S.3} illustrates that the desired messages can be decoded by both nodes $A$ and $B$.
\end{example}

\begin{figure}[htbp]
\centering
\subfigure[Transmission to relays.]{
	\includegraphics[width=8.5cm]{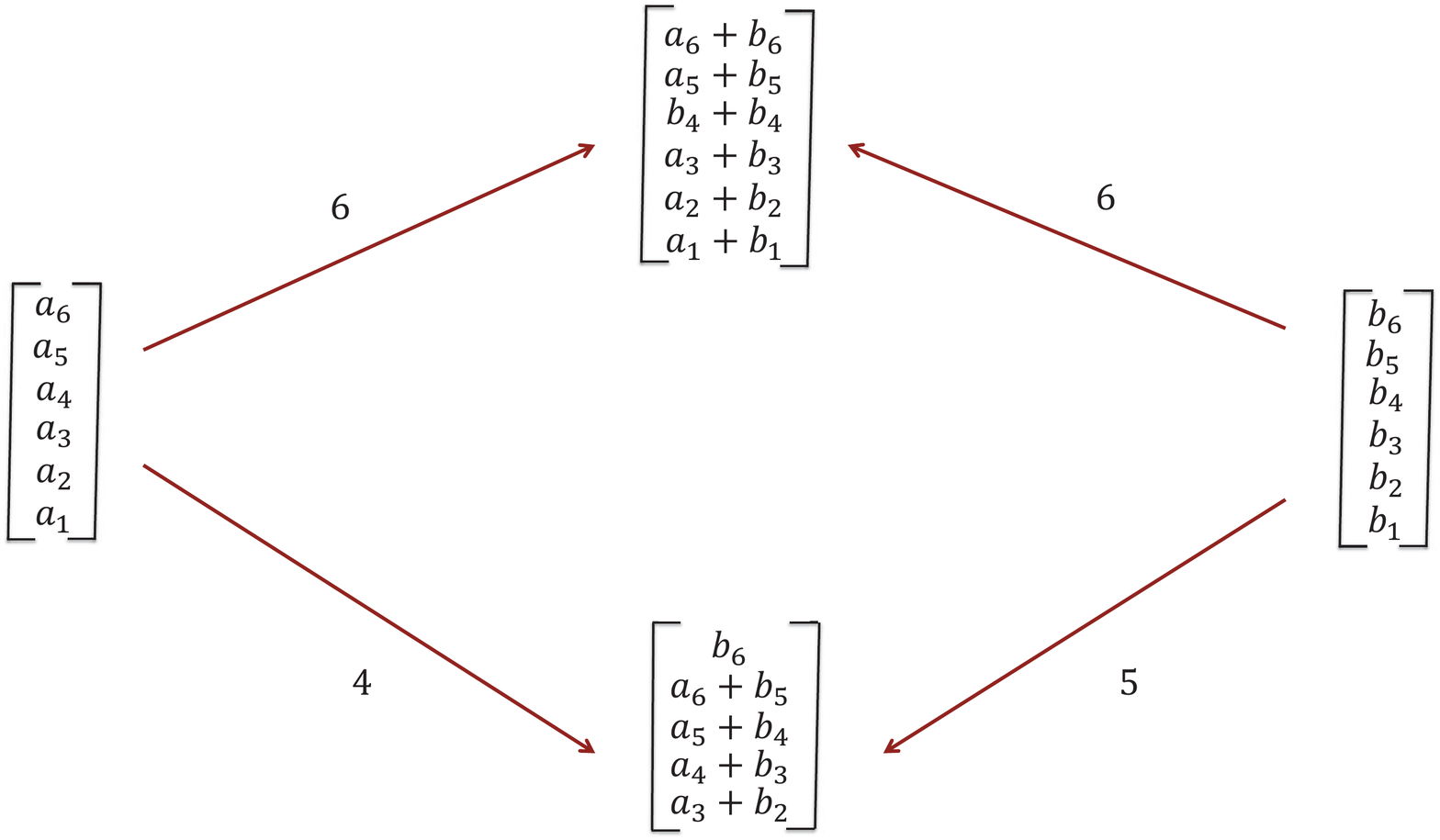}
\label{fig:subfigS3.1}
}
\subfigure[Reception from relays.]{
	\includegraphics[width=8.5cm]{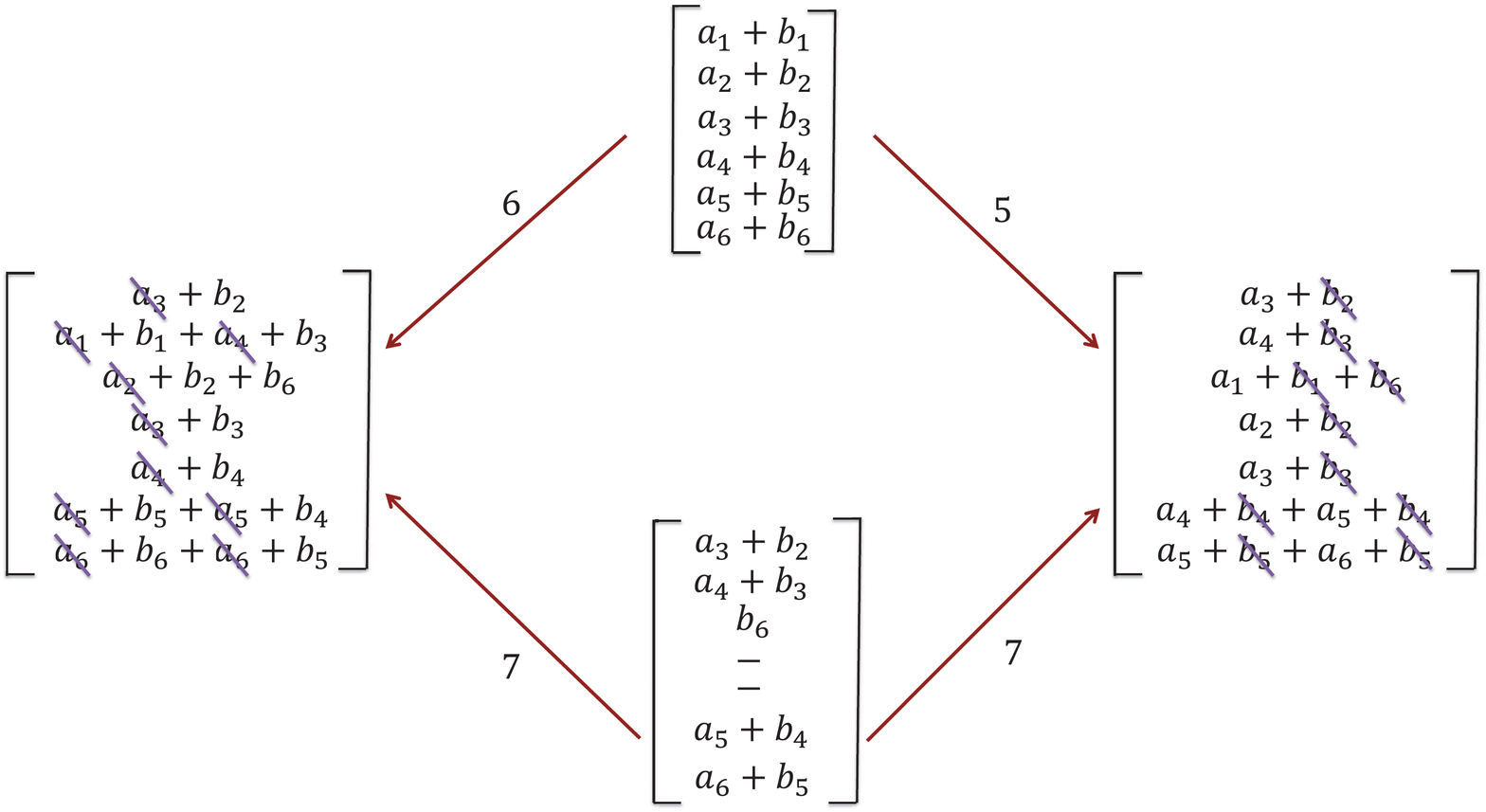}
\label{fig:subfigS3.2}
}
\caption[Optional caption for list of figures]{Example for $(n_{A1},n_{A2},n_{1B},n_{2B},n_{B1},n_{B2},n_{1A},n_{2A})=(6,4,6,5,5,7,6,7)$ using Relay Strategy 3.}
\label{fig:Example.S.3}
\end{figure}

\subsection{\bf Relay Strategy 4:}

If the forward channel is of Case 3.1.2 Type $i$, then Relay Strategy 0 is used at $R_{\bar{i}}$, where $i,\bar{i}\in\{1,2\}, i\neq \bar{i}$, and Relay Strategy 4 is used at $R_{i}$. Here, we define Relay Strategy 4 at $R_1$ (forward channel of Case 3.1.2 Type $1$), while that for $R_2$ can be obtained by interchanging roles of $R_1$ and $R_2$ (interchanging 1 and 2 and forward channel of Case 3.1.2 Type $2$). As shown in Figure \ref{fig:rs4}, if $R_1$ receives a block of $n_{1B}$ bits, first it will reverse them as in Relay Strategy 0 and then changes the order of the first $n_{A1}-n_{A2}$ streams with the next $n_{1B}-(n_{A1}-n_{A2})$ streams.

\begin{figure}[htbp]
\centering
	\includegraphics[width=10cm]{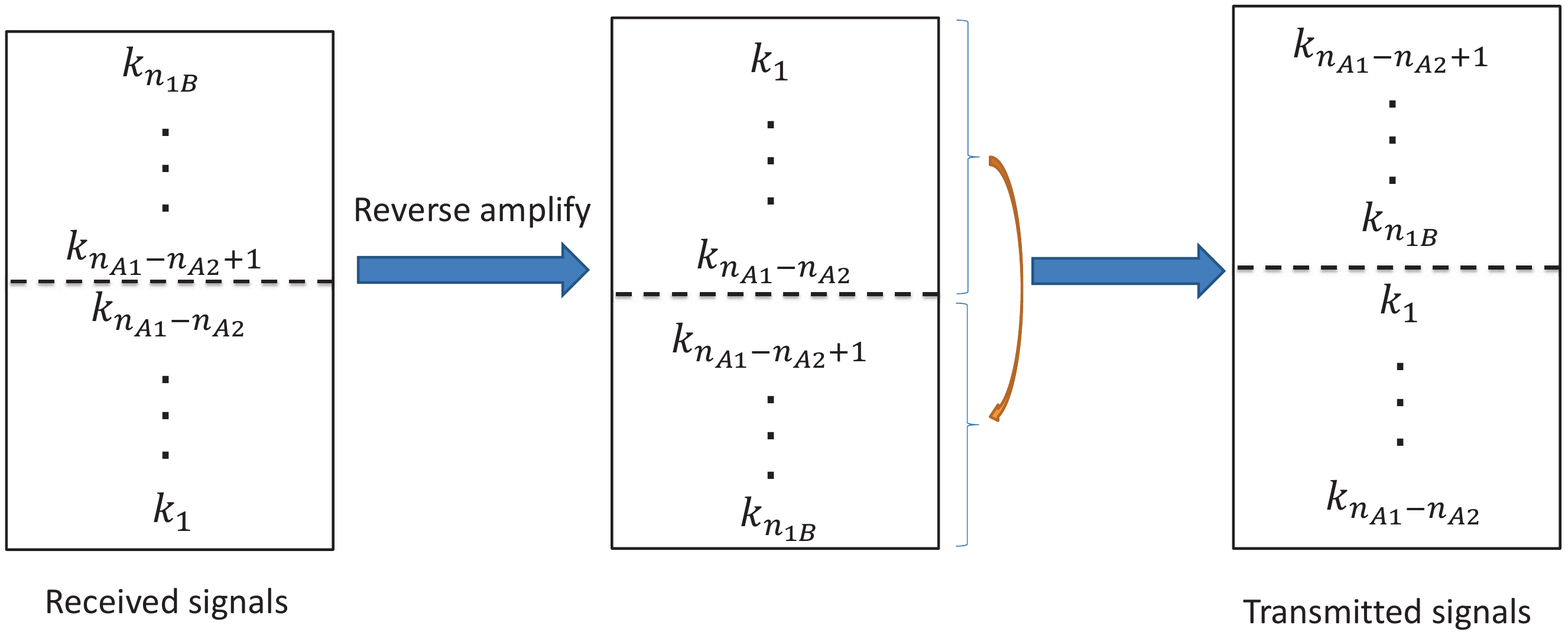}
\caption{Relay Strategy 4 at $R_1$.}
\label{fig:rs4}
\end{figure}

Node $A$ transmits ${[a_{C_{AB}},...,a_{1}]}^T$. The received signals can be seen in Figure \ref{fig:s4s}. Bits that are not delivered to node $B$ from $R_2$ using Relay Strategy 0 in the block $(R_1,B_4)$ are decoded without any interference. The remaining bits can be decoded by starting from the highest level ($a_{n_{A1}-n_{A2}+1}$ in block $(R_2,B_1)$) and removing the effect of the decoded bits.

%$a_{n_{A1}-n_{A2}+1},...,a_{(n_{A1}-n_{A2})+(n_{2B}-n_{1B})}$ will be decoded from the top levels of the received signal from $R_2$ without any interference from the $R_1$ and then, subtract the corresponding signals. Also,

\begin{figure}[htbp]
\centering
	\includegraphics[width=15cm]{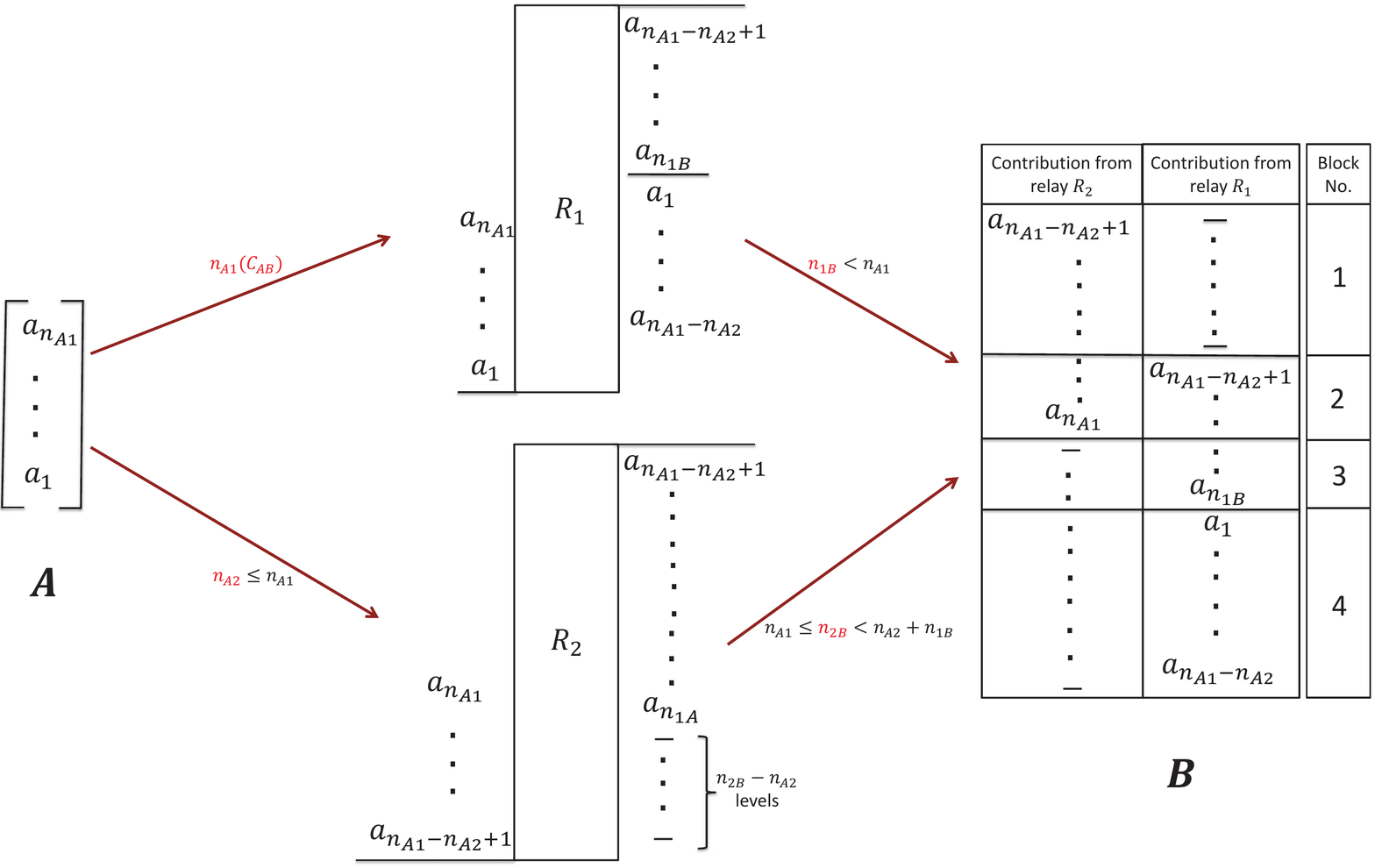}
\caption{Received signals by using Relay Strategy 4 when the forward channel is of Case 3.1.2 Type 1.}
\label{fig:s4s}
\end{figure}

\begin{example}
Consider the case $(n_{A1},n_{A2},n_{1B},n_{2B},n_{B1},n_{B2},n_{1A},n_{2A})=(6,4,6,5,5,7,6,7)$. With these parameters, the forward channel is of Case 3.1.2 Type 1, and the backward channel is of Case 4.2 Type 1. We use the transmission strategy for the backward channel corresponding to this case given in Appendix \ref{apdx_sc1} and transmit ${[a_{C_{AB}},...,a_{1}]}^T$ for the forward channel. Also, $R_2$ uses Relay Strategy 0, and $R_1$ uses Relay Strategy 4. Figure \ref{fig:Example.S.4} illustrates that the desired messages can be decoded by both nodes $A$ and $B$.
\end{example}

\begin{figure}[htbp]
\centering
\subfigure[Transmission to relays.]{
	\includegraphics[width=8.5cm]{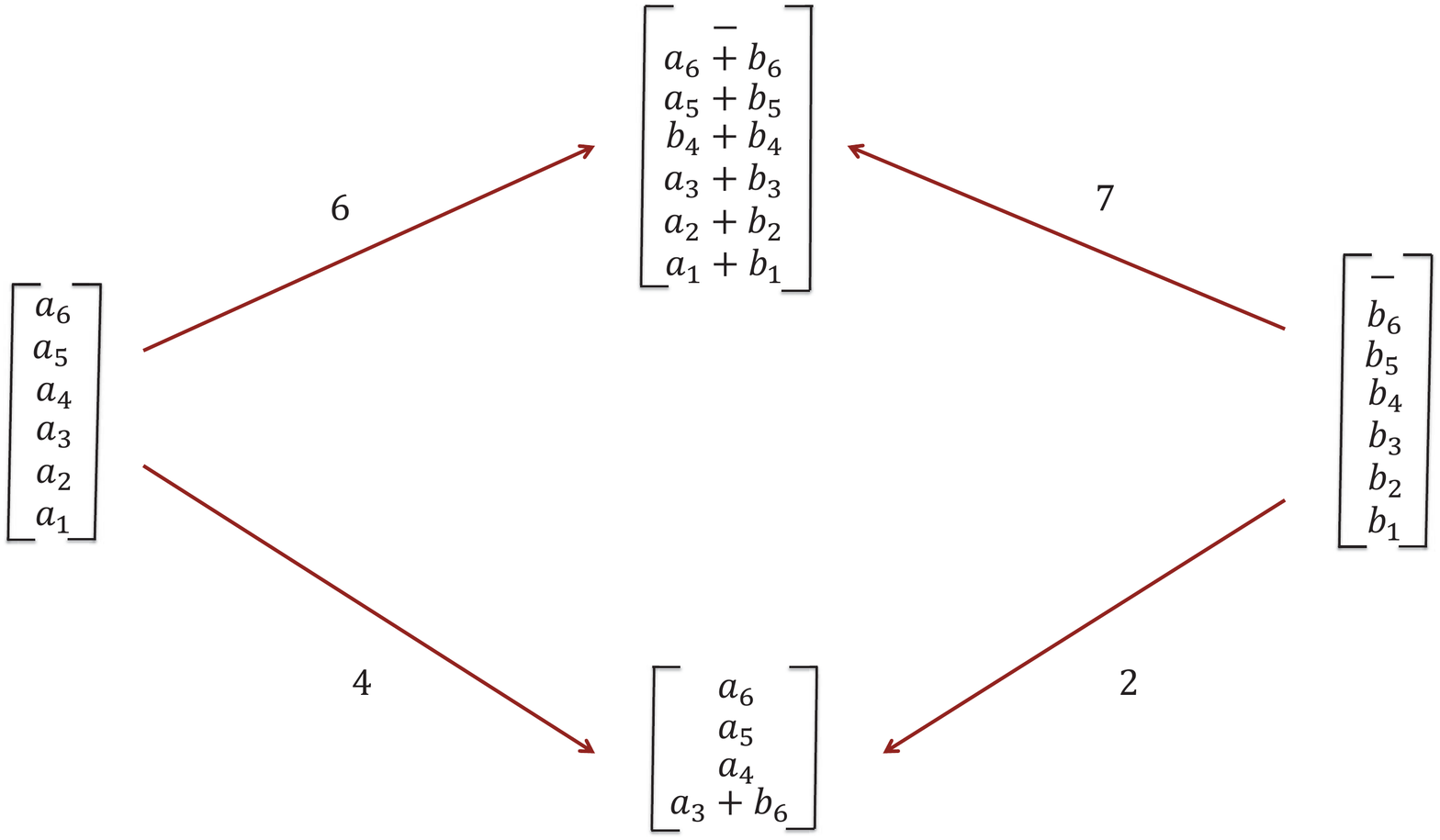}
\label{fig:subfigS4.1}
}
\subfigure[Reception from relays.]{
	\includegraphics[width=8.5cm]{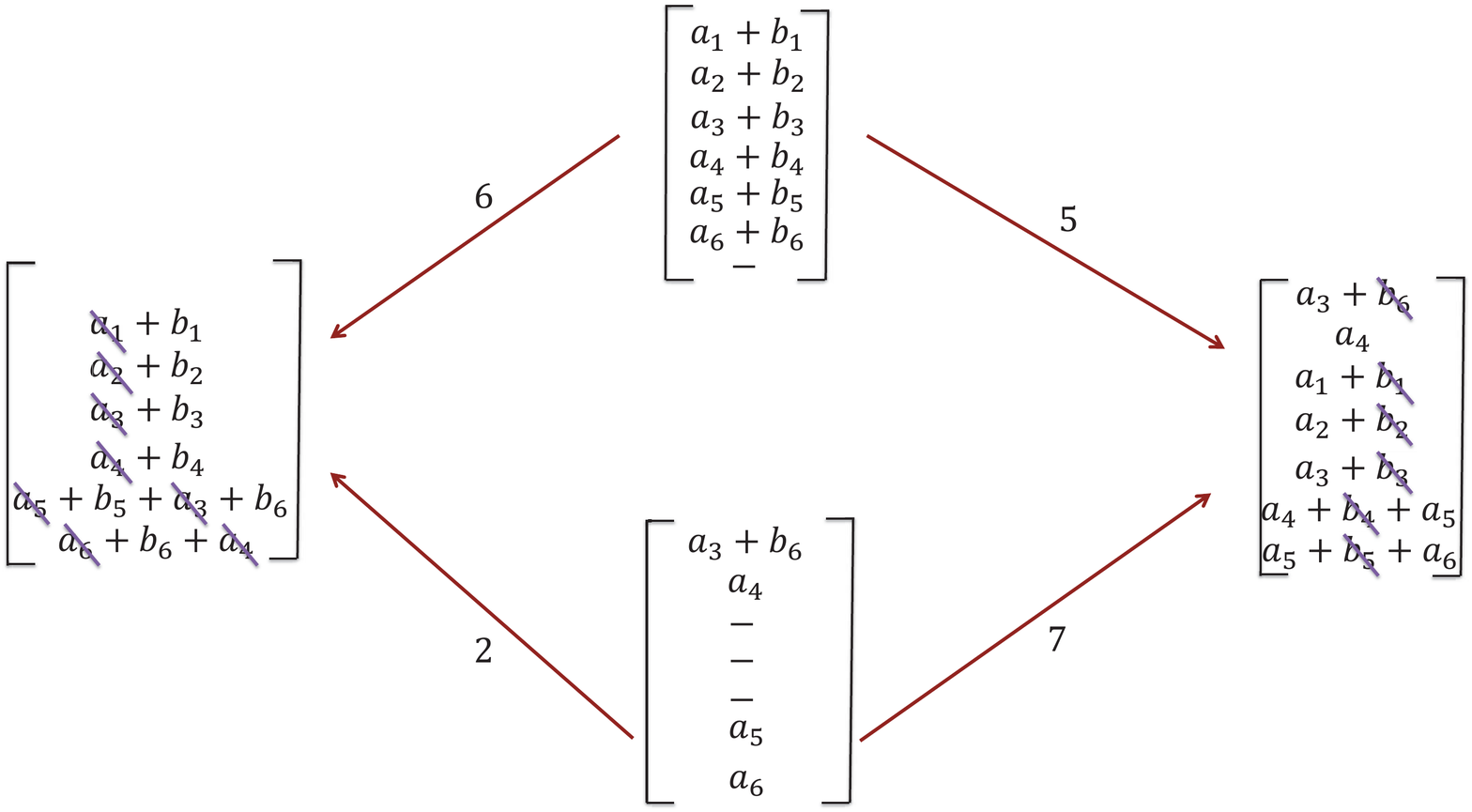}
\label{fig:subfigS4.2}
}
\caption[Optional caption for list of figures]{Example for $(n_{A1},n_{A2},n_{1B},n_{2B},n_{B1},n_{B2},n_{1A},n_{2A})=(6,4,5,7,7,2,6,4)$ using Relay Strategy 4.}
\label{fig:Example.S.4}
\end{figure}

%$K=3$

%We show that if the relays work by the above strategy, the channel from $A$ to $B$ can get the capacity of $C_{AB}$. Also, it is easy to see that the other direction (from $B$ to $A$) can get the capacity of $C_{BA}$, with similar reasoning. We also assume $n_{A1}, n_{A2}, n_{A3}, n_{1B}, n_{2B}, n_{3B}>0$ because if any of them is zero, one of the pathes from $A$ to $B$ is not working and consequently, model reduces to $K=2$.

%By theorem \ref{thm_det}, for $K=3$, $C_{AB}$ is equal to $\min\{\max\{n_{A1},n_{A2},n_{A3}\},\max\{n_{1B},n_{2B},n_{3B}\},\max\{n_{A1},n_{A2}\}+n_{3B},
%\max\{n_{A1},n_{A3}\}+n_{2B},\max\{n_{A2},n_{A3}\}+n_{1B},n_{A1}+\max\{n_{2B},n_{3B}\},n_{A2}+
%\max\{n_{1B},n_{3B}\},n_{A3}+\max\{n_{1B},n_{2B}\}\}$. We investigate the scheme by dividing into several cases:

%Case 1: $C_{AB}=\max\{n_{A1},n_{A2},n_{A3}\}$

%Case 2: $C_{AB}=\max\{n_{A2},n_{A3}\}+n_{1B}$

%Case 3: $C_{AB}=n_{A1}+\max\{n_{2B},n_{3B}\}$

%Case 4: $C_{AB}=\max\{n_{1B},n_{2B},n_{3B}\}$

%{\bf Two-way Analysis:}

%Now, we consider all the situations in two way channel:

%{\bf Note:} To avoid confusion, we show the characteristics of backward channel with a ``prime'' sign.

%{\bf Situation 1:} None of the forward nor backward channels are either of {\bf Case 3.1.2} or {\bf Case 4.1.2}. In this situation regular function of relay that we explained is enough to achieve the capacity.

\subsection{Achieving the Optimum Rate}

Now we explain how the above mentioned strategies achieve the optimal rate for any set of parameters on the backward channel.
% Take the case that the forward channel is the {\bf Case 3.1.2} and backward channel is neither {\bf Case 3.1.2} nor {\bf Case 4.1.2}. If forward channel is neither {\bf Case 3.1.2} nor {\bf Case 4.1.2} and backward channel is the {\bf Case 3.1.2} everything is similar except replacing all $A$ and $B$'s together. The first one (which we consider here) includes the following situations:

1. Backward channel is of Case 1:
\begin{itemize}
  \item Forward channel is of Case 3.1.2 Type 1: If $n_{A2}>n_{B2}$, we use Relay Strategy 2 at $R_2$ and Relay Strategy 0 at $R_1$. Otherwise use Relay Strategy 1 at $R_2$ and Relay Strategy 0 at $R_1$. Figure \ref{fig:d1} shows the backward channel when the forward channel is of Case 3.1.2 Type 1. If $n_{A2}>n_{B2}$, $R_2$ repeats from the streams that are already decoded from the highest levels received in $A$, ($(b_{n_{1A}+1},...,b_{n_{1A}+n_{B2}})$ in green in Figure \ref{fig:d1}) on the lower levels, and otherwise it just changes the order of some of the equations at the highest levels received in $A$, ($(b_{n_{1A}+1},...,b_{n_{1A}+n_{B2}})$ in green in Figure \ref{fig:d1}), which does not affect the decoding.

  \item Forward channel is of Case 3.1.2 Type 2: If $n_{A1}>n_{1A}$, we use Relay Strategy 2 at $R_1$ and Relay Strategy 0 at $R_2$. Otherwise use Relay Strategy 1 at $R_1$ and Relay Strategy 0 at $R_2$. If $n_{A1}>n_{1A}$, $R_1$ repeats from the streams ($b_{1},...,b_{n_{1A}}$) received below the noise level in $A$, and otherwise it just changes the order of some of the equations ($b_{1},...,b_{n_{1A}}$).
\end{itemize}

\begin{figure}[htbp]
\centering
	\includegraphics[width=15cm]{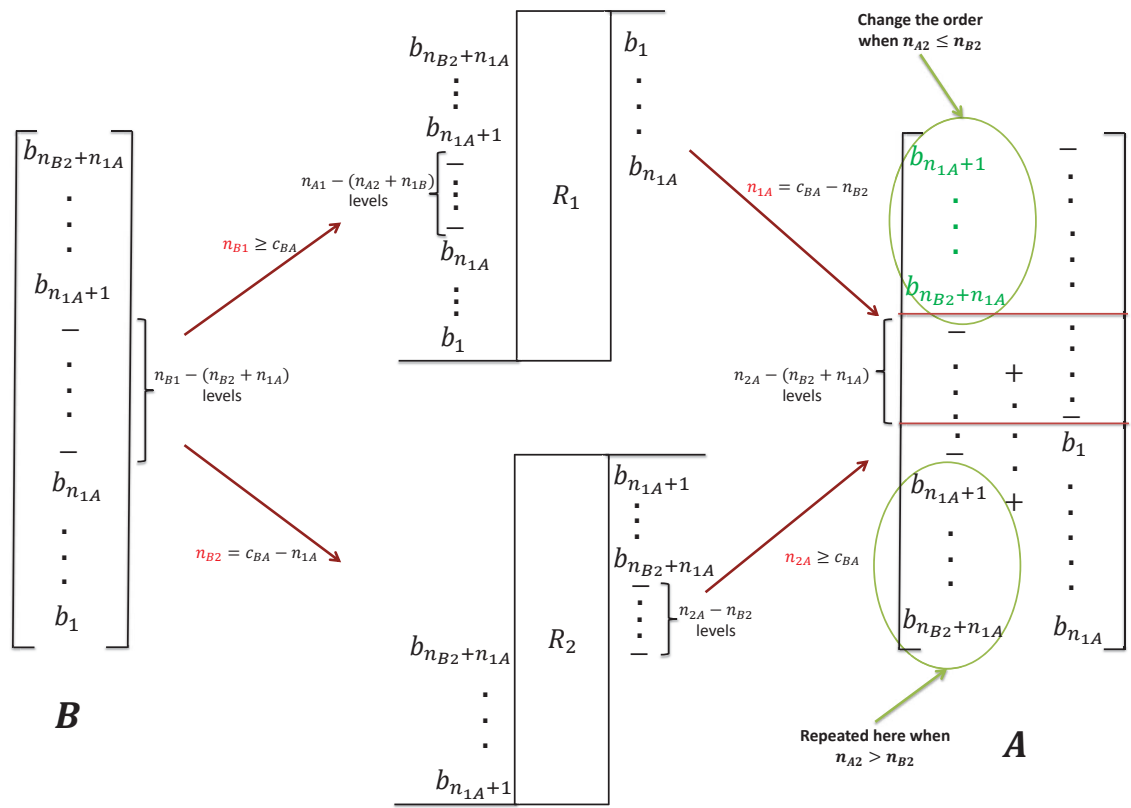}
\caption{Backward channel (Case 1) when forward channel is of Case 3.1.2 Type 1.}
\label{fig:d1}
\end{figure}

2. Backward channel is of Case 2:
\begin{itemize}
  \item Forward channel is of Case 3.1.2 Type 1: If $n_{A2}>n_{2A}$, we use Relay Strategy 1 at $R_2$ and Relay Strategy 0 at $R_1$. Otherwise use Relay Strategy 2 at $R_2$ and Relay Strategy 0 at $R_1$. If $n_{A2}>n_{2A}$, $R_2$ repeats from the streams ($b_{1},...,b_{n_{2A}}$) received below the noise level in $A$, and otherwise it just changes the order of some of the equations ($b_{1},...,b_{n_{2A}}$).

  \item Forward channel is of Case 3.1.2 Type 2: If $n_{A1}>n_{B1}$, we use Relay Strategy 2 at $R_1$ and Relay Strategy 0 at $R_2$. Otherwise use Relay Strategy 1 at $R_1$ and Relay Strategy 0 at $R_2$. If $n_{A1}>n_{B1}$, $R_1$ repeats from the streams that are already decoded from the highest levels received in $A$, ($b_{n_{2A}+1},...,b_{n_{2A}+n_{B1}}$), on the lower levels, and otherwise it just changes the order of some of the equations at the highest levels received in $A$, ($b_{n_{2A}+1},...,b_{n_{2A}+n_{B1}}$).
\end{itemize}

3. Backward channel is of Case 3.1.1: We assume that the backward channel is Type 1. For Type 2 the argument is similar.
\begin{itemize}
  \item Forward channel is of Case 3.1.2 Type 1: If $n_{A2}>n_{B2}$, we use Relay Strategy 2 at $R_2$ and Relay Strategy 0 at $R_1$. Otherwise use Relay Strategy 1 at $R_2$ and Relay Strategy 0 at $R_1$. If $n_{A2}>n_{B2}$, $R_2$ repeats from the streams that are already decoded from the highest levels received in $A$, ($b_{n_{B1}-n_{B2}+1},...,b_{n_{B1}}$), on the lower levels, and otherwise it just changes the order of some of the equations at the highest levels received in $A$, ($b_{n_{B1}-n_{B2}+1},...,b_{n_{B1}}$).

  \item Forward channel is of Case 3.1.2 Type 2: If $n_{A1}>n_{B1}$, we use Relay Strategy 2 at $R_1$ and Relay Strategy 0 at $R_2$. Otherwise use Relay Strategy 1 at $R_1$ and Relay Strategy 0 at $R_2$. If $n_{A1}>n_{B1}$, $R_1$ repeats from the streams ($b_{1},...,b_{n_{1A}}$) received below the noise level in $A$, and otherwise it just changes the order of some of the equations ($b_{1},...,b_{n_{1A}}$).
\end{itemize}

4. Backward channel is of Case 4.1.1: We assume that the backward channel is Type 1. For Type 2 the argument is similar.
\begin{itemize}
  \item Forward channel is of Case 3.1.2 Type 1: If $n_{A2}>n_{2A}$, we use Relay Strategy 2 at $R_2$ and Relay Strategy 0 at $R_1$. Otherwise use Relay Strategy 1 at $R_2$ and Relay Strategy 0 at $R_1$. If $n_{A2}>n_{2A}$, $R_2$ repeats from the streams ($b_{1},...,b_{n_{2A}}$) received below the noise level in $A$, and otherwise it just changes the order of some of the equations ($b_{1},...,b_{n_{2A}}$).

  \item Forward channel is of Case 3.1.2 Type 2: If $n_{A1}>n_{1A}-n_{2A}$, we use Relay Strategy 2 at $R_1$ and Relay Strategy 0 at $R_2$. Otherwise use Relay Strategy 1 at $R_1$ and Relay Strategy 0 at $R_2$. If $n_{A1}>n_{1A}-n_{2A}$, $R_1$ repeats from the streams that are already decoded from the highest levels received in $A$, ($b_{n_{2A}+1},...,b_{n_{1A}}$), on the lower levels, and otherwise it just changes the order of some of the equations at the highest levels received in $A$, ($b_{n_{2A}+1},...,b_{n_{1A}}$).
\end{itemize}

5. Backward channel is of Case 3.2: We assume that the backward channel is Type 1. For Type 2 the argument is similar.
\begin{itemize}
  \item Forward channel is of Case 3.1.2 Type 1: We use Relay Strategy 1 at $R_2$ and Relay Strategy 0 at $R_1$ or Relay Strategy 2 at $R_2$ and Relay Strategy 0 at $R_1$ or Relay Strategy 3 at $R_2$ and Relay Strategy 0 at $R_1$.

  \item Forward channel is of Case 3.1.2 Type 2: We use Relay Strategy 4 at $R_2$ and Relay Strategy 0 at $R_1$.
\end{itemize}

6. Backward channel is of Case 4.2: We assume that the backward channel is Type 1. For Type 2 the argument is similar.
\begin{itemize}
  \item Forward channel is of Case 3.1.2 Type 1: We use Relay Strategy 1 at $R_2$ and Relay Strategy 0 at $R_1$ or Relay Strategy 2 at $R_2$ and Relay Strategy 0 at $R_1$ or Relay Strategy 3 at $R_2$ and Relay Strategy 0 at $R_1$.

  \item Forward channel is of Case 3.1.2 Type 2: We use Relay Strategy 4 at $R_2$ and Relay Strategy 0 at $R_1$.
\end{itemize}

%{\bf Situation 3:} Forward channel is the {\bf Case 4.1.2} and backward channel is neither {\bf Case 3.1.2} nor {\bf Case 4.1.2}. It is similar to {\bf Situation 2}.

%{\bf Situation 4:} Both forward and backward channels are only {\bf Case 3.1.2} and {\bf Case 4.1.2}. It is hard! {\bf Vaneet} as you saw before, I have 4 strategies that helped to achieve capacity in last situations. We may be able to use them here too...

%Hint: In Gaussian channel, this adding and sending them together can be interpreted with Lattice codes.

For the case that forward channel is of Case 4.1.2, the proof is given in Appendix \ref{apdx_sc2}. An essential difference compared to Case 3.1.2 includes the freedom in transmission strategy (there are more transmission streams at $A$ than the capacity) and no freedom at the receiver side (number of the reception streams at $B$ is equal to the capacity) for the forward channel.

\section{Both the forward and backward channels are either of Case 3.1.2 or 4.1.2}\label{bothare}

In Section \ref{one} and Appendix \ref{apdx_sc2}, we used Relay Strategy 2 or Relay Strategy 6 as one of the achievability strategies when the forward channel is of Case 3.1.2 or 4.1.2, respectively. In this section, we will show that using a modified combination of these strategies achieve the optimal  capacity region when both the forward and backward channels are either of Case 3.1.2 or 4.1.2.

We will define Relay Strategy $(m_i,n_i)$ at $R_i$ for $i\in\{1,2\}$, $m_i,n_i \in \{0,2,6\}$. If the forward channel is of Case 3.1.2, at $R_1$, we use $m_1=0$ when the forward channel is Type 1 and $m_1=2$ otherwise. At $R_2$, we use $m_2=2$ when the forward channel is Type 1 and $m_2=0$ otherwise. If the forward channel is of Case 4.1.2, at $R_1$, we use $m_1=6$ when the forward channel is Type 1 and $m_1=0$ otherwise. At $R_2$, we use $m_2=0$ when the forward channel is Type 1 and $m_2=6$ otherwise. The value of $n_i$ is determined the same way based on the backward channel parameters.

Relay Strategy $(m_i,0)$ at $R_i$ uses Relay Strategy $m_i$ at $R_i$ based on the forward channel parameters, and Relay Strategy $(0,n_i)$ at $R_i$ uses Relay Strategy $n_i$ based on the backward channel parameters. For the remaining strategies $(m_i,n_i)\in \{(2,2), (2,6), (6,2), (6,6)\}$ at $R_i$, we use the combination of the repetitions suggested by Relay Strategies $m_i$ based on the forward channel parameters, and $n_i$ based on the backward channel parameters. If these two repetitions happen at the same level, we sum these modulo 2. However, there are some modifications to account for repetitions adding to zero modulo 2, or multiple repetitions due to different strategies at the relays. The modifications are described as follows.
\begin{enumerate}
\item If the repetitions happen in the same relay, i.e., $m_1=n_1=0$ or $m_2=n_2=0$: In case the repetition of a particular signal by both the forward and backward strategies is suggested at the same level, we send the repeated signal. If different repeated signals are suggested at a particular level, we send the sum of these two signals modulo two.

\item If the repetitions happen in different relays, i.e., $m_1=n_2=0$ or $m_2=n_1=0$:
    \begin{enumerate}
    \item In case that the repetitions of some streams from two relays are from the same level and are repeated on the same level at node $B$ (ignoring the backward signal component) $R_i$ skips repetitions at the corresponding levels if the forward channel is of Case 4.1.2 Type $i$ and $R_{\bar i}$ skips repetitions at the corresponding levels if the forward channel is of Case 3.1.2 Type $i$.
    \item In case that the repetitions of some streams from two relays are from the same level and are repeated on the same level at node $A$ (ignoring the forward signal component) $R_i$ skips repetitions at the corresponding levels if the backward channel is of Case 4.1.2 Type $i$ and $R_{\bar i}$ skips repetitions at the corresponding levels if the backward channel is of Case 3.1.2 Type $i$.
    \end{enumerate}
\end{enumerate}

We use the same transmission strategy as in Section \ref{one} for channel of both Cases 3.1.2 and 4.1.2. When the forward channel is of Case 3.1.2, node $A$ transmits ${[a_{C_{AB}},...,a_{1}]}^T$ for the forward channel and when the forward channel is of Case 4.1.2, node $A$ transmits ${[\underset{n_{A1}-(n_{1B}-n_{2B})}{\underbrace{0,...,0}}
,a_{n_{1B}},...,a_{n_{2B}+1},\underset{n_{A2}-(n_{A1}+n_{2B})}{\underbrace{0,...,0}}
,a_{n_{2B}},...,a_{1}]}^T$ for the forward channel. Also, similarly, when the backward channel is of Case 3.1.2, node $B$ transmits ${[b_{C_{BA}},...,b_{1}]}^T$ for the backward channel and when the backward channel is of Case 4.1.2, node $B$ transmits ${[\underset{n_{B1}-(n_{1A}-n_{2A})}{\underbrace{0,...,0}}
,b_{n_{1A}},...,b_{n_{2A}+1},\underset{n_{B2}-(n_{B1}+n_{2A})}{\underbrace{0,...,0}}
,b_{n_{2A}},...,b_{1}]}^T$ for the backward channel.

%Appendix \ref{apdx_sc3} gives the proof of achievability for any set of channel parameters.

%We use Relay Strategy 2 and Relay Strategy 6 with some modifications compared to the last parts. In case that there are more than one repetition streams on at least one of the repetitions, we shift one of them downward (one level), otherwise when they are both only one stream and they correspond to the same level we will see that there is another stream that we can replace with that stream (for either of them).

For Case 3.1.2 Type 1, all the messages can be decoded with the same order of decoding similar to the one in Relay Strategy 2 in Section \ref{one} based on the partitioning of the parameter space into nine parts as shown in Figure \ref{fig:sup1}. Also for Case 4.1.2 Type 2, all the messages can be decoded with the same order of decoding similar to the one in Relay Strategy 6 in Appendix \ref{apdx_sc2} based on the partitioning of the parameter space into seven parts as shown in Figure \ref{fig:sup2}. Type 2 cases can be explained similarly. The complete proof can be seen in Appendix \ref{apdx_sc3}.

\begin{example}
Consider the case $(n_{A1},n_{A2},n_{1B},n_{2B},n_{B1},n_{B2},n_{1A},n_{2A})=(6,4,5,7,6,8,7,5)$. With these parameters, the forward channel is of Case 3.1.2 Type 1, and the backward channel is of Case 4.1.2 Type 1. $A$ transmits ${[a_{C_{AB}},...,a_{1}]}^T$ and $B$ transmits ${[\underset{\max\{n_{B1},n_{B2}\}-C_{BA}}{\underbrace{0,...,0}},b_{C_{BA}},...,b_{1}]}^T$. Relay $R_2$ uses Relay Strategy (2,0) and $R_1$ uses Relay Strategy (0,6). See in Figure \ref{fig:Example.S3} that the messages can be decoded by both nodes $A$ and $B$.
\end{example}

%, as discussed in Appendix \ref{apdx_sc3} for general form

\begin{figure}[htbp]
\centering
\subfigure[Transmission to relays.]{
	\includegraphics[width=8.5cm]{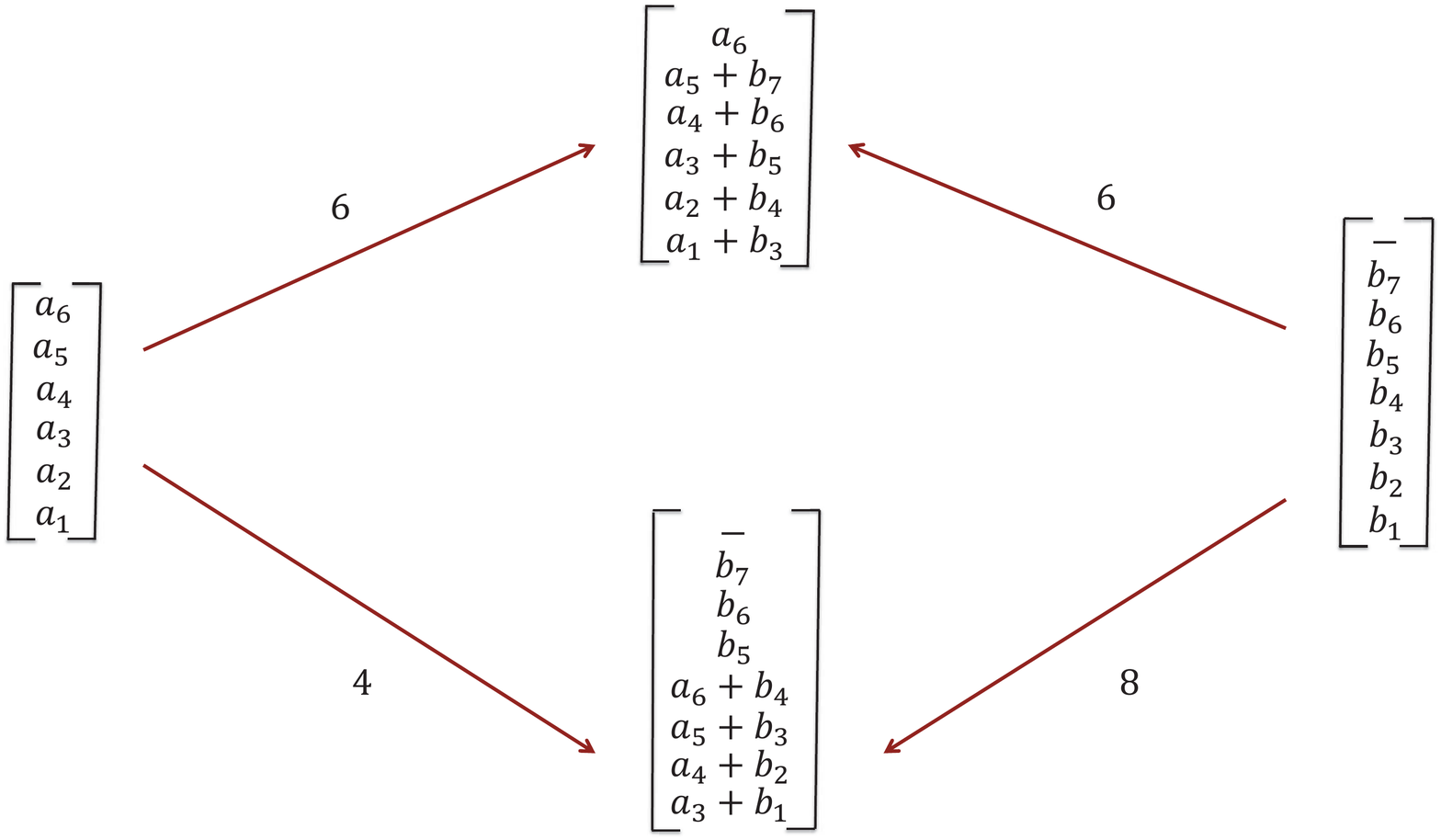}
\label{fig:subfigS3a}
}
\subfigure[Reception from relays.]{
	\includegraphics[width=8.5cm]{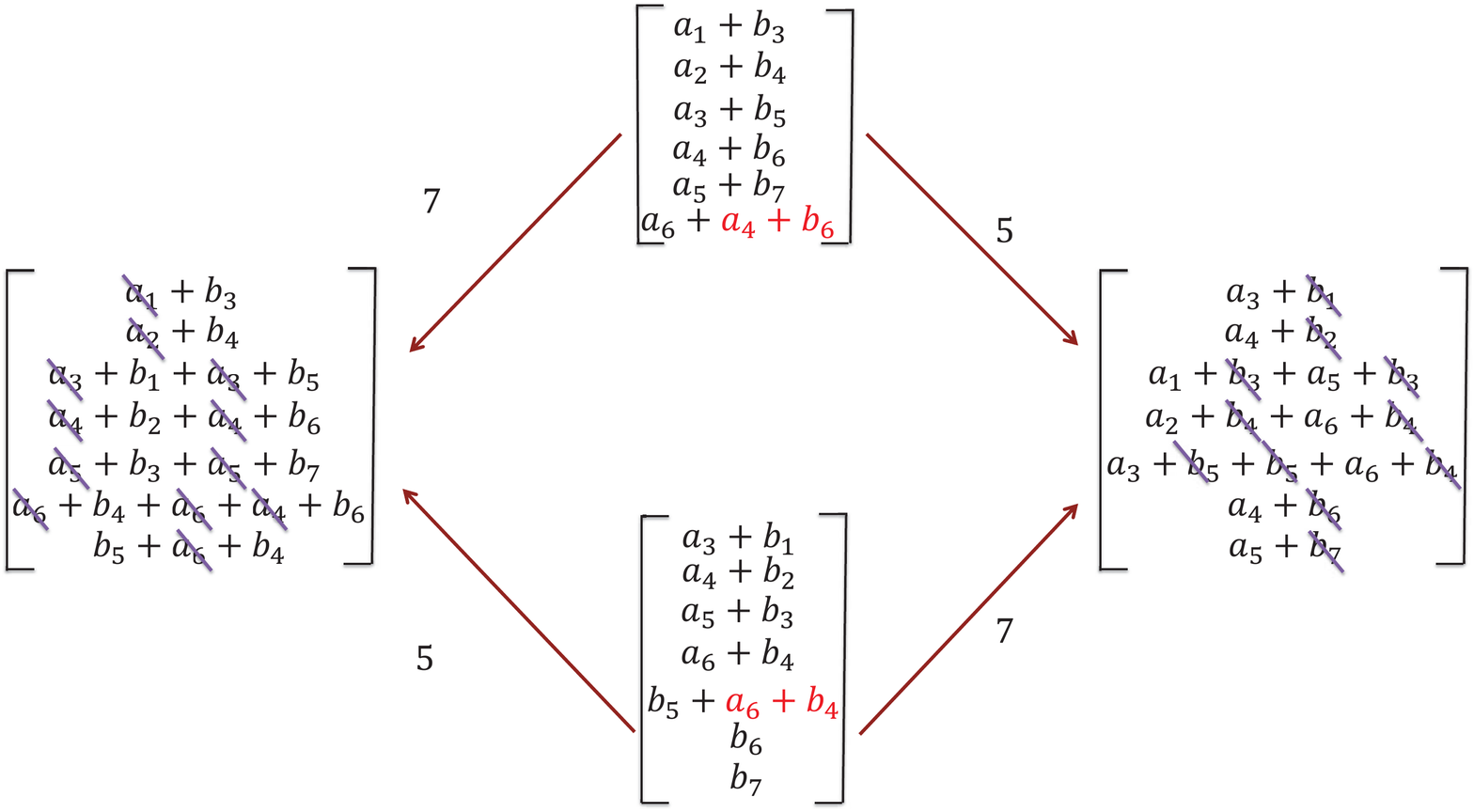}
\label{fig:subfigS3b}
}
\caption[Optional caption for list of figures]{Example for $(n_{A1},n_{A2},n_{1B},n_{2B},n_{B1},n_{B2},n_{1A},n_{2A})=(6,4,5,7,6,8,7,5)$. The red part in the transmission from $R_1$ is due to the repeat strategy for backward channel ($b_1$) and the red part in the transmission from $R_2$ is due to the repeat strategy for forward channel ($a_2$).}
\label{fig:Example.S3}
\end{figure}

\section{Gaussian Diamond Channels}

In this section we first present the Gaussian diamond channel model. Then, we present our results on the capacity of the two-way Gaussian relay channel and a special case of the two-way Gaussian diamond channel.

%Noting that the proof for linear deterministic model has many cases, extension of all these cases to Gaussian channel model is challenging. Thus, we consider a special case where the channel gains for the forward and the backward channel satisfy certain conditions. For the considered parameters, the results are an extension of the case when both the forward and the backward channel are in Case 1 for the linear deterministic model. For this special case, we find that the achievable rate of each direction is within four bits of the capacity. The achievable scheme uses lattice codes, and is the first result to prove an approximate capacity result for two-way diamond channels.

\subsection{System Model}

A two-way Gaussian diamond channel consists of two nodes $A$ and $B$ who wish to communicate to each other through two relays $R_1$ and $R_2$. We assume there is no direct link between $A$ and $B$ and between $R_1$ and $R_2$. The channels are assumed time-invariant and known to all nodes, the channel gains from node $i$ to the $R_j$ is denoted by $h_{ij}$ and the channel gain from the $R_j$ to node $i$ to is denoted by $h_{ji}$ for $i\in\{A,B\}$ and $j\in\{1,2\}$. The received signals at the relays are given by:
\begin{eqnarray}
Y_1(t)&=&h_{A1}X_{A}(t)+h_{B1}X_B(t)+Z_1(t),\\
Y_2(t)&=&h_{A2}X_{A}(t)+h_{B2}X_B(t)+Z_2(t),
\end{eqnarray}
where $X_A(t), X_B(t)\in\mathbb{C}$ are the transmitted signals from nodes $A$ and $B$, respectively. $Z_1(t), Z_2(t)\sim \mathbf{CN}(0, 1)$ are i.i.d. Gaussian noise at the relays. The received signals at the nodes are given by:
\begin{eqnarray}
Y_A(t)&=&h_{A1}X_1(t)+h_{A2}X_2(t)+Z_A(t),\\
Y_B(t)&=&h_{B2}X_2(t)+h_{B1}X_1(t)+Z_B(t).
\end{eqnarray}
where $X_j(t)\in \mathbb{C}$, $j\in \{1, 2\}$ is the transmitted signal from $R_j$. $Z_A(t), Z_B(t)\sim \mathbf{CN}(0, 1)$ are i.i.d. Gaussian noise at nodes $A$ and $B$, respectively. We have the following power constraints:
\begin{eqnarray}\label{pp}
E\left({|X_i(t)|}^2\right) \le 1,
\end{eqnarray}
for $i \in \{A, B, 1, 2\}$. Let $R_A$ and $R_B$ be the data rates of nodes $A$ and $B$, respectively. In a period consisting of $N$ channel symbols, node $A$ wants to send one of the $2^{NR_A}$ codewords to node $B$, and node $B$ wants to send one of the $2^{NR_B}$ codewords to node $A$. A ($2^{NR_A}$, $2^{NR_B}$, $N$) code for the two-way Gaussian diamond channel consists of two message sets $M_A = \{1, 2, ..., 2^{NR_A}\}$ and $M_B = \{1, 2, ..., 2^{NR_B}\}$, two encoding functions at each time $t$ as
\begin{eqnarray}
f_{it} : (M_i, Y_i^{t-1}) \rightarrow {\mathbb{C}}^N, i \in {A, B},
\end{eqnarray}
two relay functions at each time $t$ as
\begin{eqnarray}
\phi_{jt} :  \mathbb{C}^N \rightarrow  \mathbb{C}^N, j \in {1, 2},
\end{eqnarray}
and two decoding functions
\begin{eqnarray}
g_A : {\mathbb{C}}^N \times M_A \rightarrow M_B,
g_B : {\mathbb{C}}^N \times M_B \rightarrow M_A.
\end{eqnarray}
For $i = A, B$, node $i$ transmits the codeword $f_i(m_i)$, where $m_i$ is the message to be transmitted. For $j = 1, 2$, relay $j$ applies the function $\phi_j$ to its received signal and transmits the resulting signal. Let the received signals at the nodes $A$ and $B$ be $Y^N_A$ and $Y^N_B$, respectively, where the superscript $N$ denotes a sequence of length $N$. We note that the decoding function $g_i$ uses the message from node $i$ as input as well. We say that a decoding error occurs if $g_A(Y^N_A,m_A)\neq m_B$ or $g_B(Y^N_B
,m_B) \neq m_A$. The average probability of error is $P^N_e =\frac{1}{|M_A||M_B|}\times\sum_{(m_A,m_B)\\
\in M_A\times M_B} Pr\{g_A(Y^N_A,m_A)\neq m_B, \ \text{or} \ g_B(Y^N_B,m_B)\neq m_A|(m_A,m_B) \ \text{is sent}\}$.

A rate pair $(R_A, R_B)$ is said to be achievable if there exists a sequence of ($2^{NR_A}$, $2^{NR_B}$, $N$) codes, satisfying the power constraints in \eqref{pp} with $P^N_e \rightarrow 0$ as $N \rightarrow \infty$. The capacity region is the convex hull of all achievable rate pairs $(R_A,R_B)$. The two-way Gaussian diamond channel is characterized by the set of channel parameters $(h_{A1},h_{A2},h_{1B},h_{2B},h_{B1},h_{B2},h_{1A},h_{2A})$.

%\subsection{An Introduction on Lattice Coding}
%Lattice codes are a class of codes that can achieve the capacity of the Gaussian channel \cite{Lattice1,Lattice2}, with lower complexity as compared to the conventional random codes. A $T$-dimensional lattice $\Lambda$ is a subset of $T$-tuples with real elements, such that $x, y \in \Lambda$ implies $-x\in \Lambda$ and $x+y \in \Lambda$. For an arbitrary $x\in {\mathbb{C}}^T$, we define $[x \ mod \ \Lambda]=x-Q(x)$, where $Q(x)=\arg \min_{t \in \Lambda}||x-t||$, is the closest lattice point to $x$. The Voronoi cell of $\Lambda$, denoted by $\nu$, is defined as $\nu={x \in {\mathbb{C}}^T:Q(x)=0}$. The Voronoi volume $V(\nu)$ and the second moment $\sigma^2(\Lambda)$ of the lattice are defined as $V(\nu)=\int_{\nu}dx,$ and $\sigma^2(\Lambda)=\frac{\int_{\nu}||x||^2dx}{TV(\nu)}$, respectively. We further define the normalized second moment of $\Lambda$ as $G(\Lambda)=\frac{\sigma^2}{V(\nu)^{2/T}}=\frac{\int_{\nu}||x||^2dx}{TV(\nu)^{1+\frac{2}{T}}}$. A sequence of lattices $\{\Lambda_T\}$ is called a {\em good quantization code} if $\lim_{T\rightarrow\infty}G(\Lambda_T)=\frac{1}{2\pi e}$. On the other hand, a sequence of lattices is known to be good for AWGN channel coding if $\lim_{T\rightarrow\infty}Pr[z^T\in \nu_T]=1$, where $z^T\sim \mathcal{N}(0,\sigma^2(\lambda_T))$ is random zero-mean Gaussian noise with proper variance. It is shown in \cite{Lattice3} that there exist sequences of lattices $\{\Lambda_T\}$ that are simultaneously good for quantization and additive white Gaussian noise channel coding.

\subsection{Results for Two-Way Gaussian Relay Channel}

In a diamond channel, if channel gains to and from one relay are zero, we have a two-way relay channel. There are several works on two-way Gaussian relay channels. In \cite{s6,s7}, a deterministic approach was used to achieve the information theoretic cut-set bound \cite{Cover2} within 3 bits for each user. Later, in \cite{i6}, the achievable rate region is within 1 bit from the capacity region for each user for the special case that the channels from the relay to the nodes are the same. The achievability scheme in \cite{i6} is composed of nested lattice codes for the uplink and structured binning for the downlink. Their codes utilize two different shaping lattices for source nodes based on a three-stage lattice partition chain to satisfy their different transmit power constraints. Here we propose a simpler achievability scheme for a general two-way Gaussian relay channel compared to \cite{i6,s7}. Define $h_{AB}\triangleq\min\{|h_{A1}|,|h_{1B}|\}$ and $h_{BA}\triangleq\min\{|h_{B1}|,|h_{1A}|\}$. By symmetry we can assume $|h_{BA}|\le|h_{AB}|$. $W_A$ and $W_B$ are the messages of nodes $A$ and $B$, respectively, that they want to convey to the other node. We divide the message from $A$ to $B$ to two parts, as $W_A=(W_{A1}, W_{A2})$.

\begin{theorem}\label{thm_rel.g}
For the two-way Gaussian relay channel with the parameters of $({h_{A1},h_{B1},h_{1A},h_{1B}})$, the capacity region is outer-bounded by the following region
\begin{eqnarray}\label{r-GG}
{R}_{AB}&\le&\min\{\log(1+{|h_{A1}|}^2),\log(1+{|h_{1B}|}^2)\}=\log(1+{h_{AB}}^2),\nonumber\\
{R}_{BA}&\le&\min\{\log(1+{|h_{B1}|}^2),\log(1+{|h_{1A}|}^2)\}=\log(1+{h_{BA}}^2).
\end{eqnarray}
Furthermore, if $h_{AB}=h_{BA}$, this region is achievable within 1 bit for each user and otherwise assuming $h_{AB}<h_{BA}$, then this region is achievable within 1 bit for the $A\rightarrow B$ direction and within 2 bits for the $B\rightarrow A$ direction.
\end{theorem}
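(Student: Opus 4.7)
The outer bound is the standard cut-set bound for the two-way relay channel: for the $A\to B$ direction, the rate is bounded by both the point-to-point capacities of the $A\to R_1$ and $R_1\to B$ Gaussian links, yielding $\log(1+h_{AB}^2)$, and symmetrically for $B\to A$. I would verify this by the usual Fano's inequality argument applied to a genie who reveals the message of the other user (so the two-way problem reduces to one-way at each cut), which is standard.

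For achievability I would import the linear deterministic intuition from the paper: both users' signals superimpose at the relay, and the relay forwards a symmetric function of the two incoming streams (the reverse amplify-and-forward philosophy of Relay Strategy~0). The Gaussian analogue is nested lattice coding for the uplink and superposition coding for the asymmetric part. Concretely, since $h_{BA}\le h_{AB}$, I would split $W_A=(W_{A1},W_{A2})$ where $W_{A1}$ carries roughly $\log(1+h_{BA}^2)$ bits (to match $W_B$) and $W_{A2}$ carries the residual $\log(1+h_{AB}^2)-\log(1+h_{BA}^2)$ bits. Both $W_{A1}$ and $W_B$ are encoded with a common nested lattice pair $(\Lambda_f,\Lambda_c)$, scaled so that their channel-weighted signals arrive at $R_1$ at a common effective power; $W_{A2}$ is superposed at a smaller power level on top of $W_{A1}$. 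The relay computes the lattice sum $W_{A1}\oplus W_B$ (compute-and-forward, as in Nam--Chung--Lee), treats $W_{A2}$ as noise while doing so, then decodes $W_{A2}$ after subtracting the lattice signal; it then broadcasts the pair $(W_{A1}\oplus W_B,W_{A2})$ by superposition, with the lattice sum placed on the stronger layer so that the weaker downlink (to $A$) can still recover it. Each destination then cancels the self-message.

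The gap analysis is the technical core. For $h_{AB}=h_{BA}$, only the lattice layer is needed and the compute-and-forward rate is $\bigl[\log(h^2 + \tfrac12)\bigr]^+\ge \log(1+h^2)-1$, giving the claimed 1-bit gap in each direction (matching the Nam--Chung--Lee bound by equal power allocation and symmetric broadcast). For $h_{AB}>h_{BA}$, I would choose the power split so that the $W_{A1}$ lattice decoding at $R_1$ (with $W_{A2}$ as noise) loses at most $1$ bit against $\log(1+h_{BA}^2)$, and the successive decoding of $W_{A2}$ loses at most an additional bit against the incremental capacity $\log(1+h_{AB}^2)-\log(1+h_{BA}^2)$; the downlink broadcast is tuned so that node $B$ recovers both layers within $1$ bit of $\log(1+h_{AB}^2)$ and node $A$ recovers the lattice sum (its only needed piece) within $1$ bit of $\log(1+h_{BA}^2)$. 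Summing the uplink and downlink losses yields within $1$ bit for $A\to B$ and within $2$ bits for $B\to A$.

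The main obstacle is the asymmetric case: the power allocation between the lattice layer and the superposed $W_{A2}$ layer must simultaneously keep (i) the relay's compute-and-forward rate above $\log(1+h_{BA}^2)-1$, (ii) the relay's successive-cancellation rate for $W_{A2}$ large enough to absorb all extra $A\to B$ bits, and (iii) the downlink superposition rates at both nodes within the claimed gaps. I would verify that choosing the $W_{A2}$ power proportional to $1/(1+h_{BA}^2)$ (so that the lattice signal sees essentially unit effective noise) makes all three constraints compatible and yields the stated $1$-bit and $2$-bit gaps; the remaining bookkeeping is routine but must be handled carefully because the lattice shaping constant and the logarithmic slack from the $+1/2$ term in the compute-and-forward rate are what produce the final integer-bit gaps rather than the sharper half-bit bounds of prior work.
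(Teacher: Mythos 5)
Your overall architecture matches the paper's: cut-set outer bound, splitting $W_A=(W_{A1},W_{A2})$ with $W_{A1}$ and $W_B$ on a common nested lattice scaled to arrive at the relay with equal power, a superposed Gaussian layer for $W_{A2}$, and a structured-binning/superposition downlink. However, there is a genuine gap in your uplink decoding at the relay, and it sits exactly at the point you flag as the ``main obstacle'' and then defer. You decode the lattice sum \emph{first}, treating $W_{A2}$ as noise, and then peel off $W_{A2}$. For $W_{A2}$ to absorb the residual rate $\log(1+h_{AB}^2)-\log(1+h_{BA}^2)$, its received power at the relay must be on the order of $(1+|h_{AB}|^2)/(1+|h_{BA}|^2)$, which is unbounded in the asymmetric regime; treating that as noise while computing a lattice sum whose received power is only $|h_{BA}|^2$ drives the compute-and-forward rate toward $\log(1/2)$ and destroys the claimed $1$-bit gap on the lattice layer. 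Your proposed fix --- received $W_{A2}$ power of order one so the lattice ``sees essentially unit effective noise'' --- rescues the lattice rate but then caps $R_v$ at a constant, so constraint (ii) in your own list fails. With your decoding order, constraints (i) and (ii) cannot hold simultaneously. The paper avoids this by reversing the order: the relay first decodes the Gaussian stream $X_A^{(2)}$ (received power $|h_{A1}|^2-|h_{BA}|^2$) treating the \emph{two} lattice codewords (combined power $2|h_{BA}|^2$) as noise, giving $R_v\le\log\bigl(1+\tfrac{|h_{A1}|^2-|h_{BA}|^2}{1+2|h_{BA}|^2}\bigr)$, which is within a constant of the residual rate; it then subtracts $X_A^{(2)}$ and decodes the lattice sum cleanly at rate $\log(|h_{BA}|^2)$.

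Two smaller points. First, the final gaps are not obtained by ``summing the uplink and downlink losses'': the paper checks that each of the five individual rate constraints (\eqref{r10}--\eqref{r13}) exceeds its target minus the claimed slack for a single downlink power split $\alpha_2=f(\min\{|h_{1A}|,|h_{1B}|\})$, and the $2$-bit gap arises because the stronger direction's rate is $R_u+R_v$ and therefore accumulates the $1$-bit slack of each layer, while the weaker direction carries only $R_u$ and loses $1$ bit. Second, be careful which direction gets which gap: under the normalization $h_{BA}\le h_{AB}$ the proof yields $1$ bit for the direction whose rate is $R_u$ alone and $2$ bits for the direction carrying both layers.
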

\begin{proof}
The above outer-bound results from the cut-set bound. In order to encode $W_{A1}$ and $W_{B}$, we use the common lattice code $\Lambda=\Lambda_f \cap \nu_c$\footnote{
We use a nested lattice code \cite{Zamirr} which is generated using a quantization lattice for shaping and a channel coding lattice. We have $T$-dimensional nested lattices $\Lambda_c\subseteq\Lambda_f$, where $\Lambda_c$ is a quantization lattice with $\sigma^2(\Lambda_c)=1$ and $G(\Lambda_c)\approx1/2\pi e$, and $\Lambda_f$ is a good channel coding lattice. We construct a codebook $\Lambda=\Lambda_f \cap \nu_c$, where $\nu_c$ is the Voronoi cell of the lattice $\Lambda_c$. We will use the following properties of lattice codes \cite{Mohajer}:
\begin{enumerate}
\item Codebook $\Lambda$ is a closed set with respect to summation under the ``$mod \ \Lambda_c$" operation, i.e., if $x_1, x_2 \in {\Lambda}$ are two codewords, then $(x_1+x_2) \ mod \ \Lambda_c \in\Lambda$ is also a codeword.
\item Lattice code $\Lambda$ can be used to reliably transmit up to rate $R=\log(\mathsf {SNR})$ over a Gaussian channel modeled by $Y=\sqrt{\mathsf {SNR}}X+Z$ with ${\mathbb{E}}[Z^2] = 1$, while a more sophisticated scheme can achieve rate $R=\log(1+\mathsf {SNR})$.
\end{enumerate}
For more detail refer to \cite{Lattice1,Lattice2,Lattice3}.
}. Let $s_A$ be the lattice codeword to which $W_{A1}$ is mapped and $s_B$ be the lattice codeword to which $W_{B}$ is mapped, and define $X_L=s_A+s_B$. We use the signal $X_A^{(2)}$ to encode $W_{A2}$ which is Gaussian with unit power.

Once the encoding process is performed, the signal transmitted by $A$ is formed as $X_A=\sqrt{\alpha_1}c_A+\sqrt{1-\alpha_1}X_A^{(2)}$ where $0\le \alpha_1\le 1$ and $c_A=[s_A-d_A]\mod\Lambda_c$ with unit power, and $d_A$ is a random dither uniformly distributed over $\nu_c$, and shared between both transceivers and both relays. Also, transceiver $B$ sends the signal $\frac{|h_{BA}|}{|h_{B1}|}c_B$ where $c_B=[s_B-d_B]\mod\Lambda_c$, and $d_B$ is a random dither uniformly distributed over $\nu_c$, and shared between both transceivers and both relays. We also choose $\alpha_1=\frac{{|h_{BA}|}^2}{{|h_{A1}|}^2}$ so that $c_A$ and $c_B$ arrive at the relay with the same power and add together as a lattice code. Messages from $s_A$ and $s_B$ are being sent with rate $R_u$ and messages from $X_A^{(2)}$ is being sent with rate $R_v$. So, $R_{AB}=R_{u}+R_{v}$ and $R_{BA}=R_{u}$.

The relay receives $Y_R=|h_{BA}|(c_A+c_B)+\sqrt{\left({{|h_{A1}|}^2}-{{|h_{BA}|}^2}\right)}X_A^{(2)}+Z_R$. The signal $X_A^{(2)}$ (which is Gaussian) can be decoded by treating the rest ($c_A$ and $c_B$) as noise provided that:
\begin{eqnarray}\label{r10}
R_{v} &\le& \log\left(1+\frac{\left(1-\frac{{|h_{BA}|}^2}{{|h_{A1}|}^2}\right)|h_{A1}|^2}{1+2|h_{BA}|^2}\right).
\end{eqnarray}

Recall that $X_L=[s_A+s_B \mod \Lambda_c]=[c_A+c_B+(d_A+d_B) \mod \Lambda_c] \in \Lambda$. So it can be decoded from the received signal after subtracting the signal $X_A^{(2)}$, provided that
\begin{eqnarray}\label{r11}
R_{u} &\le& \log\left(|h_{BA}|^2\right).
\end{eqnarray}
%which is a lattice code including $X_A^{(1)}$ and $X_B$.

Then, we use a structured binning for the transmission from the relay to nodes $A$ and $B$. We generate $2^{nR_u}$ length-$n$ sequences with each element i.i.d. according to $\mathbf{CN}(0,1)$. These sequences form a codebook $\Lambda_R$. We assume one-to-one correspondence between each $t\in \Lambda_A$ and a codeword $X_R \in \Lambda_R$. To make this correspondence explicit, we use the notation $X_R(t)$. After the relay decodes $\hat{X_L}$, it transmits $X_R(\hat{X_L})$ at the next block to nodes $A$ and $B$. $\hat{X_L}$ is uniform over $\Lambda_A$, and, thus, $X_R(\hat{X_L})$ is also uniformly chosen from $\Lambda_R$. Then, relay sends out $\sqrt{\alpha_2}X_R+\sqrt{1-\alpha_2}X_A^{(2)}$, $0\le\alpha_2\le1$.

Node $B$ can decode $X_L$ taking $X_A^{(2)}$  as noise (while we send them both as Gaussian signals this time) if:
\begin{eqnarray}\label{r12}
R_{u} &\le& \log\left(1+\frac{\alpha_2 |h_{1B}|^2}{1+(1-\alpha_2)|h_{1B}|^2}\right).
\end{eqnarray}
Then, $B$ can decode $X_A^{(2)}$ after decoding $X_L$ if:
\begin{eqnarray}\label{r14}
R_{v} &\le& \log\left(1+(1-\alpha_2) |h_{1B}|^2\right).
\end{eqnarray}
Also, $A$ can decode $X_L$ taking $X_A^{(2)}$ as noise if:
\begin{eqnarray}\label{r13}
R_{u} &\le& \log\left(1+\frac{\alpha_2 |h_{1A}|^2}{1+(1-\alpha_2)|h_{1A}|^2}\right).
\end{eqnarray}
Now, we show that we can achieve the capacity within 1 bit for $B\rightarrow A$ direction and within 2 bits for $A\rightarrow B$ direction, by showing $R_{u}^{opt}\ge \log(1+{|h_{BA}|}^2)-1$ and $R_{u}^{opt}+R_{v}^{opt}\ge \log(1+{|h_{AB}|}^2)-2$. In other words, it is enough to show that $R_{u}^{opt}\ge \log(1+{|h_{BA}|}^2)-1$ and $R_{v}^{opt}\ge \log(1+{|h_{AB}|}^2)-\log(1+{|h_{BA}|}^2)-1$. We also assume that all the links have $|h_{ij}|\ge 1$ otherwise we take it as zero and that direction does not send. We only need to prove that the above equations satisfy the claimed gap. For all the equations in \eqref{r10}-\eqref{r13}, we need to show that the RHS for those ones with $R_u$, is $\ge \log(1+{|h_{BA}|}^2)-1$ and RHS for those ones with $R_v$, is $\ge \log(1+{|h_{AB}|}^2)-\log(1+{|h_{BA}|}^2)-1$. Thus, we need to show the following.
\begin{eqnarray}
&&{\text {RHS of \eqref{r10}: \ \ }}\log\left(1+\frac{\left(1-\frac{{|h_{BA}|}^2}{{|h_{A1}|}^2}\right)|h_{A1}|^2}{1+2|h_{BA}|^2}\right)\ge \log(1+{|h_{AB}|}^2)-\log(1+{|h_{BA}|}^2)-1,\label{r110}\\
&&{\text {RHS of \eqref{r11}: \ \ }}\log\left(|h_{BA}|^2\right)\ge \log(1+{|h_{BA}|}^2)-1,\label{r111}\\
&&{\text {RHS of \eqref{r12}: \ \ }}\log\left(1+\frac{\alpha_2 |h_{1B}|^2}{1+(1-\alpha_2)|h_{1B}|^2}\right)\ge \log(1+{|h_{BA}|}^2)-1,\label{r112}\\
&&{\text {RHS of \eqref{r14}: \ \ }}\log\left(1+(1-\alpha_2) |h_{1B}|^2\right)\ge \log(1+{|h_{AB}|}^2)-\log(1+{|h_{BA}|}^2)-1.\label{r114}\\
&&{\text {RHS of \eqref{r13}: \ \ }}\log\left(1+\frac{\alpha_2 |h_{1A}|^2}{1+(1-\alpha_2)|h_{1A}|^2}\right)\ge \log(1+{|h_{BA}|}^2)-1,\label{r113}
\end{eqnarray}
where Eqs. \eqref{r110} and \eqref{r111} trivially hold. \eqref{r112}, \eqref{r113} and \eqref{r114} are, respectively, equivalent to:
\begin{eqnarray}
\alpha_2 &\ge& \frac{(1+{|h_{1B}|}^2)(-1+{|h_{BA}|}^2)}{({|h_{1B}|}^2)(1+{|h_{BA}|}^2)},\label{ra}\\
\alpha_2 &\ge& \frac{(1+{|h_{1A}|}^2)(-1+{|h_{BA}|}^2)}{({|h_{1A}|}^2)(1+{|h_{BA}|}^2)},\label{rb}\\
\alpha_2 &\le& \frac{2(1+{|h_{1B}|}^2)(1+{|h_{BA}|}^2)-(1+{|h_{AB}|}^2)}{2({|h_{1B}|}^2)(1+{|h_{BA}|}^2)}.\label{rc}
\end{eqnarray}
If we name $h_{1}=\min\{|h_{1A}|,|h_{1B}|\}$, and $f(x)=\frac{(1+{x}^2)(-1+{|h_{BA}|}^2)}{({x}^2)(1+{|h_{BA}|}^2)}$, we can see that $1\ge f(h_1)\ge f(|h_{1A}|), f(|h_{1A}|)$. So, $\alpha_2=f(h_{1})$ satisfies \eqref{ra} and \eqref{rb}. Also, $\alpha_2=f(h_{1})$ satisfies \eqref{rc}. This completes the proof of the Theorem.
\end{proof}

\begin{corollary} \label{re1.xx}
For a Gaussian diamond relay channel the following sum-rate is achievable by using only the strongest path in each direction
\begin{equation}
{R}_{AB}+{R}_{BA}=\max_{i\in\{1,2\}}\{\min\{\log(1+{|h_{Ai}|}^2),\log(1+{|h_{iB}|}^2)\}\}+\max_{j\in\{1,2\}}\{\min\{\log(1+{|h_{Bj}|}^2),\log(1+{|h_{jA}|}^2)\}\}-3
\end{equation}
\end{corollary}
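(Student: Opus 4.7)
The plan is to reduce the diamond channel to a collection of two-way relay channels covered by Theorem~\ref{thm_rel.g} and pay a total gap of three bits. Let $i^*\in\{1,2\}$ be the index achieving $\max_i\min\{\log(1+|h_{Ai}|^2),\log(1+|h_{iB}|^2)\}$ and let $j^*\in\{1,2\}$ be the analogous maximizer for the backward direction. I would distinguish the cases $i^*=j^*$ and $i^*\neq j^*$, since the former uses a single relay while the latter genuinely needs both.

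If $i^*=j^*$, say both equal $1$, the plan is to take the relay function $\phi_{2t}\equiv 0$ so that $R_2$ is silent. The remaining channel is precisely the two-way Gaussian relay channel treated in Theorem~\ref{thm_rel.g} with parameters $(h_{A1},h_{B1},h_{1A},h_{1B})$. Theorem~\ref{thm_rel.g} delivers one direction within $1$ bit and the other within $2$ bits of its cut-set bound, so the sum-rate gap is at most $3$ bits and the corollary statement follows immediately in this case.

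If $i^*\neq j^*$, say $i^*=1$ and $j^*=2$, I would run the Theorem~\ref{thm_rel.g} lattice-plus-Gaussian construction concurrently at the two relays, with $R_1$ oriented towards forwarding $W_A$ and $R_2$ towards forwarding $W_B$. Nodes $A$ and $B$ each transmit a single superposition of a lattice layer and an independent Gaussian layer, with the dithers and power splits chosen so that (i) at $R_1$ the lattice sum aligned with the forward bottleneck $\min\{|h_{A1}|,|h_{1B}|\}$ is decodable, in the style of \eqref{r10}--\eqref{r11}, and (ii) at $R_2$ the symmetric condition holds for the backward bottleneck $\min\{|h_{B2}|,|h_{2A}|\}$. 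Each relay then forwards its decoded lattice sum on the downlink, and each destination cancels its own known codeword before decoding, exactly as in the downlink phase of the proof of Theorem~\ref{thm_rel.g}.

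The main obstacle is the cross-hop interference in the case $i^*\neq j^*$: because each source emits one physical symbol that is overheard by \emph{both} relays, one cannot independently tune $A$'s signal for $R_1$ and $R_2$. The crux is to show that the decoding inequalities analogous to \eqref{r10}--\eqref{r113} remain simultaneously satisfiable at both relays under a common unit power constraint at each source, and that the leakage of $A$'s transmission at $R_2$ and of $B$'s at $R_1$ is absorbed as self-interference that is cancelled at the intended destination using its own message, so that the accumulated loss stays inside the $3$-bit budget. Once these two conditions are verified, combining them with the downlink gap analysis of Theorem~\ref{thm_rel.g} yields the claimed sum-rate.
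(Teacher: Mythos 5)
Your case split is the right way to read this corollary, and your first branch ($i^*=j^*$: silence the other relay and invoke Theorem~\ref{thm_rel.g}, paying $1+2=3$ bits on the sum) is exactly the paper's argument -- the paper's entire proof is the single sentence ``It follows from Theorem~\ref{thm_rel.g},'' which implicitly assumes that a single relay carries the strongest path in both directions. Note that the only place the paper actually uses the corollary (the comparison with the scheme of the reference in the remark that follows) is a fully symmetric setting where the two maximizers necessarily coincide, so the paper never has to confront your second case.

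The problem is that your second branch ($i^*\neq j^*$) is not a proof: you name the obstacle (each source emits one physical waveform overheard by both relays, and each destination hears a superposition of both relay transmissions containing lattice sums it cannot fully cancel) and then assert that the decoding inequalities analogous to \eqref{r10}--\eqref{r13} ``remain simultaneously satisfiable'' within the 3-bit budget. That assertion is the entire mathematical content of this case, and it is not obviously true: the power split $\alpha_1$ and the scaling of $c_B$ in the proof of Theorem~\ref{thm_rel.g} are tuned so that the two lattice codewords arrive at \emph{one} relay with matched powers, and there is no degree of freedom left to simultaneously match powers at the second relay for the reverse-direction lattice; moreover, at node $B$ the downlink signal from $R_{j^*}$ encodes a sum involving $W_A$'s reverse-link layer, which node $B$ does not know and hence cannot simply cancel as self-interference. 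Either restrict the corollary to the case where the same relay maximizes both directions (which suffices for every use the paper makes of it), or supply the missing verification: write out the two uplink decoding constraints and the two downlink constraints explicitly for the concurrent scheme, exhibit a feasible choice of power splits and dithers, and bound the total loss by 3 bits. As written, the claim for $i^*\neq j^*$ is a conjecture, not a consequence of Theorem~\ref{thm_rel.g}.
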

\begin{proof}
It follows from Theorem \ref{thm_rel.g}.
\end{proof}

\begin{remark}
We used a strategy similar to Relay Strategy 0 which was introduced for the deterministic channel in Section \ref{neitherr}. The transmitter of the direction with higher rate divides its power for two signals. It sends a signal that combined with the signal received from the other transmitter forms a lattice code at the relay, and the rest of the power is allocated to the other signal. Then the relay performs a reverse amplify and sends the lattice code on a higher power which will be decoded by both receivers and the other signal on a lower power which will be decoded only by one of the receivers.
\end{remark}

\subsection{Results for Two-Way Gaussian Diamond Model}

In this subsection, we will give an achievability scheme for a symmetric case of two-way Gaussian diamond channel with parameters $(h_{A1},h_{A2},h_{1B},h_{2B})=(h_{B1},h_{B2},h_{1A},h_{2A})=(a,b,c,d)$ (Figure \ref{fig:dosar}) that satisfy $\log(1+{|a|}^2) \ge \log(1+{|c|}^2)+\log(1+{|b|}^2)$ and $\log(1+{|d|}^2)\ge \log(1+{|c|}^2)+\log(1+{|b|}^2)$, which achieves the capacity region within a constant number of bits in each direction as in the following theorem.

\begin{figure}[htbp]
\centering
	\includegraphics[width=15cm]{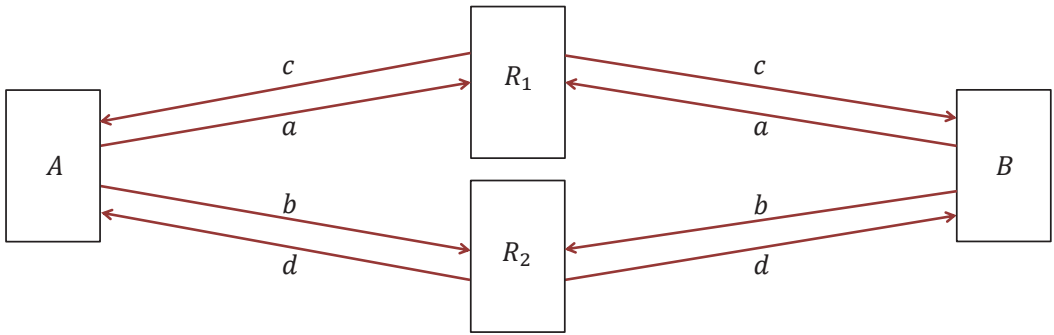}
\caption{The investigated reciprocal two-way Gaussian diamond channel.}
\label{fig:dosar}
\end{figure}

\begin{theorem}\label{thm_detgg}
For the  two-way symmetric Gaussian diamond channel in Figure \ref{fig:dosar} that satisfy $\log(1+{|a|}^2) \ge \log(1+{|c|}^2)+\log(1+{|b|}^2)$ and $\log(1+{|d|}^2)\ge \log(1+{|c|}^2)+\log(1+{|b|}^2)$, the capacity region is outer-bounded by the following region
\begin{eqnarray}
{R}_{AB}, R_{BA}&\le&\min\{\log(1+{|a|}^2+{|b|}^2),\log(1+{|c|}^2+{|d|}^2),\nonumber\\
&&\log(1+{|c|}^2)+\log(1+{|b|}^2),\log(1+{|d|}^2)+\log(1+{|a|}^2)\}\nonumber\\ &=&\log(1+{|c|}^2)+\log(1+{|b|}^2).\label{r-d1ggg}
%R_{BA}&\le&\min\{\log(1+{|h_{B1}|}^2+{|h_{B2}|}^2),\log(1+{|h_{1A}|}^2+{|h_{2A}|}^2),\nonumber\\
%&&\log(1+{|h_{1A}|}^2)+\log(1+{|h_{B2}|}^2),\log(1+{|h_{2A}|}^2)+\log(1+{|h_{B1}|}^2)\}\nonumber\\
%&=&\log(1+{|h_{1A}|}^2)+\log(1+{|h_{B2}|}^2).\label{r-d2ggg}
\end{eqnarray}
Further, this region is achievable within 4 bits in each direction.
\end{theorem}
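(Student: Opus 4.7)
The proof has two parts. For the outer bound, I would apply the cut-set bound to the four canonical cuts separating $A$ from $B$, obtaining exactly the four expressions in \eqref{r-d1ggg}; under the hypothesis $|a|^2,|d|^2\ge(1+|b|^2)(1+|c|^2)-1$ one checks by elementary algebra that $\log(1+|a|^2+|b|^2)$, $\log(1+|c|^2+|d|^2)$ and $\log(1+|a|^2)+\log(1+|d|^2)$ all dominate $\log(1+|b|^2)+\log(1+|c|^2)$, so the latter is the binding cut, and the reciprocal symmetry gives the same bound for $R_{BA}$.

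For achievability, the plan is to run two parallel lattice two-way sub-channels, one through each relay, mirroring the two-path reverse amplify-and-forward structure of the deterministic Case~1 scheme (Section \ref{neitherr}) and lifting the single-relay lattice construction of Theorem \ref{thm_rel.g} to both relays in parallel. Each node splits its message into an ``$H$''-substream routed through $R_1$ with target rate $\log(1+|c|^2)$ and an ``$L$''-substream routed through $R_2$ with target rate $\log(1+|b|^2)$, and encodes each via an independent nested lattice pair $(\Lambda_c^{(k)},\Lambda_f^{(k)})$ as in Theorem \ref{thm_rel.g}. With dithers $d_{u,k}$ and $c_{u,k}=[s_{u,k}-d_{u,k}]\bmod\Lambda_c^{(k)}$, each node transmits
\begin{equation}
X_u \;=\; \sqrt{P_H}\,c_{u,H}+\sqrt{P_L}\,c_{u,L},\qquad P_H=|c|^2/|a|^2,\ P_L=1-P_H,\ u\in\{A,B\}.
\end{equation}
The allocation is chosen so that the $H$-stream is attenuated at $R_2$ (since $|c|^2/|a|^2$ is small by hypothesis) while both streams remain decodable at $R_1$ via successive decoding.

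At $R_1$ I would first decode the $L$-lattice-sum $[c_{A,L}+c_{B,L}]\bmod\Lambda_c^{(L)}$ treating the $H$-stream as Gaussian noise, then subtract it and decode the $H$-sum; the hypothesis on $|a|^2$ supplies the SINR needed for both, up to a constant-bit loss. At $R_2$ I would decode only the $L$-sum, treating the $H$-stream as noise, which succeeds because $P_H$ is small. Each relay then forwards its decoded lattice sum using the Gaussian structured binning of Theorem \ref{thm_rel.g}. At node $B$, the received signal $cX_{R_1}+dX_{R_2}+Z_B$ is handled by first peeling off the $L$-sum from $R_2$ treating the $R_1$ contribution as noise (the hypothesis on $|d|^2$ gives $|d|^2/(1+|c|^2)\ge|b|^2$, so the target rate is supported without loss), then subtracting and decoding the $H$-sum from $R_1$; finally $B$ cancels its own dithered codewords from each recovered lattice sum to obtain $A$'s substreams. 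The backward direction is handled identically by the channel's full symmetry.

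The main obstacle will be the gap accounting. Four sources of constant loss must be tallied: (i)~the standard one-bit lattice-vs-Shannon gap at each of $R_1$'s two successive-decoding steps; (ii)~the one-bit lattice gap at $R_2$'s decoding; (iii)~an extra constant slack between the clean hypothesis $|a|^2\ge(1+|b|^2)(1+|c|^2)-1$ and the SINR condition $|a|^2\gtrsim|b|^2(1+2|c|^2)$ actually required for successive decoding; and (iv)~the lattice-to-Shannon gap at $B$'s decoding of the $R_1$ sum. Verifying that these losses sum to at most four bits per direction uniformly in $(a,b,c,d)$ over the admissible region, mirroring the chain of inequalities \eqref{r110}--\eqref{rc} in the proof of Theorem \ref{thm_rel.g}, is the main technical content that completes the argument.
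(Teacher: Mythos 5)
This is essentially the paper's own proof: the same split of each message into two lattice substreams with only a small power allocated to the substream carried over the strong-uplink/weak-downlink path, the same compute-and-forward decoding ($R_1$ recovers both lattice sums by successive cancellation while $R_2$ recovers only the sum bottlenecked by the weak uplink $b$), the same cross-forwarding with structured binning so that the rate-$\log(1+{|b|}^2)$ sum rides the strong downlink $d$, the same successive decoding and self-interference cancellation at the destination, and the same four sources of constant loss. The only difference is the exact power split (your $P_H={|c|}^2/{|a|}^2$ versus the paper's $1-\alpha_2^2=1/(1+{|b|}^2)$, which are comparable under the stated hypotheses), and the tally you defer is precisely the paper's chain of lower bounds on the right-hand sides of \eqref{g2}--\eqref{g5}, which sums to at most four bits per direction.
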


\begin{proof}
The above outer-bound results from the cut-set bound.
%In this case, the upper bound of the forward channel is $\log(1+h_{1B}^2)+\log(1+h_{A2}^2)$ and the upper bound of the backward channel is $\log(1+h_{1A}^2)+\log(1+h_{B2}^2)$.
Take $W_A$ and $W_B$ as the messages of the nodes $A$ and $B$, respectively, that they want to convey to the other node. We divide the messages $W_A$ and $W_B$ into two parts, as $W_A=(W_{A1}, W_{A2})$ and $W_B=(W_{B1}, W_{B2})$.

In order to encode $W_{Ai}$ and $W_{Bi}$, where $i\in \{1, 2\}$, we use the common lattice code $\Lambda^i$ defined in the last subsection. Let $s_{A_i}$ be the lattice codeword to which $W_{Ai}$ is mapped and $s_{B_i}$ be the lattice codeword to which $W_{Bi}$ is mapped, and define $X_{L_i}=s_{A_i}+s_{B_i}$.

Once the encoding process is performed, the signal transmitted by $u$ is formed as $X_u=\sqrt{\alpha_2} c_{u_1}+\sqrt{1-\alpha_2} c_{u_2}$ where $0\le \alpha_2\le 1$ and $c_{u_i}=[s_{u_i}-d_{u_i}]\mod\Lambda_c$, and $d_{u_i}$ is a random dither uniformly distributed over $\nu_c$, and shared between all the terminals in the network for $u\in\{A,B\}$. We also set $\alpha_2=\frac{{|b|}}{\sqrt{1+{|h_{A2}|}^2}}$ which is equivalent to ${|b|}=\frac{\alpha_2}{\sqrt{1-\alpha_2}}$. The rate of $s_{A_1}$ and $s_{B_1}$ is being shown by $R_u$ and the rate of $s_{A_2}$ and $s_{B_2}$ is being shown by $R_v$. So, $R_{AB}=R_{BA}=R_{u}+R_{v}$. We can decode both $s_{A_1}$ and $s_{A_2}$ in $B$ and $s_{B_1}$ and $s_{B_2}$ in $A$ with the following strategy.

Recall that $X_{L_i}=[s_{A_i}+s_{B_i} \mod \Lambda_c]=[c_{A_i}+c_{B_i}+(d_{A_i}+d_{B_i}) \mod \Lambda_c] \in \Lambda^i$. $R_1$ can decode $X_{L_1}$ and $X_{L_2}$ by successive interference cancellation if \eqref{g2}-\eqref{g3} holds.

\begin{eqnarray}
R_u&\le&\log\left(\frac{\alpha^2 {|a|}^2}{2(1-\alpha^2) {|a|}^2+1}\right),\label{g2}\\
R_v&\le&\log\left({(1-\alpha^2) {|a|}^2}\right),\label{g3}
\end{eqnarray}
Also, $R_2$ decodes $X_{L1}$ considering $X_{L_2}$ as noise, as long as \eqref{g1} holds.
\begin{eqnarray}
R_u&\le&\log\left(\frac{\alpha^2 {|b|}^2}{{(1-\alpha^2) {|b|}}^2+1}\right),\label{g1}
\end{eqnarray}

Then, we use a structured binning for the transmission from the relays to the nodes $A$ and $B$. We generate $2^{nR_i}$ $n$-sequences with each element i.i.d. according to $\mathbf{CN}(0,1)$, for $i\in\{u,v\}$. These sequences form a codebook $\Lambda^i_{R}$. We assume one-to-one correspondence between each $t\in \Lambda_{A_i}$ and a codeword $X_{R_j} \in \Lambda^i_{R}$. To make this correspondence explicit, we use the notation $X_{R_j}(t)$ for the $R_j$.

After $R_1$ decodes $\hat X_{L_1}$ and $\hat X_{L_2}$, and $R_2$ decodes $\hat X_{L_1}$, for $i, j \in \{1, 2\}$, $i \neq j$, $R_j$ transmits $X_{R_j}(\hat X_{L_i})$ at the next block to nodes $A$ and $B$. $\hat X_{L_i}$ is uniform over $\Lambda_{A_i}$, and, thus, $X_{R_j}(\hat X_{L_i})$ is also uniformly chosen from $\Lambda^i_R$. Then, $R_j$ sends out $X_{R_j}$.

The node $B$ decodes $X_{L_1}$ with low probability of error as long as \eqref{g4} holds by considering $X_{L_2}$ as noise. Then it decodes $X_{L_2}$ with low probability of error as long as \eqref{g5} holds.
\begin{eqnarray}
R_u&\le&\log\left(1+\frac{{|d|}^2}{{|c|}^2}\right),\label{g4}\\
R_v&\le&\log\left(1+{|c|}^2\right).\label{g5}
\end{eqnarray}

Now, we show all the bounds \eqref{g2}-\eqref{g5} satisfy the four bit gap to the outer bounds as in the statement of the theorem.
\begin{eqnarray}
{\text {RHS of \eqref{g2}:}} && \log\left(\frac{\alpha^2 {|a|}^2}{2(1-\alpha^2) {|a|}^2+1}\right)\nonumber\\
&\ge&\log\left(\frac{\alpha^2 {|a|}^2}{3(1-\alpha^2) {|a|}^2}\right)\nonumber\\
&=&\log\left(\frac{\alpha^2 {|a|}^2}{(1-\alpha^2) {|a|}^2}\right)-\log 3 \nonumber\\
&=&\log\left({|b|}^2\right)-\log 3 \nonumber\\
&\ge&\log\left(1+{|b|}^2\right)-1-\log 3.\\
{\text {RHS of \eqref{g3}:}} &&\log\left({(1-\alpha^2) {|a|}^2}\right)\nonumber\\
&\ge&\log\left({\frac{1}{1+{|b|}^2} {|a|}^2}\right)\nonumber\\
&=&\log\left({|a|}^2\right)-\log\left(1+{|b|}^2\right)\nonumber\\
&\ge&\log\left(1+{|a|}^2\right)-\log\left(1+{|b|}^2\right)-1\nonumber\\
&\ge&\log\left(1+{|c|}^2\right)-1.\\
{\text {RHS of \eqref{g1}:}} &&\log\left(\frac{\alpha^2 {|b|}^2}{(1-\alpha^2) {|b|}^2+1}\right)\nonumber\\
&\ge&\log\left(\frac{\alpha^2 {|b|}^2}{2}\right)\nonumber\\
&=&\log\left(\alpha^2 {|b|}^2\right)-1\nonumber\\
&=&\log\left(\frac{{|b|}^4}{{|b|}^2+1}\right)-1\nonumber\\
&=&2\log\left({|b|}^2\right)-\log\left({{|b|}^2+1}\right)-1\nonumber\\
&\ge&\log\left({{|b|}^2+1}\right)-3.\\
{\text {RHS of \eqref{g4}:}} &&\log\left(1+\frac{{|d|}^2}{{|c|}^2}\right)\nonumber\\
&\ge&\log\left(1+{|d|}^2\right)-\log\left({|c|}^2\right)\nonumber\\
&\ge&\log\left(1+{|c|}^2\right)+\log\left(1+{|b|}^2\right)-\log\left({|c|}^2\right)\nonumber\\
&\ge&\log\left(1+{|b|}^2\right).
\end{eqnarray}

This shows that $R_{AB}$ can be achieved within 4 bits of the outer bound for the forward channel by symmetry. It can be also seen that we can achieve $R_{BA}$ within 4 bits of the outer bound for the backward channel.
\end{proof}

\begin{remark}
Due to the fact that this model corresponds to a special case of Case 1 (for both directions) we used a strategy similar to Relay Strategy 0, by using reverse amplify-and-forward.
\end{remark}

\begin{remark}
The authors of \cite{d3} considered a reciprocal two-way diamond channel, and gave an achievable rate region. For the case of $h_i={|h_{Ai}|}={|h_{iA}|}={|h_{Bi}|}={|h_{iB}|}$, for $i\in \{1, 2\}$, with two relay nodes, the achievable sum-rate in \cite{d3} is given by $R_{AB}+R_{BA}=\log(\frac{1}{2}+h_1^2)+\log(\frac{1}{2}+h_2^2)$ which is not within a finite gap from the upper bound $2\log(1+h_1^2+h_2^2)$. However, our achievable region in Corollary \ref{re1.xx} with one relay node can achieve within 5 bits of the cut-set upper bound
\begin{equation}
2\log(1+\max\{h_1^2,h_2^2\})-3=2\log(2+2\max\{h_1^2,h_2^2\})-5\ge 2\log(1+h_1^2+h_2^2)-5.
\end{equation}
\end{remark}

For the case of ${|h_{A2}|}={|h_{2A}|}={|h_{B2}|}={|h_{2B}|}=50$ and ${|h_{A1}|}={|h_{1A}|}={|h_{B1}|}={|h_{1B}|}=5000$, we can see a comparison of our achievable sum-rate in Corollary \ref{re1.xx} and the achievable sum-rate  $R_{AB}+R_{BA}=\log(\frac{1}{2}+h_1^2)+\log(\frac{1}{2}+h_2^2)$ given in \cite{d3}, in Figure \ref{fig:iit}.

\begin{figure}[htbp]
\centering
	\includegraphics[width=9cm]{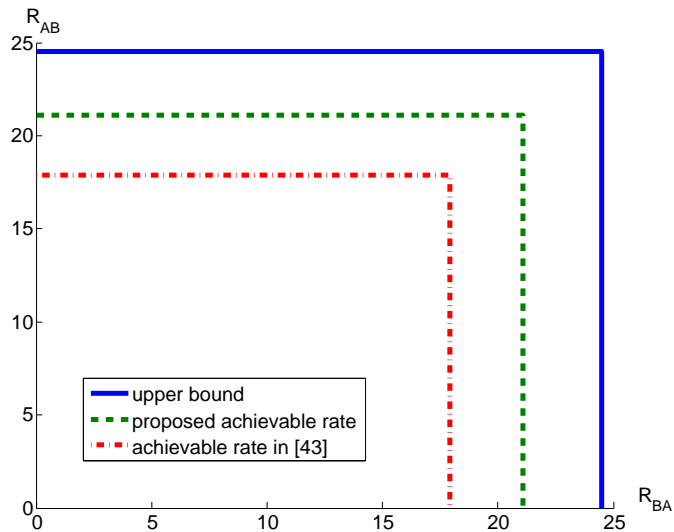}
\caption{An comparison of our results and \cite{d3} for ${|h_{A2}|}={|h_{2A}|}={|h_{B2}|}={|h_{2B}|}=50$ and ${|h_{A1}|}={|h_{1A}|}={|h_{B1}|}={|h_{1B}|}=5000$.}
\label{fig:iit}
\end{figure}

\begin{remark} For general two-way Gaussian diamond channels, an achievability scheme idea is inspired from the deterministic channel results. Each set of streams in deterministic scheme that is being transmitted together with similar interfering properties can be considered as a group. For example, for $(n_{A1},n_{A2},n_{1B},n_{2B},n_{B1},n_{B2},n_{1A},n_{2A})=(n,q,p,m,n,q,p,m)$ where $m, n \ge p, q$ we can make $A_2=[a_p,...,a_1]^T$, $A_1=[a_{p+q},...,a_{p+1}]^T$, $B_2=[b_p,...,b_1]^T$ and $B_1=[b_{p+q},...,b_{p+1}]^T$ each one as a group. $R_1$ receives $[A_1^T+B_1^T,A_2^T+B_2^T]^T$ and $R_2$ receives $[A_1^T+B_1^T]^T$. In deterministic model, we simply send them in the reverse direction but in Gaussian model it is necessary to decode the sums of the groups of received signals as lattice codes in the descending order. One difference of the deterministic model as compared to the Gaussian model is that we may need to set a distance between the groups of streams but there is not such a thing in the Gaussian model. There are some complexities due to which the general Gaussian model seems intractable. The main one is that in Gaussian case, each equivalent group of streams correspondent from deterministic model, is translated into a unique message and there is a power allocation for each one that should be optimized based on the the resulting bounds. For any set of channel parameters $(h_{A1},h_{A2},h_{1B},h_{2B},h_{B1},h_{B2},h_{1A},h_{2A})$, we can use the corresponding achievability scheme in deterministic channel with parameters $(n_{A1},n_{A2},n_{1B},n_{2B},n_{B1},n_{B2},n_{1A},n_{2A})$ by a corresponding relationship from $(\log(1+{|h_{A1}|}^2),\log(1+{|h_{A2}|}^2),\log(1+{|h_{1B}|}^2),\log(1+{|h_{2B}|}^2),\log(1+{|h_{B1}|}^2),\log(1+{|h_{B2}|}^2),\log(1+{|h_{1A}|}^2),\log(1+{|h_{2A}|}^2))$  and decode the sum of each $d$ received signals as a $d$-dimensional nested lattice code \cite{Zamirr} (instead of simply adding them), which at most causes one bit of decrease in the rate of each of the messages included in the lattice codes.
\end{remark}

Exploring the gap for the general two-way Gaussian diamond channel is a case by case analysis, and we leave that as an important next step.

\section{Conclusions}
In this paper, we studied the capacity of the  bidirectional (or two-way) diamond channel with two nodes and two relays. We used the deterministic approach to capture the essence of the problem and to determine capacity-achieving transmission and relay strategies. Depending on the forward and backward channel gains, we used either a reverse amplify-and-forward or a particular modified strategy involving repetitions, and reversing order of some streams at the relays. The proposed scheme is used to find the capacity region within a constant gap in two special cases of the Gaussian diamond channel. First, for the general two-way Gaussian relay channel a smaller gap is achieved compared to the prior works. Then, a special symmetric case of the Gaussian diamond model is considered and capacity region is achieved within 4 bits.

\section*{Acknowledgement}
The authors are very grateful to Srikrishna Bhashyam from the Indian Institute of Technology, Madras for numerous discussions and comments on the drafts that helped the paper significantly.

\begin{appendices}
\section{Transmission Strategies when diamond channel is neither Case 3.1.2 nor 4.1.2} \label{apdx_sc1}

We consider the transmission strategy for different cases as follows. %investigate the scheme by dividing into cases that was explained before:

Case 1: $C_{AB}=n_{A2}+n_{1B}$: Since we have $C_{AB}=n_{A2}+n_{1B}$, \eqref{r-d1} shows that $n_{A1}, n_{2B}\ge C_{AB}$. We send the data from $A$ as ${[a_{C_{AB}},...,a_{n_{1B}+1},\underset{n_{A1}-(n_{A2}+n_{1B})}{\underbrace{0,...,0}},a_{n_{1B}},...,a_{1}]}^T$. Node $B$ can decode all $C_{AB}$ streams as illustrated in Figure \ref{fig:Examplea}.

\begin{figure}[htbp]
\centering
	\includegraphics[width=15cm]{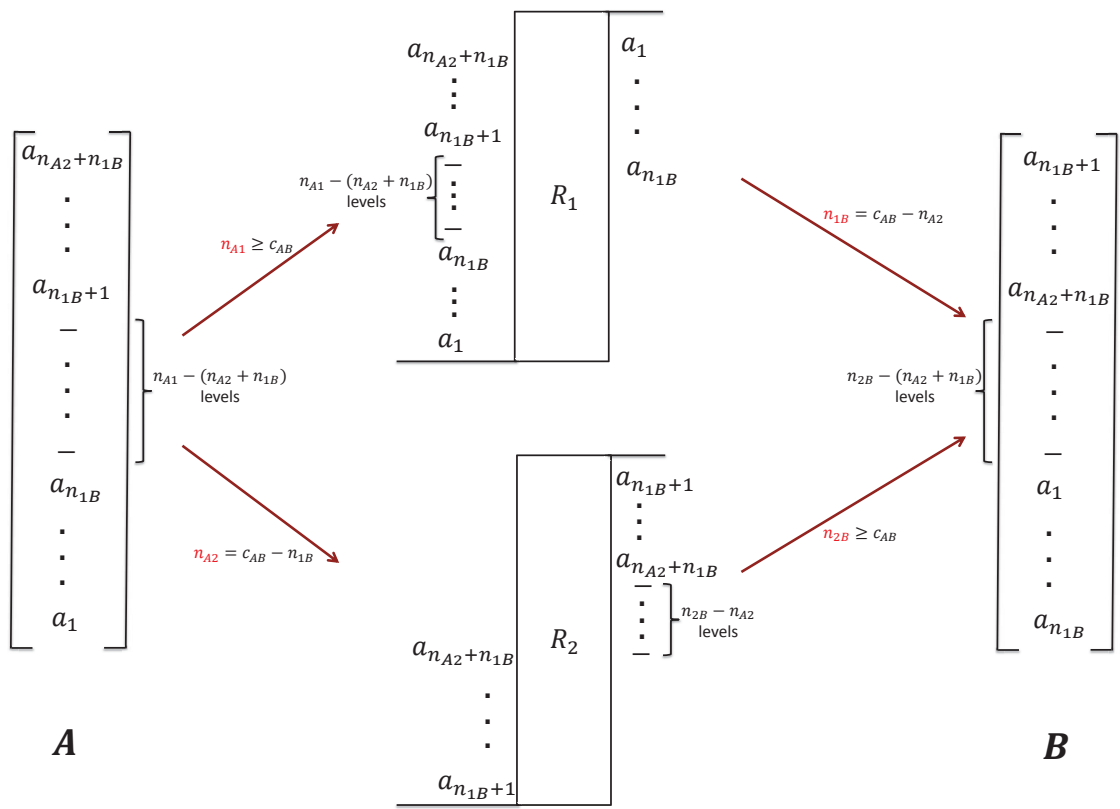}
\caption{Achievability scheme for Case 1 by using Relay Strategy 0.}
\label{fig:Examplea}
\end{figure}

Case 2: $C_{AB}=n_{A1}+n_{2B}$: We send the data from $A$ as ${[a_{C_{AB}},...,a_{n_{2B}+1},0,...,0,a_{n_{2B}},...,a_{1}]}^T$ with $n_{A2}-(n_{A1}+n_{2B})$ zeros in it.
The proof is similar to Case 1, obtained by interchanging $R_1$ and $R_2$ and is thus omitted.

Case 3: $C_{AB}=\max\{n_{A1},n_{A2}\}$: We assume that the channel is of Type 1. For Type 2 the proof is similar. We have $C_{AB}=\max\{n_{A1},n_{A2}\}=n_{A1}$. %Take $n_{A2}=n-k$, $n_{1B}\ge k$ where $0\le k<n$.

Case 3.1.1: $n_{1B}< C_{AB}$, $n_{2B}\ge n_{A2}+n_{1B}$: Since $C_{AB}=\max\{n_{A1},n_{A2}\}=n_{A1}$ and $n_{1B}< C_{AB}$, \eqref{r-d1} shows that $n_{2B}\ge C_{AB}$. Since $A$ can transmit $C_{AB}$ bits that can be heard by at least one relay, it transmits these bits, and Figure \ref{fig:Exampleb} illustrates that node $B$ can decode the data.

\begin{figure}[htbp]
\centering
	\includegraphics[width=15cm]{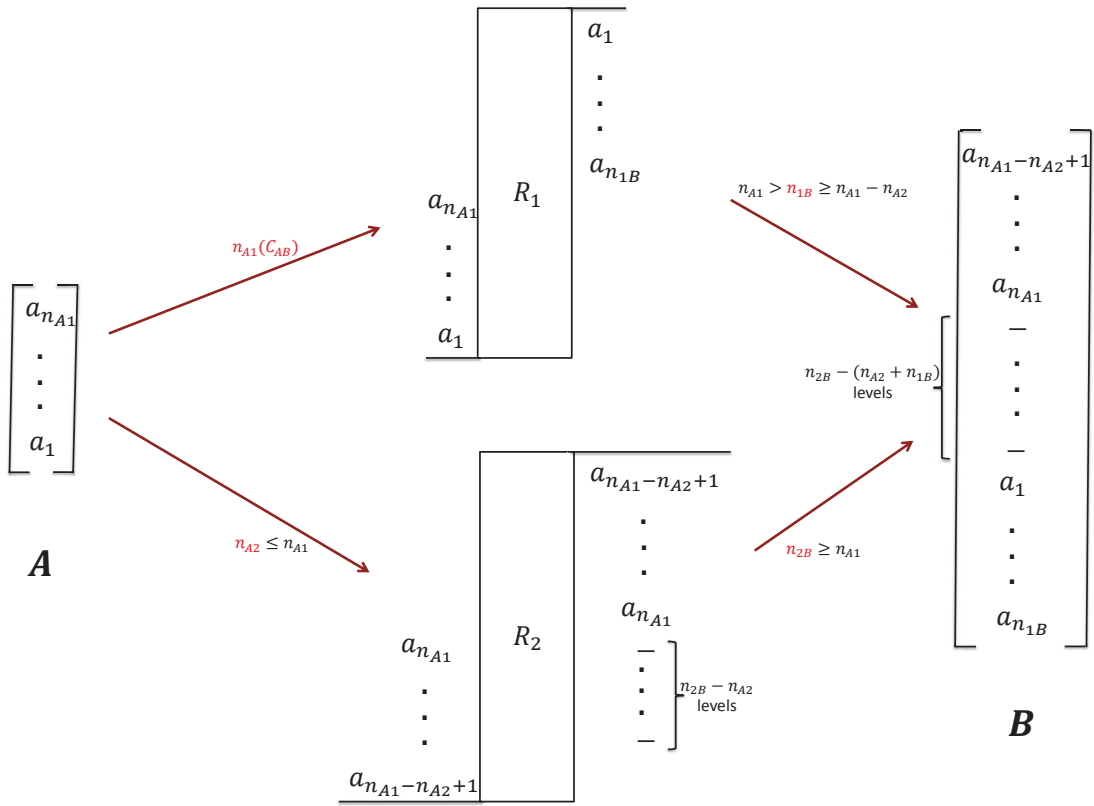}
\caption{Achievability scheme for Case 3.1.1 by using Relay Strategy 0.}
\label{fig:Exampleb}
\end{figure}

Case 3.2: $n_{1B}\ge C_{AB}$: As in Subcase 3.1.1, $A$ can only send $C_{AB}$ streams, and Figure \ref{fig:Exampled} illustrates that node $B$ is able to decode the data. For decoding, if $a_{n_{A1}-n_{A2}+1}$ received from $R_2$ is below all levels of the other relay received at node $B$, i.e., below $a_{n_{A1}}$ from $R_1$, we decode the streams from $R_1$ without interference. Otherwise, while a stream, $a_v$, is the one being added from $R_1$ to $a_{n_{A1}-n_{A2}+1}$ received from $R_2$, if $v>n_{A1}-n_{A2}+1$ we decode the streams starting from the highest level $a_1$ and then subtract them from the signal before decoding the next lower stream, and if $v<n_{A1}-n_{A2}+1$ we decode the streams starting from the lowest level, $a_{n_{A1}}$, and then subtract them from the signal before decoding the next upper stream.

%$B$ can decode all the $n$ streams as it can be seen from Figure \ref{fig:Exampled}.
\begin{figure}[htbp]
\centering
	\includegraphics[width=15cm]{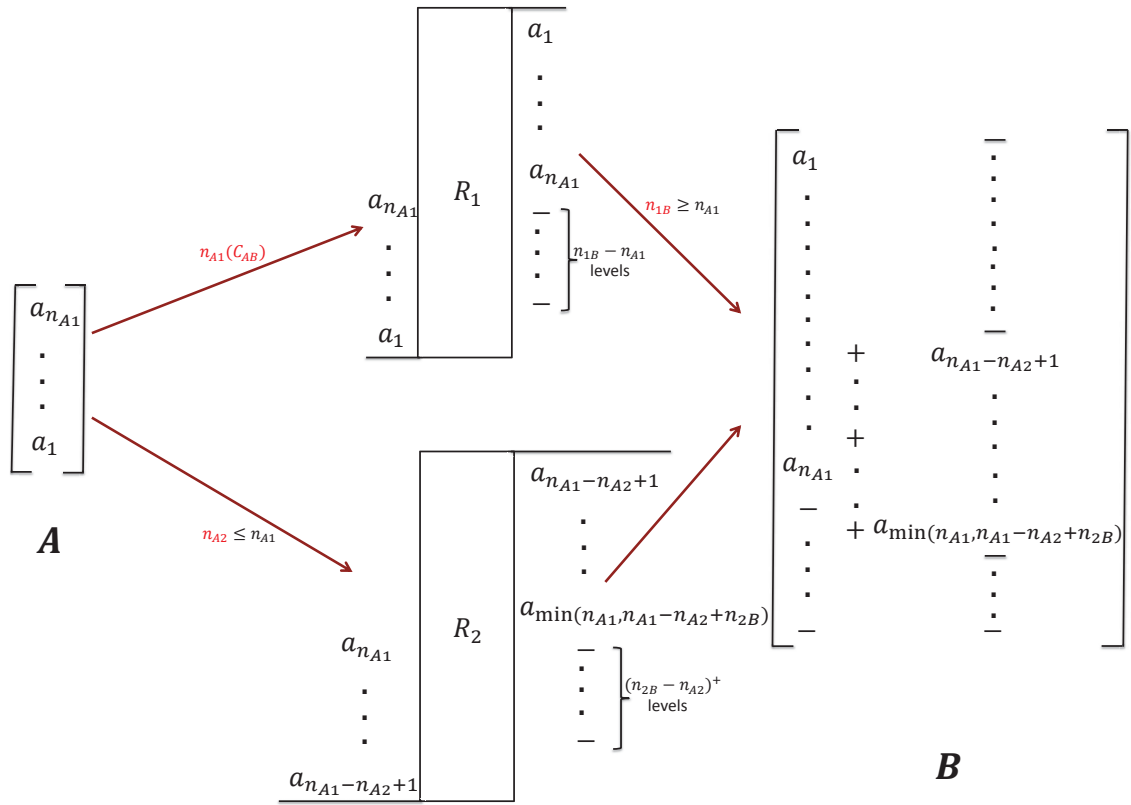}
\caption{Achievability scheme for Case 3.2 by using Relay Strategy 0.}
\label{fig:Exampled}
\end{figure}

Case 4: $C_{AB}=\max\{n_{1B},n_{2B}\}$; We assume that the channel is of Type 1. For Type 2 the proof is similar. We have $C_{AB}=\max\{n_{1B},n_{2B}\}=n_{1B}$. %Take $n_{2B}=k$, $n_{A1}\ge n-k$ where $0< k\le n$.

Case 4.1.1: $n_{A1}< C_{AB}$, $n_{A2}\ge n_{2B}+n_{A1}$: Since $C_{AB}=\max\{n_{1B},n_{2B}\}$ and $n_{A1}< C_{AB}$,  \eqref{r-d1} shows that $n_{A2}\ge C_{AB}$. Node $A$ transmits $C_{AB}$ streams as follows. It sends nothing on the highest $n_{A1}-(n_{1B}-n_{2B})$ levels, $a_{n_{1B}}$,...,$a_{n_{2B}+1}$ on the next levels, again nothing on the next $n_{A2}-(n_{A1}+n_{2B})$ levels and $a_{n_{2B}}$,...,$a_{1}$ on the next levels. $B$ can decode all $C_{AB}$ bits as illustrated in Figure \ref{fig:Examplee}. All the required streams reach $R_2$ since the number of levels is $n_{1B} + n_{A2}- n_{A1}-n_{2B}$ which is less than or equal to the number of the received stream levels, $n_{A2}$. Also ${[a_{n_{1B}},...,a_{n_{2B}+1}]}^T$ are the lowest levels at $R_1$ because there are $n_{A1}-(n_{1B}-n_{2B})$ zeros above them and they together are the highest $n_{A1}$ levels of the transmission from $A$.

\begin{figure}[htbp]
\centering
	\includegraphics[width=15cm]{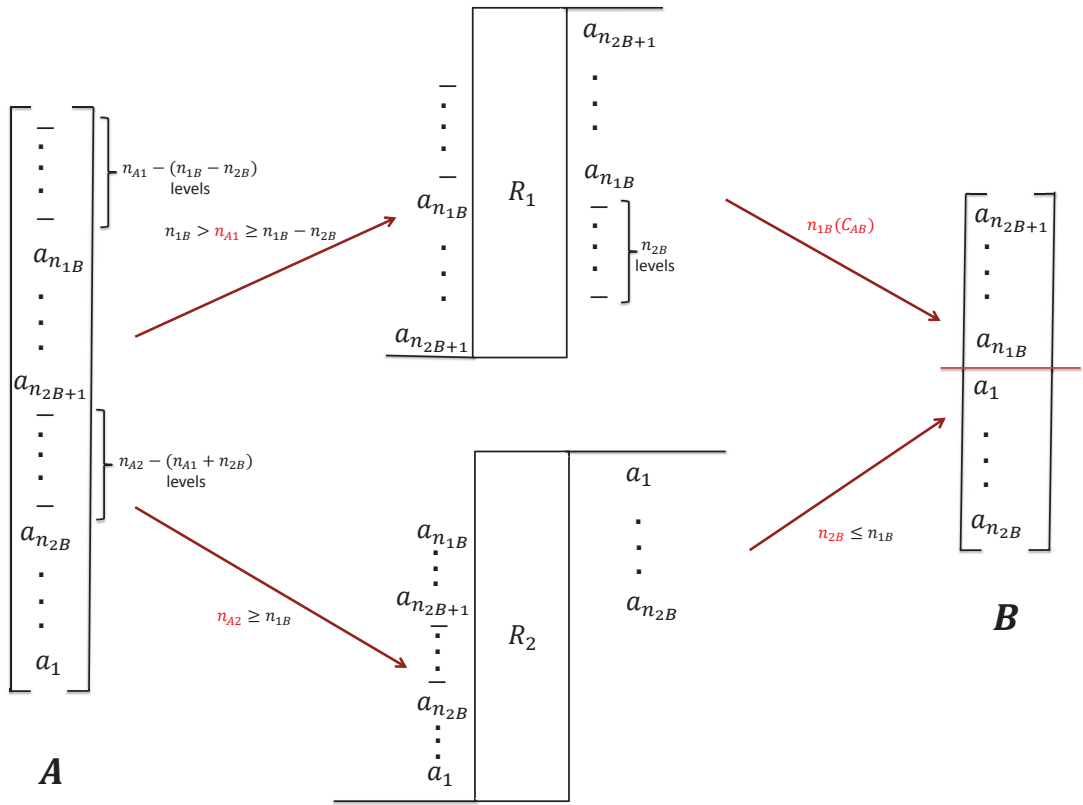}
\caption{Achievability scheme for Case 4.1.1 by using Relay Strategy 0.}
\label{fig:Examplee}
\end{figure}

Case 4.2: $n_{A1}\ge C_{AB}$: In this case, node $A$ transmits $C_{AB}$ bits on the lowest levels, and Figure \ref{fig:Exampleg} illustrates that node $B$ can decode the data. For decoding, while a stream, $a_v$, is the one being added from $R_1$ to $a_{\min\{n_{1B},n_{A1}-n_{A2}\}+1}$ received from $R_2$, if $v>\min\{n_{1B},n_{A1}-n_{A2}\}+1$ we decode the streams starting from the highest level $a_1$ and then subtract them from the signal before decoding the next lower stream, and if $v<\min\{n_{1B},n_{A1}-n_{A2}\}+1$ we decode the streams starting from the lowest level $a_{n_{1B}}$ and then subtract them from the signal before decoding the next upper stream.

\begin{figure}[htbp]
\centering
	\includegraphics[width=15cm]{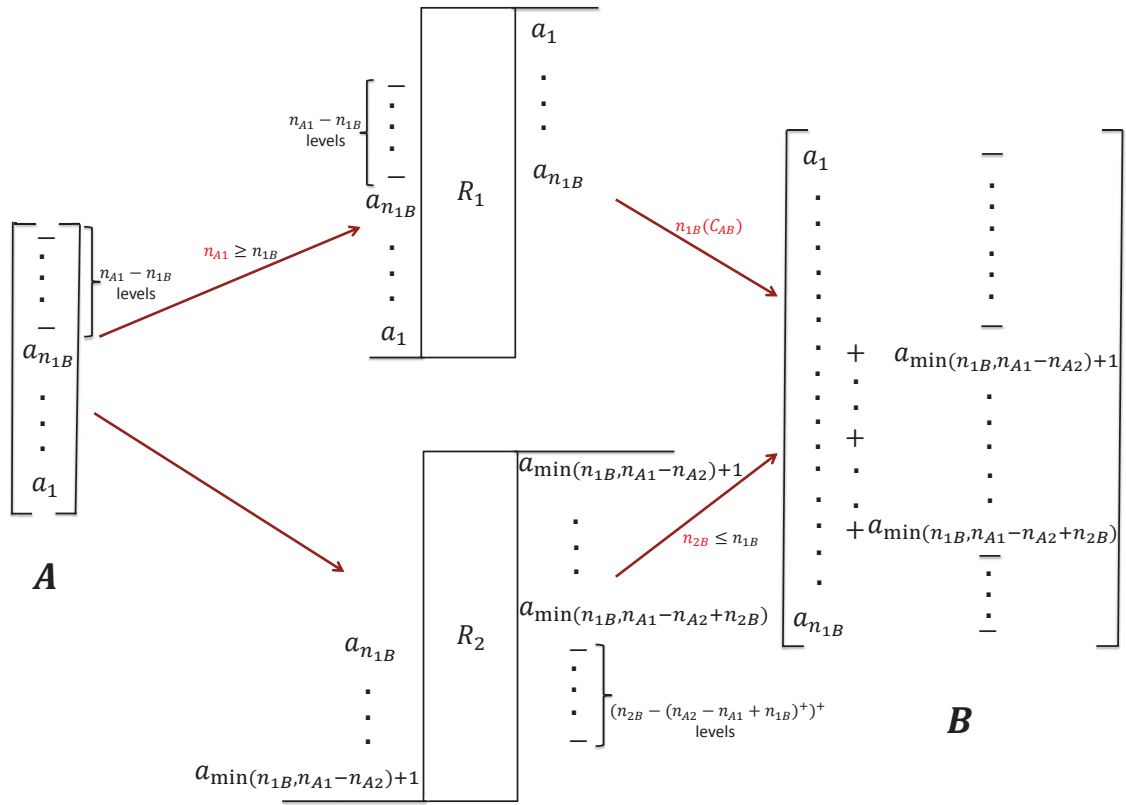}
\caption{Achievability scheme for Case 4.2 by using Relay Strategy 0.}
\label{fig:Exampleg}
\end{figure}

\section{Forward Channel is of Case 4.1.2} \label{apdx_sc2}

%$n_{A2}\le 2n_{A1}-2n_{1B}+2n_{2B}$

%$B$ can decode all the $n$ streams as it can be seen from Figure \ref{fig:Examplef}.

%\begin{figure}[htbp]
%\centering
%	\includegraphics[width=19cm]{4,1,2.pdf}
%\caption{Achievability scheme for $K=2$ and the case of $C_{AB}=\max\{n_{1B},n_{2B}\}$ for $n_{A2}-n< n_{A1}-(n-k)$.}
%\label{fig:Examplef}
%\end{figure}

%{\bf Subsubcase 4.1.3:} $2n_{A1}-2n_{1B}+2n_{2B}<n_{A2}<n_{2B}+n_{A1}$ {\bf (not solved yet)}

In this scenario, node $A$ uses the same transmission strategy as Case 4.1.1 Section \ref{neitherr} and Appendix \ref{apdx_sc1}, i.e., it transmits ${[\underset{n_{A1}-(n_{1B}-n_{2B})}{\underbrace{0,...,0}},a_{n_{1B}},...,a_{n_{2B}+1},\underset{n_{A2}-(n_{A1}+n_{2B})}{\underbrace{0,...,0}},a_{n_{2B}},...,a_{1}]}^T$ for the forward channel. Also, node $B$ uses the same strategy as in the corresponding case in Appendix \ref{apdx_sc1}. For the relay strategy, we will choose one of the four strategies below depending on the channel parameters and prove that all of these strategies are optimal for each set of parameters for the forward channel and then we will explain that at least one of these strategies is optimal for each set of parameters for the backward channel.

\subsection{\bf Relay Strategy 5:}

If the forward channel is of Case 4.1.2 Type $i$, then Relay Strategy 0 is used at $R_{\bar{i}}$, where $i,\bar{i}\in\{1,2\}, i\neq \bar{i}$, and Relay Strategy 5 is used at $R_{i}$. Here, we define Relay Strategy 5 at $R_1$ (forward channel of Case 4.1.2 Type $1$), while that for $R_2$ can be obtained by interchanging roles of $R_1$ and $R_2$ (interchanging 1 and 2 and forward channel of Case 4.1.2 Type $2$). As shown in Figure \ref{fig:rs5}, if $R_1$ receives a block of $n_{1B}$ bits, first it will reverse them as in Relay Strategy 0 and then change the order of the first $n_{2B}+n_{A1}-n_{A2}$ streams with the next $n_{1B}-n_{2B}$ streams.

\begin{figure}[htbp]
\centering
	\includegraphics[width=10cm]{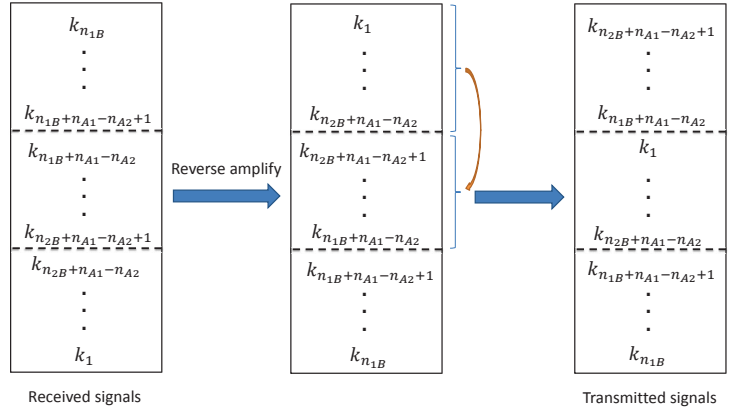}
\caption{Relay Strategy 5 at $R_1$.}
\label{fig:rs5}
\end{figure}

Node $A$ transmits ${[\underset{n_{A2}-n_{1B}}{\underbrace{0,...,0}},a_{n_{1B}},...,a_{1}]}^T$. The received signals can be seen in Figure \ref{fig:s5s}. The bits that were not delivered to node $B$ from $R_2$ ($a_{n_{2B}+1},...,a_{n_{1B}}$), are all sent in block $(R_1,B_1)$ to $B$ and thus are decoded with no interference. The remaining bits can be decoded by starting from the lowest level ($a_{n_{2B}}$ in block $(R_2,B_4)$) and removing the effect of the decoded bits.

\begin{figure}[htbp]
\centering
	\includegraphics[width=15cm]{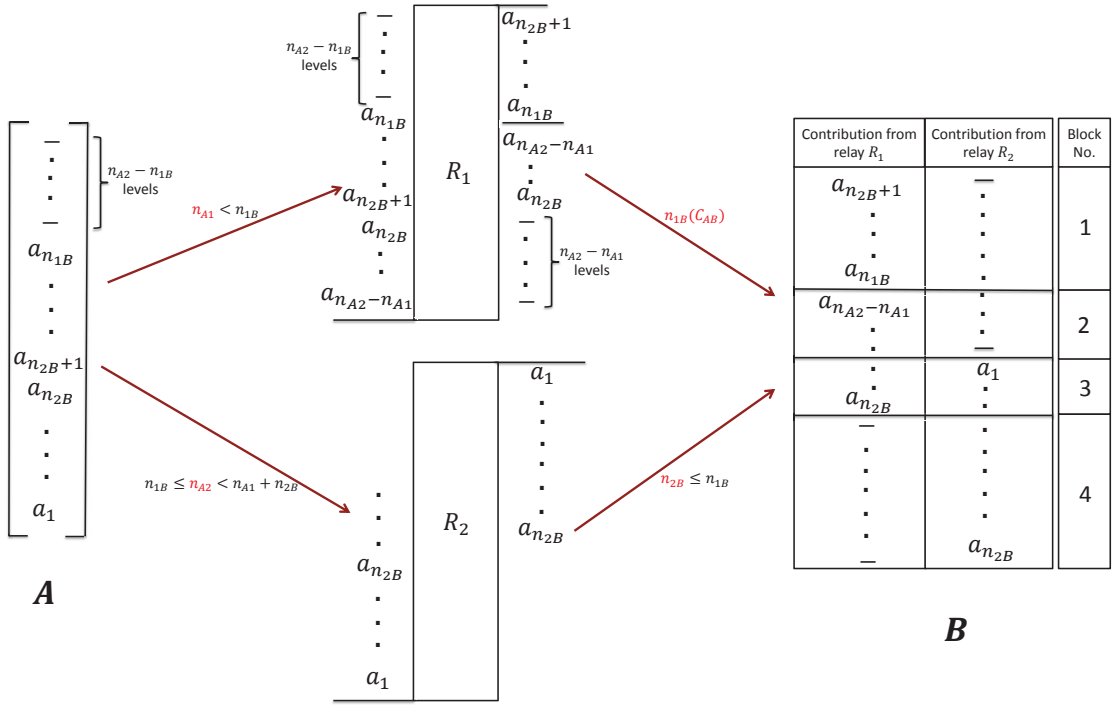}
\caption{Received signals by using Relay Strategy 5 when the forward channel is of Case 4.1.2 Type 1.}
\label{fig:s5s}
\end{figure}

\subsection{\bf Relay Strategy 6:}

If the forward channel is of Case 4.1.2 Type $i$, then Relay Strategy 0 is used at $R_{\bar{i}}$, where $i,\bar{i}\in\{1,2\}, i\neq \bar{i}$, and Relay Strategy 6 is used at $R_{i}$. Here, we define Relay Strategy 6 at $R_1$ (forward channel of Case 4.1.2 Type $1$), while that for $R_2$ can be obtained by interchanging roles of relays $R_1$ and $R_2$ (interchanging 1 and 2 and forward channel of Case 4.1.2 Type $2$). Relays work similar to Relay Strategy 0 with the only difference that $R_1$ repeats a part of the top $n_{1B}-n_{A2}+n_{A1}$ streams later too, as explained below in seven scenarios, based on the parameters of the forward channel.

%For example while $u_2$ is true, $a_{n_{A2}-n_{A1}+1},...,a_{n_{A2}-n_{A1}+n_{1B}-n_{2B}}$ which is received without interference from highest levels of reception from $R_1$, is also received from $R_2$ without any interference from $R_1$, and otherwise at least some of them have interference.

\begin{figure}[htbp]
\centering
	\includegraphics[width=10cm]{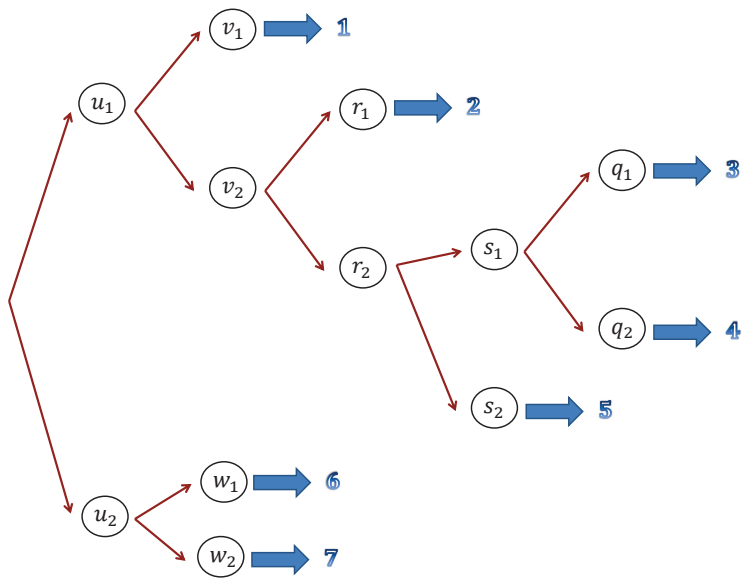}
\caption{Dividing the 4-dimensional space consisting of $(n_{A1},n_{A2},n_{1B},n_{2B})$ into seven subspaces.}
\label{fig:sup2}
\end{figure}

As shown in Figure \ref{fig:sup2}, we define the partition of the four-dimensional space $(n_{A1},n_{A2},n_{1B},n_{2B})$ into seven parts that lead to different received signal structures in node $B$, shown in Figures \ref{fig:11.1}-\ref{fig:22.2}, respectively. Specifically, $\{u_1,u_2\}=\{n_{A2}-n_{A1} \le 2(n_{A1}+n_{2B}-n_{A2}),n_{A2}-n_{A1} > 2(n_{A1}+n_{2B}-n_{A2})\}$, $\{v_1,v_2\}=\{n_{1B} > n_{A1}-n_{A2}+2n_{2B},n_{1B} \le n_{A1}-n_{A2}+2n_{2B}\}$, $\{w_1,w_2\}=\{n_{2B}+n_{A1}-n_{A2} \le n_{1B}-n_{2B},n_{2B}+n_{A1}-n_{A2} > n_{1B}-n_{2B}\}$, $\{r_1,r_2\}=\{n_{1B} \ge 2(n_{A1}+n_{2B}-n_{A2}),n_{1B} < 2(n_{A1}+n_{2B}-n_{A2})\}$, $\{s_1,s_2\}=\{n_{A2}-n_{A1}+n_{1B}-n_{2B} \le n_{A1}-n_{A2}+2n_{2B}-n_{1B},n_{A2}-n_{A1}+n_{1B}-n_{2B} > n_{A1}-n_{A2}+2n_{2B}-n_{1B}\}$ and $\{q_1,q_2\}=\{2(n_{2B}+n_{A1}-n_{A2})-n_{1B} \ge n_{A2}-n_{A1},2(n_{2B}+n_{A1}-n_{A2})-n_{1B} < n_{A2}-n_{A1}\}$.

\begin{enumerate}
  \item $(u_1,v_1)$: Figure \ref{fig:11.1} depicts the received signal at node $B$ (ignoring the effect of transmitted signal from $B$) assuming that both relays use Relay Strategy 0.
\begin{figure}[htbp]
\centering
	\includegraphics[width=8cm]{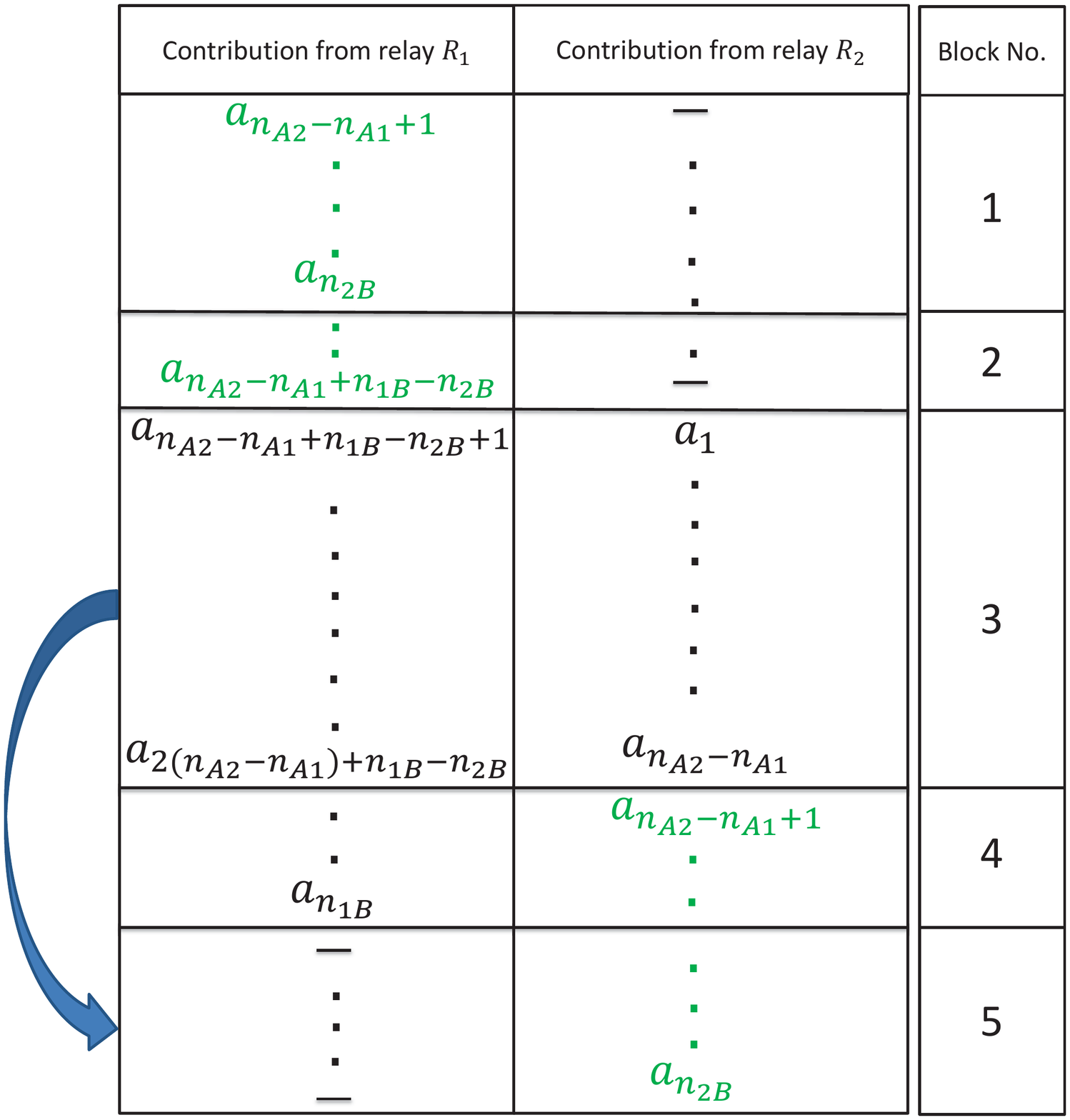}
\caption{The received signals at node $B$ (ignoring the effect of transmitted signal from $B$) assuming that both relays use Relay Strategy 0 for channel parameters of case $(u_1,v_1)$.}
\label{fig:11.1}
\end{figure}
      $R_1$ repeats block $(R_1, B_3)$ on block $(R_1, B_5)$. The decoding order is
      \begin{eqnarray}
&& {\text {decode \& subtract\ }} (R_1,B_1) \& (R_1,B_1) \rightarrow {\text {subtract\ }} (R_2,B_4) \& (R_2,B_5) \rightarrow \nonumber\\
&& {\text {decode \& subtract\ }} (R_1,B_5) \rightarrow  {\text {subtract\ }} (R_1,B_3) \rightarrow  {\text {decode \& subtract\ }} (R_2,B_3) \rightarrow \nonumber\\
&& {\text {decode \& subtract\ }} (R_1,B_4).\nonumber
\end{eqnarray}

  \item $(u_1,v_2,r_1)$: As shown in Figure \ref{fig:11.21}, 
\begin{figure}[htbp]
\centering
	\includegraphics[width=8cm]{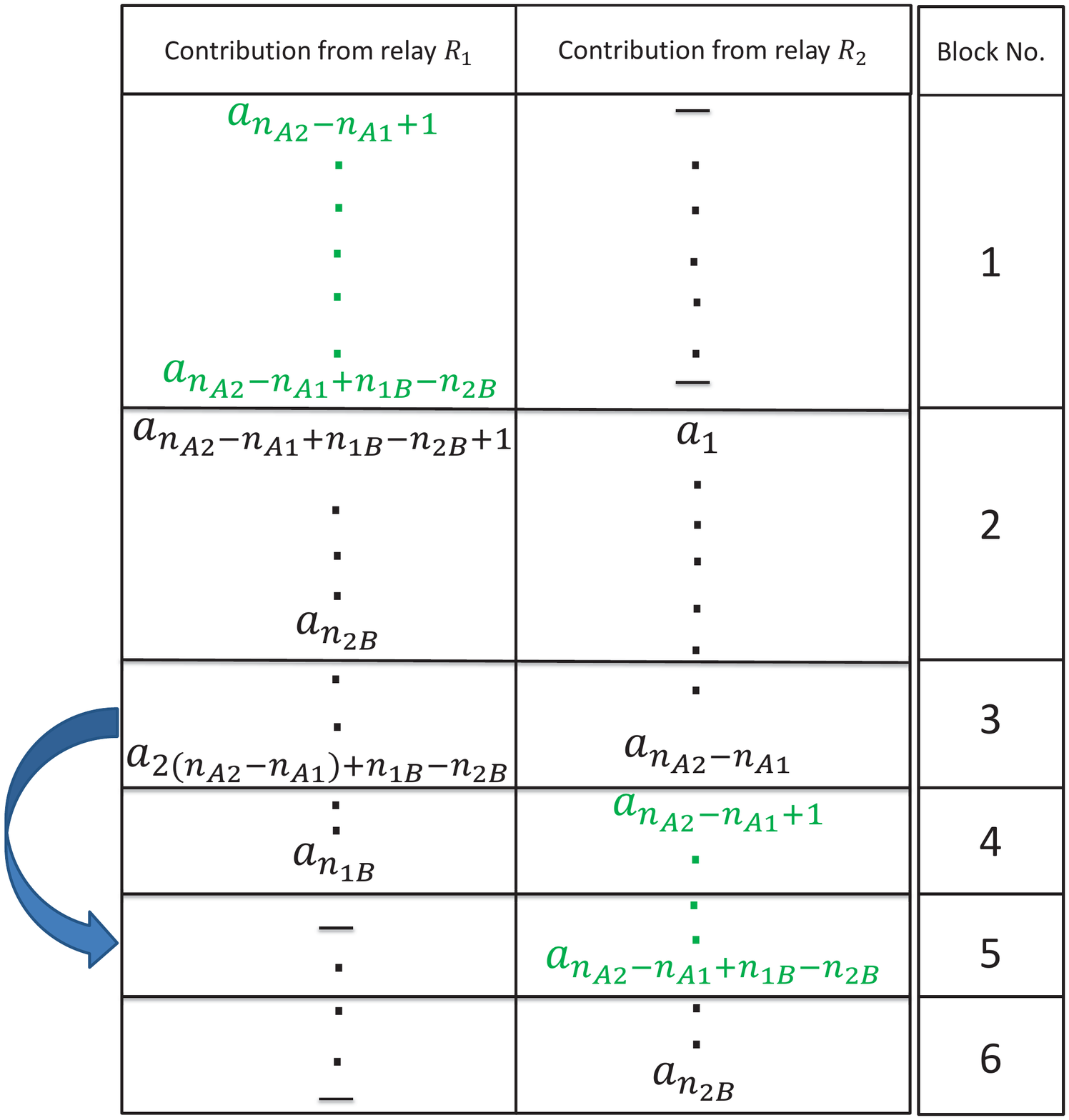}
\caption{The received signals at node $B$ (ignoring the effect of transmitted signal from $B$) assuming that both relays use Relay Strategy 0 for channel parameters of case $(u_1,v_2,r_1)$.}
\label{fig:11.21}
\end{figure}
      $R_1$ repeats block $(R_1, B_3)$ on block $(R_1, B_5)$. The decoding order is
      \begin{eqnarray}
&& {\text {decode \& subtract\ }} (R_1,B_1) \rightarrow {\text {subtract\ }} (R_2,B_4) \& (R_2,B_5) \rightarrow {\text {decode \& subtract\ }} (R_1,B_5) \rightarrow \nonumber\\
&& {\text {subtract\ }} (R_1,B_3) \rightarrow  {\text {decode \& subtract\ }} (R_2,B_6) \rightarrow
{\text {subtract\ }} (R_1,B_2) \rightarrow \nonumber\\
&& {\text {decode \& subtract\ }} (R_2,B_2)\& (R_2,B_3) \rightarrow {\text {decode \& subtract\ }} (R_1,B_4).\nonumber
\end{eqnarray}

  \item $(u_1,v_2,r_2,s_1,q_1)$: As shown in Figure \ref{fig:11.22.1.a}, 
\begin{figure}[htbp]
\centering
	\includegraphics[width=8cm]{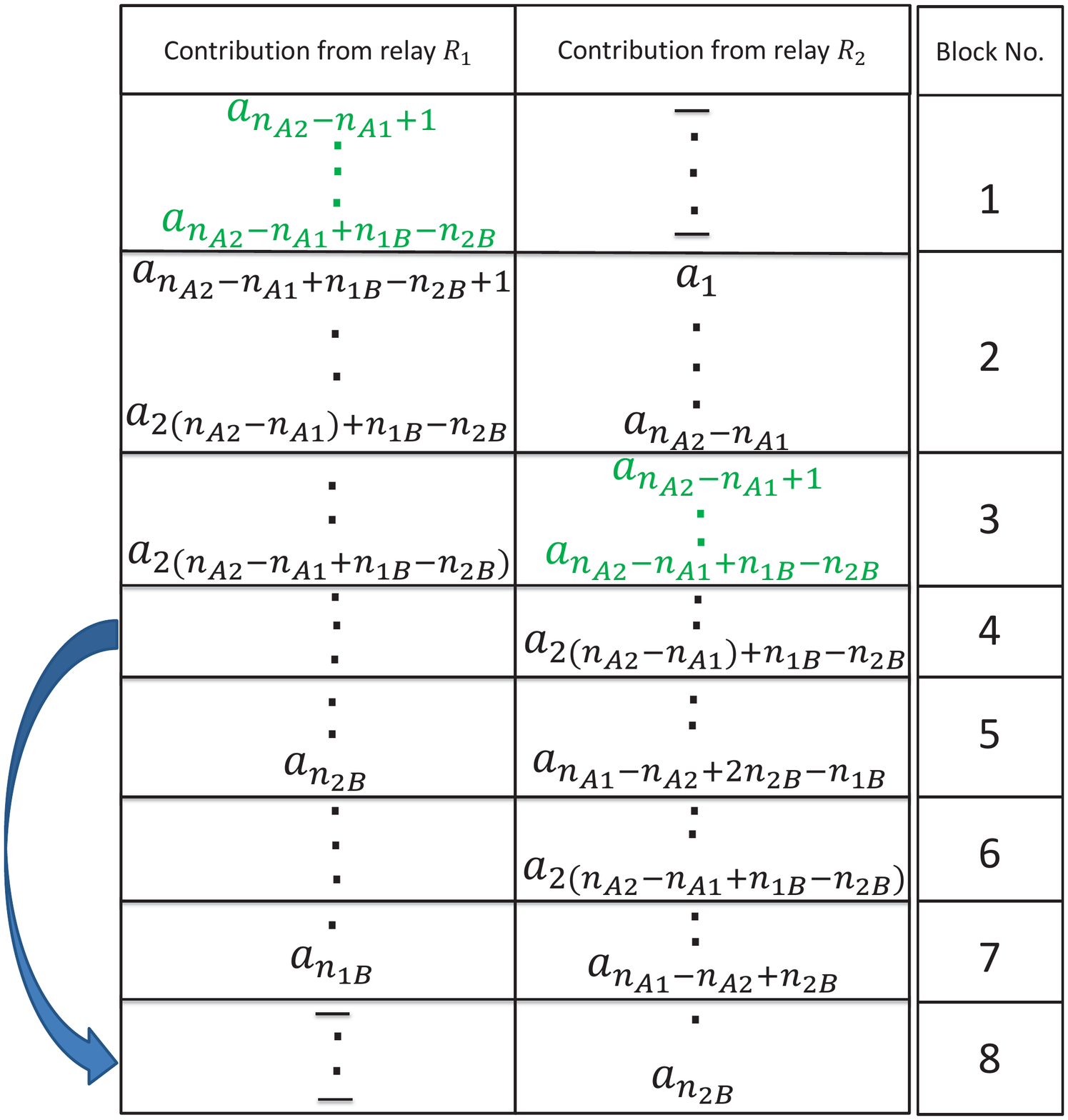}
\caption{The received signals at node $B$ (ignoring the effect of transmitted signal from $B$) assuming that both relays use Relay Strategy 0 for channel parameters of case $(u_1,v_2,r_2,s_1,q_1)$.}
\label{fig:11.22.1.a}
\end{figure}
      $R_1$ repeats block $(R_1, B_4)$ on block $(R_1, B_8)$. The decoding order is
      \begin{eqnarray}
&& {\text {decode \& subtract\ }} (R_1,B_1) \rightarrow {\text {subtract\ }} (R_2,B_3) \rightarrow {\text {decode \& subtract\ }} (R_1,B_3) \rightarrow \nonumber\\
&& {\text {subtract\ }} (R_2,B_5)\& (R_2,B_6) \rightarrow {\text {decode \& subtract\ }} (R_1,B_5) \& (R_1,B_6) \rightarrow \nonumber\\
&& {\text {decode \& subtract\ }} (R_1,B_8) \rightarrow
{\text {subtract\ }} (R_2,B_7)\& (R_2,B_8) \rightarrow {\text {decode \& subtract\ }} (R_1,B_7) \rightarrow \nonumber\\
&&{\text {subtract\ }} (R_1,B_4) \rightarrow  {\text {decode \& subtract\ }} (R_2,B_4) \rightarrow {\text {subtract\ }} (R_1,B_2) \rightarrow \nonumber\\
&& {\text {decode \& subtract\ }} (R_2,B_2).\nonumber
\end{eqnarray}

  \item $(u_1,v_2,r_2,s_1,q_2)$: As shown in Figure \ref{fig:11.22.1.b},
\begin{figure}[htbp]
\centering
	\includegraphics[width=8cm]{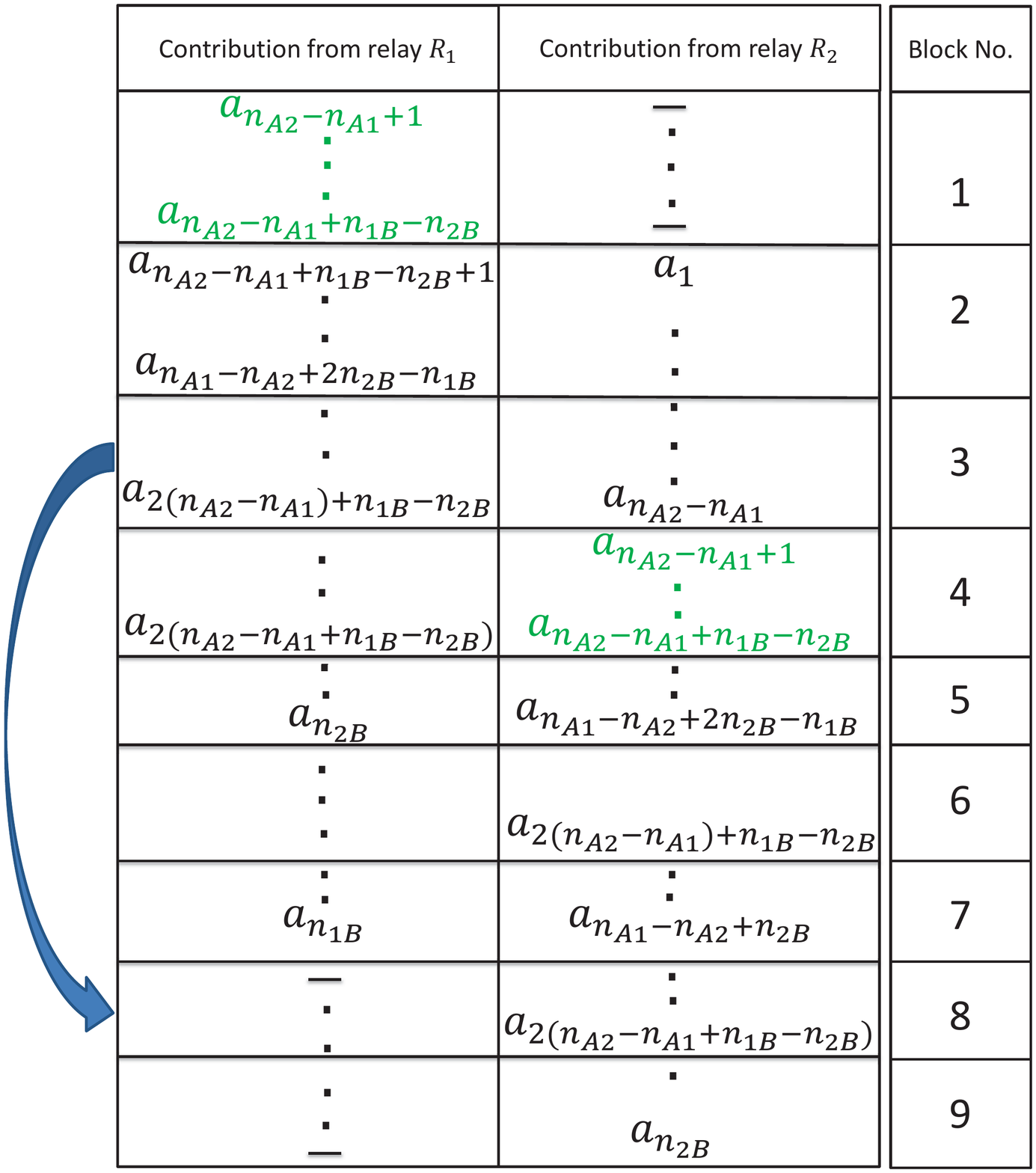}
\caption{The received signals at node $B$ (ignoring the effect of transmitted signal from $B$) assuming that both relays use Relay Strategy 0 for channel parameters of case $(u_1,v_2,r_2,s_1,q_2)$.}
\label{fig:11.22.1.b}
\end{figure}
      $R_1$ repeats block $(R_1, B_3)$ on block $(R_1, B_8)$. The decoding order is
      \begin{eqnarray}
&& {\text {decode \& subtract\ }} (R_1,B_1) \rightarrow {\text {subtract\ }} (R_2,B_4) \rightarrow {\text {decode \& subtract\ }} (R_1,B_4) \rightarrow \nonumber\\
&& {\text {subtract\ }} (R_2,B_7)\& (R_2,B_8) \rightarrow {\text {decode \& subtract\ }} (R_1,B_7) \& (R_1,B_8) \rightarrow \nonumber\\
&& {\text {subtract\ }} (R_1,B_3)\& (R_2,B_6)\rightarrow {\text {decode \& subtract\ }} (R_2,B_9) \rightarrow {\text {subtract\ }} (R_1,B_5) \rightarrow \nonumber\\
&&{\text {decode \& subtract\ }} (R_2,B_5)\rightarrow {\text {subtract\ }} (R_1,B_2) \rightarrow {\text {decode \& subtract\ }} (R_2,B_2) \& (R_2,B_3) \rightarrow \nonumber\\
&& {\text {decode \& subtract\ }} (R_1,B_6).\nonumber
\end{eqnarray}

  \item $(u_1,v_2,r_2,s_2)$: As shown in Figure \ref{fig:11.22.2},
\begin{figure}[htbp]
\centering
	\includegraphics[width=8cm]{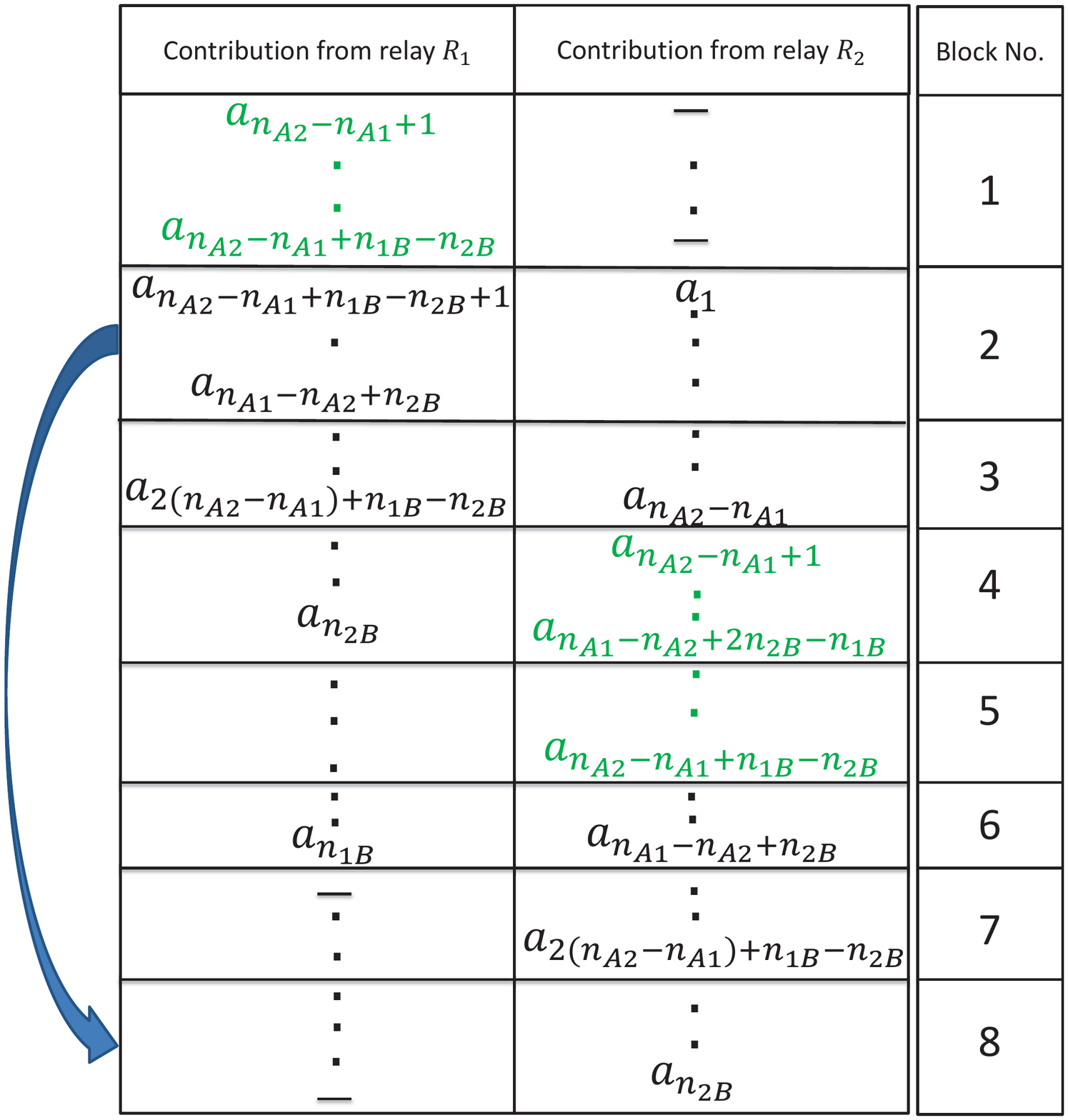}
\caption{The received signals at node $B$ (ignoring the effect of transmitted signal from $B$) assuming that both relays use Relay Strategy 0 for channel parameters of case $(u_1,v_2,r_2,s_2)$.}
\label{fig:11.22.2}
\end{figure}
      $R_1$ repeats block $(R_1, B_2)$ on block $(R_1, B_8)$. The decoding order is
      \begin{eqnarray}
&& {\text {decode \& subtract\ }} (R_1,B_1) \rightarrow {\text {subtract\ }} (R_2,B_4) \& (R_2,B_5) \rightarrow \nonumber\\
&&{\text {decode \& subtract\ }} (R_1,B_4)\&(R_1,B_5) \rightarrow  {\text {subtract\ }} (R_2,B_8) \rightarrow {\text {decode \& subtract\ }} (R_1,B_8) \rightarrow \nonumber\\
&& {\text {subtract\ }} (R_1,B_2)\& (R_2,B_6)\rightarrow {\text {decode \& subtract\ }} (R_2,B_7) \rightarrow {\text {subtract\ }} (R_1,B_3) \rightarrow \nonumber\\
&&{\text {decode \& subtract\ }} (R_2,B_2) \& (R_2,B_3) \& (R_1,B_6).\nonumber
\end{eqnarray}

  \item $(u_2,w_1)$: As shown in Figure \ref{fig:22.1},
\begin{figure}[htbp]
\centering
	\includegraphics[width=8cm]{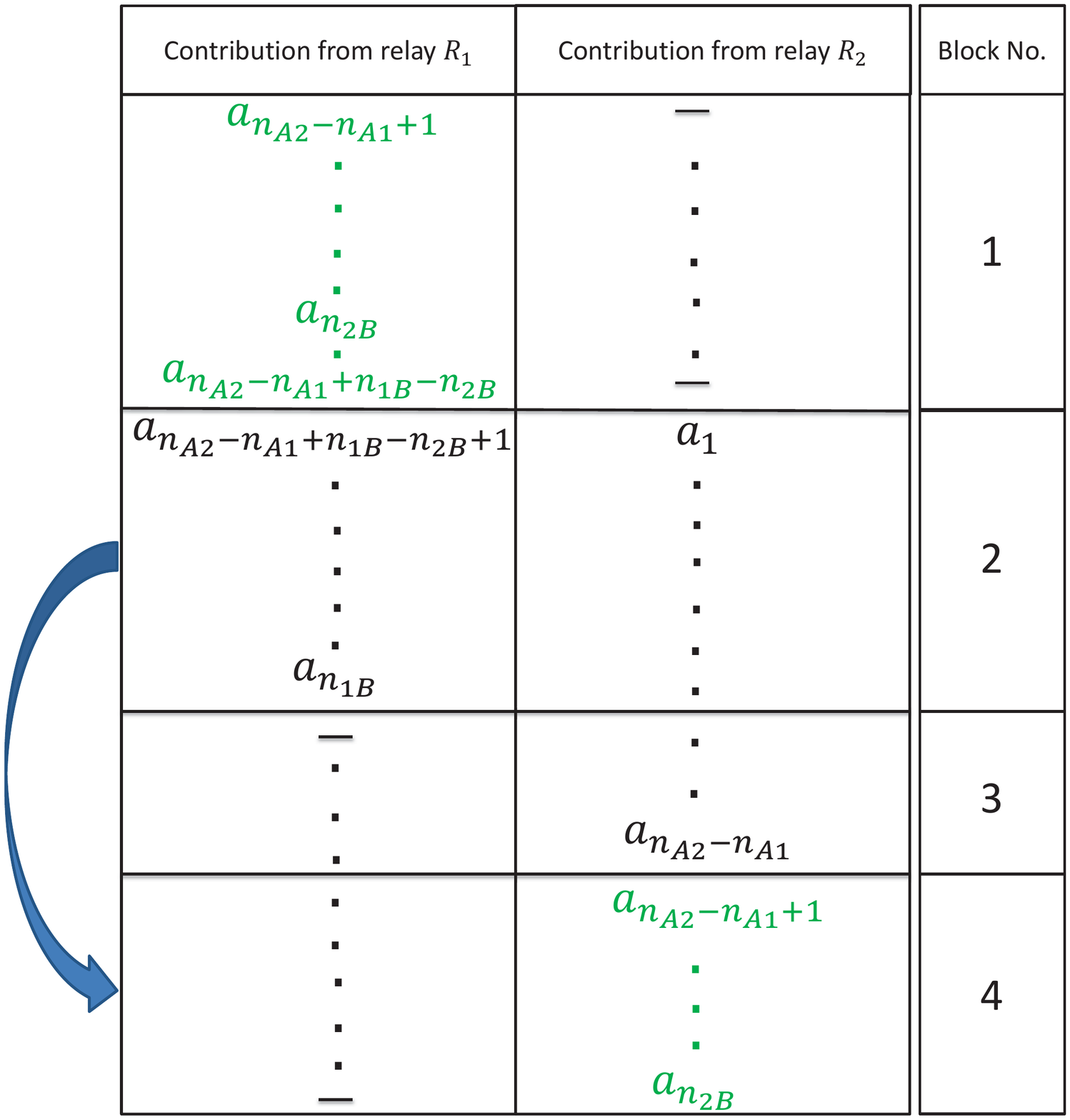}
\caption{The received signals at node $B$ (ignoring the effect of transmitted signal from $B$) assuming that both relays use Relay Strategy 0 for channel parameters of case $(u_2,w_1)$.}
\label{fig:22.1}
\end{figure}
      $R_1$ repeats block $(R_1, B_2)$ on block $(R_1, B_4)$. The decoding order is
      \begin{eqnarray}
&& {\text {decode \& subtract\ }} (R_1,B_1) \rightarrow {\text {subtract\ }} (R_2,B_4) \rightarrow {\text {decode \& subtract\ }} (R_1,B_4) \rightarrow \nonumber\\
&& {\text {subtract\ }} (R_1,B_2) \rightarrow {\text {decode \& subtract\ }} (R_2,B_2) \rightarrow {\text {decode \& subtract\ }} (R_2,B_3).\nonumber
\end{eqnarray}

  \item $(u_2,w_2)$: As shown in Figure \ref{fig:22.2},
\begin{figure}[htbp]
\centering
	\includegraphics[width=8cm]{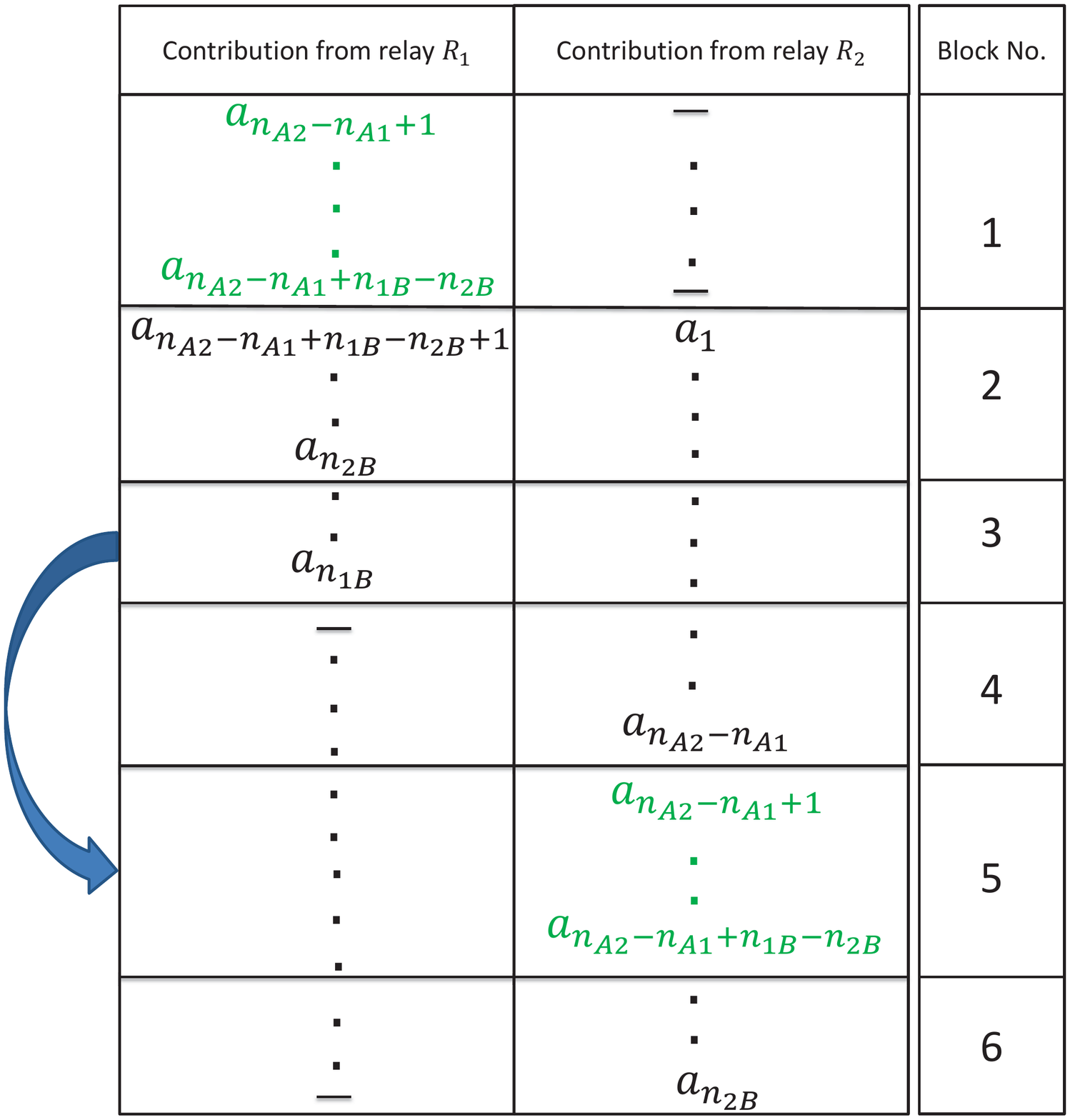}
\caption{The received signals at node $B$ (ignoring the effect of transmitted signal from $B$) assuming that both relays use Relay Strategy 0 for channel parameters of case $(u_2,w_2)$.}
\label{fig:22.2}
\end{figure}
      $R_1$ repeats block $(R_1, B_3)$ on block $(R_1, B_5)$. The decoding order is
      \begin{eqnarray}
&& {\text {decode \& subtract\ }} (R_1,B_1) \rightarrow {\text {subtract\ }} (R_2,B_5) \rightarrow {\text {decode \& subtract\ }} (R_1,B_5) \rightarrow \nonumber\\
&& {\text {subtract\ }} (R_1,B_3) \rightarrow {\text {decode \& subtract\ }} (R_2,B_6) \rightarrow
{\text {subtract\ }} (R_1,B_2) \rightarrow \nonumber\\
&& {\text {decode \& subtract\ }} (R_2,B_2) \& (R_2,B_3) \& (R_2,B_4).\nonumber
\end{eqnarray}

\end{enumerate}

\subsection{\bf Relay Strategy 7:}

If the forward channel is of Case 4.1.2 Type $i$, then Relay Strategy 0 is used at $R_{\bar{i}}$, where $i,\bar{i}\in\{1,2\}, i\neq \bar{i}$, and Relay Strategy 7 is used at $R_{i}$. Here, we define Relay Strategy 7 at $R_1$ (forward channel of Case 4.1.2 Type $1$), while that for Relay $R_2$ can be obtained by interchanging roles of $R_1$ and the $R_1$ (interchanging 1 and 2 and forward channel of Case 4.1.2 Type $2$). As shown in Figure \ref{fig:rs7}, if $R_1$ receives a block of $n_{1B}$, first it will reverse them as in Relay Strategy 0 and then changes the order of the $n_{2B}-n_{A2}+n_{A1}$ streams right after the first $n_{1B}-n_{2B}$ streams, with the following $n_{A2}-n_{A1}$ streams.

\begin{figure}[htbp]
\centering
	\includegraphics[width=10cm]{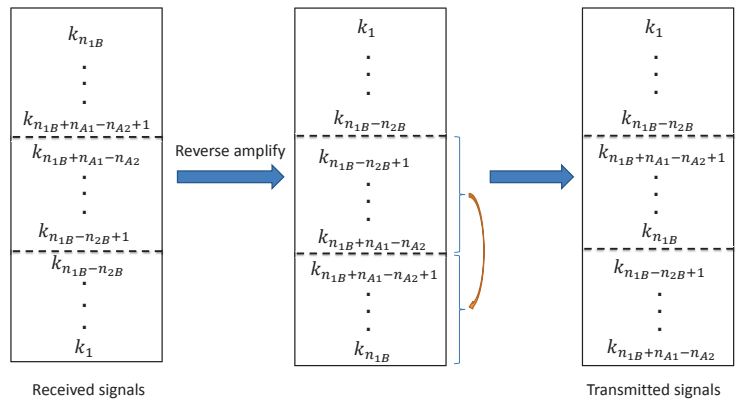}
\caption{Relay Strategy 7 at $R_1$.}
\label{fig:rs7}
\end{figure}

Node $A$ transmits ${[\underset{n_{A2}-n_{1B}}{\underbrace{0,...,0}},a_{n_{1B}},...,a_{1}]}^T$. The received signals can be seen in Figure \ref{fig:s7s}. The bits that are not delivered to node $B$ from $R_1$ ($a_{1},...,a_{n_{A2}-n_{A1}}$), are delivered from $R_2$ to node $B$ (block $(R_2,B_2)$) with no interference. The remaining bits can be decoded by starting from the highest level ($a_{n_{A2}-n_{A1}+1}$ in block $(R_1,B_1)$) and removing the effect of the decoded bits.

\begin{figure}[htbp]
\centering
	\includegraphics[width=15cm]{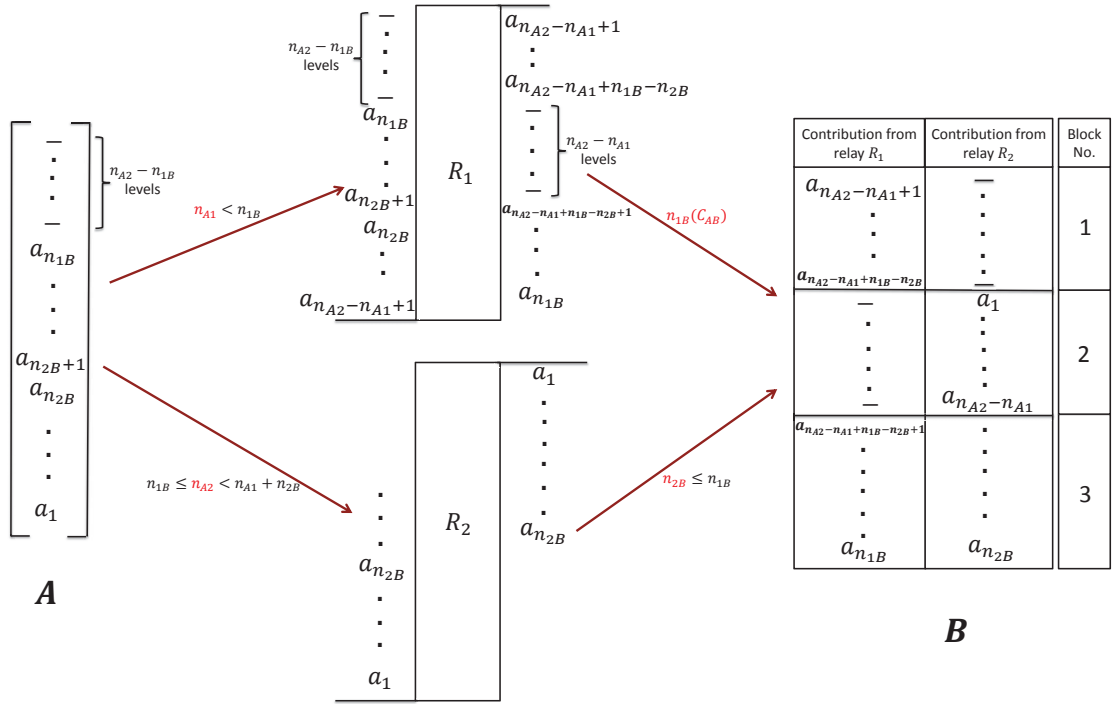}
\caption{Received signals by using Relay Strategy 7 when the forward channel is of Case 4.1.2 Type 1.}
\label{fig:s7s}
\end{figure}

\subsection{\bf Relay Strategy 8:}

If the forward channel is of Case 4.1.2 Type $i$, then Relay Strategy 0 is used at $R_i$, and Relay Strategy 8 is used at $R_{\bar{i}}$, where $i,\bar{i}\in\{1,2\}, i\neq \bar{i}$. Here, we define Relay Strategy 8 at $R_2$ (forward channel of Case 4.1.2 Type $1$), while that for $R_1$ can be obtained by interchanging roles of relays $R_1$ and $R_2$ (interchanging 1 and 2 and forward channel of Case 4.1.2 Type $2$). As shown in Figure \ref{fig:rs8}, if $R_2$ receives a block of $n_{2B}$ streams, first of all it will reverse them as in Relay Strategy 0 and then changes the order of the first $n_{A2}-n_{A1}$ streams with the next $n_{2B}-(n_{A2}-n_{A1})$ streams.

\begin{figure}[htbp]
\centering
	\includegraphics[width=10cm]{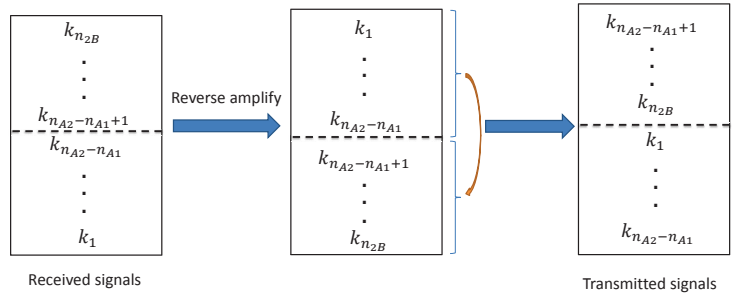}
\caption{Relay Strategy 8 at $R_2$.}
\label{fig:rs8}
\end{figure}

Node $A$ transmits ${[\underset{n_{A2}-n_{1B}}{\underbrace{0,...,0}},a_{n_{1B}},...,a_{1}]}^T$. The received signals can be seen in Figure \ref{fig:s8s}. The bits that are not delivered to node $B$ from $R_1$ ($a_{1},...,a_{n_{A2}-n_{A1}}$), are all sent on the lowest levels from $R_2$ to node $B$ (in block $(R_1,B_1)$) and thus are decoded with no interference. The remaining bits can be decoded by starting from the highest level ($a_{n_{A2}-n_{A1}+1}$) and removing the effect of the decoded bits.

\begin{figure}[htbp]
\centering
	\includegraphics[width=15cm]{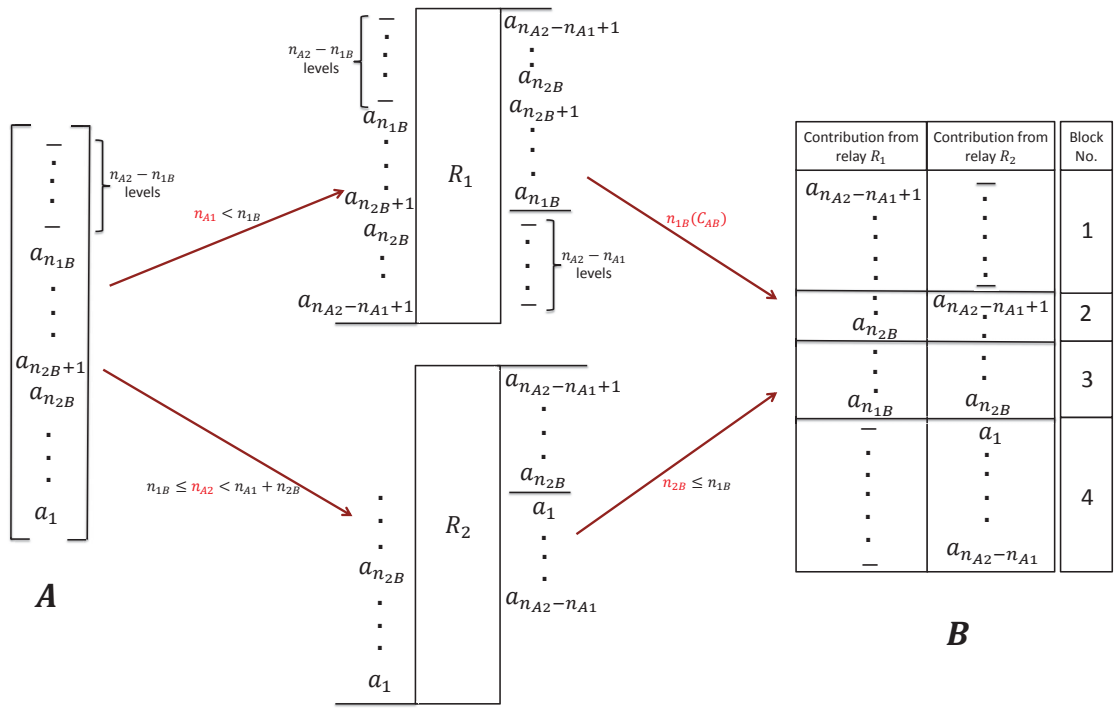}
\caption{Received signals by using Relay Strategy 8 when the forward channel is of Case 4.1.2 Type 1.}
\label{fig:s8s}
\end{figure}

\subsection{Achieving the Optimum Rate}

Now we explain how the above mentioned strategies achieve the optimum rate. Take the case that the forward channel is of Case 4.1.2 and backward channel is neither of Case 3.1.2 nor of Case 4.1.2. If the forward channel is neither of Case 3.1.2 nor of Case 4.1.2 and the backward channel is of Case 4.1.2 everything is similar except exchanging all $A$ and $B$'s together. The first one (which we consider here) includes the following situations:

1. Backward channel is of Case 1:
\begin{itemize}
  \item Forward channel is of Case 4.1.2 Type 1: If $n_{1B}+n_{A1}-n_{A2}>n_{1A}$, we use Relay Strategy 6 at $R_1$ and Relay Strategy 0 at $R_2$. Otherwise use Relay Strategy 5 at $R_1$ and Relay Strategy 0 at $R_2$. If $n_{1B}+n_{A1}-n_{A2}>n_{1A}$, $R_1$ repeats from the streams ($b_{1},...,b_{n_{1A}}$) received below the noise level in $A$, and otherwise some of the equations ($b_{1},...,b_{n_{1A}}$) are relocated.

  \item Forward channel is of Case 4.1.2 Type 2: If $n_{2B}+n_{A2}-n_{A1}>n_{B2}$, we use Relay Strategy 6 at $R_2$ and Relay Strategy 0 at $R_1$. Otherwise use Relay Strategy 5 at $R_2$ and Relay Strategy 0 at $R_1$. If $n_{2B}+n_{A2}-n_{A1}>n_{B2}$, $R_2$ repeats from the streams that are already decoded from the highest levels received in $A$, ($b_{n_{1A}+1},...,b_{n_{1A}+n_{B2}}$), on the lower levels, and otherwise some of the equations at the highest levels received in $A$, ($b_{n_{1A}+1},...,b_{n_{1A}+n_{B2}}$) are relocated.
\end{itemize}

2. Backward channel is of Case 2:
\begin{itemize}
  \item Forward channel is of Case 4.1.2 Type 1: If $n_{1B}+n_{A1}-n_{A2}>n_{B1}$, we use Relay Strategy 6 at $R_1$ and Relay Strategy 0 at $R_2$. Otherwise use Relay Strategy 5 at $R_1$ and Relay Strategy 0 at $R_2$. If $n_{1B}+n_{A1}-n_{A2}>n_{B1}$, $R_1$ repeats from the streams that are already decoded from the highest levels received in $A$, ($b_{n_{2A}+1},...,b_{n_{2A}+n_{B1}}$), on the lower levels, and otherwise we only exchange the place of some of the equations at the highest levels received in $A$, ($b_{n_{2A}+1},...,b_{n_{2A}+n_{B1}}$).

  \item Forward channel is of Case 4.1.2 Type 2: If $n_{2B}+n_{A2}-n_{A1}>n_{2A}$, we use Relay Strategy 6 at $R_2$ and Relay Strategy 0 at $R_1$. Otherwise use Relay Strategy 5 at $R_2$ and Relay Strategy 0 at $R_1$. If $n_{2B}+n_{A2}-n_{A1}>n_{2A}$, $R_2$ repeats from the streams ($b_{1},...,b_{n_{2A}}$) received below the noise levels in $A$, and otherwise some of the equations ($b_{1},...,b_{n_{2A}}$) are relocated.
\end{itemize}

3. Backward channel is of Case 3.1.1: We assume that the backward channel is Type 1. For Type 2 the proof is similar.
\begin{itemize}
  \item Forward channel is of Case 4.1.2 Type 1: If $n_{1B}+n_{A1}-n_{A2}>n_{B1}$, we use Relay Strategy 6 at $R_1$ and Relay Strategy 0 at $R_2$. Otherwise use Relay Strategy 5 at $R_1$ and Relay Strategy 0 at $R_2$. If $n_{1B}+n_{A1}-n_{A2}>n_{B1}$, $R_1$ repeats from the streams ($b_{1},...,b_{n_{1A}}$) received below the noise level in $A$, and otherwise some of the equations ($b_{1},...,b_{n_{1A}}$) are relocated.

  \item Forward channel is of Case 4.1.2 Type 2: If $n_{2B}+n_{A2}-n_{A1}>n_{B2}$, we use Relay Strategy 6 at $R_2$ and Relay Strategy 0 at $R_1$. Otherwise use Relay Strategy 5 at $R_2$ and Relay Strategy 0 at $R_1$. If $n_{2B}+n_{A2}-n_{A1}>n_{B2}$, $R_2$ repeats from the streams that are already decoded from the highest levels received in $A$, ($b_{n_{B1}-n_{B2}+1},...,b_{n_{B1}}$), on the lower levels, and otherwise some of the equations at the highest levels received in $A$, ($b_{n_{B1}-n_{B2}+1},...,b_{n_{B1}}$) are relocated.
\end{itemize}

4. Backward channel is Case 4.1.1: We assume that the backward channel is Type 1. For Type 2 the proof is similar.
\begin{itemize}
  \item Forward channel is of Case 4.1.2 Type 1: If $n_{1B}+n_{A1}-n_{A2}>n_{1B}-n_{2B}$, we use Relay Strategy 6 at $R_1$ and Relay Strategy 0 at $R_2$. Otherwise use Relay Strategy 5 at $R_1$ and Relay Strategy 0 at $R_2$. If $n_{1B}+n_{A1}-n_{A2}>n_{1B}-n_{2B}$, $R_1$ repeats from the streams that are already decoded from the highest levels received in $A$, ($b_{n_{2A}+1},...,b_{n_{1A}}$), on the lower levels, and otherwise some of the equations at the highest levels received in $A$, ($b_{n_{2A}+1},...,b_{n_{1A}}$) are relocated.

  \item Forward channel is of Case 4.1.2 Type 2: If $n_{1B}+n_{A1}-n_{A2}>n_{1B}$, we use Relay Strategy 6 at $R_2$ and Relay Strategy 0 at $R_1$. Otherwise use Relay Strategy 5 at $R_2$ and Relay Strategy 0 at $R_1$. If $n_{1B}+n_{A1}-n_{A2}>n_{1B}$, $R_2$ repeats from the streams ($b_{1},...,b_{n_{2A}}$) received below the noise level in $A$, and otherwise some of the equations ($b_{1},...,b_{n_{2A}}$) are relocated.
\end{itemize}

5. Backward channel is of Case 3.2: We assume that the backward channel is Type 1. For Type 2 the proof is similar.
\begin{itemize}
  \item Forward channel is of Case 4.1.2 Type 1: We use Relay Strategy 5 at $R_1$ and Relay Strategy 0 at $R_2$ or Relay Strategy 6 at $R_1$ and Relay Strategy 0 at $R_2$ or Relay Strategy 7 at $R_1$ and Relay Strategy 0 at $R_2$.

  \item Forward channel is of Case 4.1.2 Type 2: We use Relay Strategy 8 at $R_1$ and Relay Strategy 0 at $R_2$.
\end{itemize}

6. Backward channel is of Case 4.2: We assume that the backward channel is Type 1. For Type 2 the proof is similar.
\begin{itemize}
  \item Forward channel is of Case 4.1.2 Type 1: We use Relay Strategy 5 at $R_1$ and Relay Strategy 0 at $R_2$ or Relay Strategy 6 at $R_1$ and Relay Strategy 0 at $R_2$ or Relay Strategy 7 at $R_1$ and Relay Strategy 0 at $R_2$.

  \item Forward channel is of Case 4.1.2 Type 2: We use Relay Strategy 8 at $R_1$ and Relay Strategy 0 at $R_2$.
\end{itemize}

\section{Both  the forward and backward channels are either of Case 3.1.2 or 4.1.2}\label{apdx_sc3}

\subsection{Forward channel is of Case 3.1.2}

Having described the relay and transmission strategies which are symmetric in both the forward and backward channels, we will show that the message can be decoded in the forward direction. The other side holds by symmetry. We first assume that the forward channel is of Case 3.1.2 Type 1. For Type 2, the proof is similar and is thus omitted. Then we have $C_{AB}=\max\{n_{A1},n_{A2}\}=n_{A1}$. We consider the partitioning shown in Figure \ref{fig:sup1} and show that the messages can be decoded when the  backward channel is either of Cases 3.1.2 or 4.1.2. It can be seen that all of the streams can be decoded with the same order as in Relay Strategy 2 in Section \ref{one}.

1. $(u_1,v_1)$: Figure \ref{fig:1} depicts the received signal at node $B$ (ignoring the effect of transmitted signal from $B$) assuming that both relays use Relay Strategy 0.

First, consider the case when the repetitions in favor of the forward and backward channels happen in different relays, i.e., $m_1= n_2=0$. In this case, $R_2$ repeats in favor of the forward channel (uses Relay Strategy $(2,0)$) and $R_1$ repeats in favor of the backward channel (uses Relay Strategy $(0,2)$ or $(0,6)$). Repeating $a_{n_{A1}-n_{A2}+1},...,a_{n_{1B}}$ by $R_1$  in favor of the backward channel does not affect the achievability of the forward channel because these signals are decoded from block $(R_2,B_1)$. Also, repeating the $a_{1},...,a_{n_{A1}-n_{A2}}$ by $R_1$ within the top $n_{A1}-n_{A2}$ streams (i.e., on block $(R_1,B_2)$) does not affect the decoding since the signals can be decoded from top and their effect can be cancelled. Repeating any of these signals on the next $2 n_{A2}-n_{A1}+n_{1B}-n_{2B}$ lower levels by $R_1$ (i.e., on block $(R_1,B_3)$) does not affect the decoding because we can decode those upper $n_{A1}-n_{A2}$ levels (i.e., on block $(R_1,B_2)$) first and cancel the effect of the repeated signals. If $a_{1},...,a_{n_{A1}-n_{A2}}$ is repeated by $R_1$ on the next lower $n_{2B}-n_{A2}$ levels (i.e., on block $(R_1,B_4)$), in case that the repetitions of some streams from two relays are from the same level and are repeated on the same level, $R_2$ does not repeat for those levels as was explained in the definition of the strategies, so there is no problem in decoding the forward channel. If any of them is repeated in any lower level by $R_1$ (i.e. below block 4 in Figure \ref{fig:1}), these are out of range and does not have effect on decoding the forward channel.

Now, take the case that the repetitions in favor of the forward and backward channels both happen in $R_2$, i.e., $(m_1,n_1)=(0,0)$. In this case, $R_2$ repeats in favor of both directions ($(m_2,n_2)\in \{(2,2), (2,6)\}$) and $R_1$ uses Relay Strategy $(0,0)$, i.e., Relay Strategy 0. Repeating $a_{n_{A1}-n_{A2}+1},...,a_{n_{A1}-n_{A2}+n_{2B}-n_{1B}}$ in favor of the backward channel does not affect the achievability of the forward channel because these signals are decoded from the top levels of the received signal from the same relay (i.e., on block $(R_2,B_1)$). Also, repeating the signals $a_{n_{A1}-n_{A2}+n_{2B}-n_{1B}+1},...,a_{2n_{A1}-2n_{A2}+n_{2B}-n_{1B}}$ within the next top $n_{A1}-n_{A2}$ streams (i.e., on block $(R_2,B_2)$) does not affect the decoding since the signals can be decoded from top and their effect can be cancelled. Repeating any of them on the next $2 n_{A2}-n_{A1}+n_{1B}-n_{2B}$ lower levels (i.e., on block $(R_2,B_3)$) does not affect the decoding because node $B$ can decode the upper $n_{A1}-n_{A2}$ levels (i.e., on block $(R_2,B_2)$) first and cancel the effect of the repeated signals. If $a_{n_{A1}-n_{A2}+n_{2B}-n_{1B}+1},...,a_{2n_{A1}-2n_{A2}+n_{2B}-n_{1B}}$ is repeated on the next lower $n_{2B}-n_{A2}$ levels (i.e., on block $(R_2,B_4)$), it does not affect the decoding because as it was explained in the definition of the strategies, in case that the repetitions of some streams from two relays are from the same levels and are supposed to be repeated on the same level, $R_2$ does not repeat for those levels and repeats those streams only one time. If any of $a_{2n_{A1}-2n_{A2}+n_{2B}-n_{1B}+1},...,a_{n_{A1}}$ is repeated on the next lower $n_{2B}-n_{A2}$ levels (i.e., on block $(R_2,B_4)$), it does not affect the decoding since the signals can be decoded from top and their effect can be cancelled. If any of them is repeated in any lower level by $R_2$ (i.e. below block 4 in Figure \ref{fig:1}), it is out of range and does not have effect on decoding of forward channel.

2. $(u_1,v_2)$: Figure \ref{fig:2} depicts the received signal at node $B$ (ignoring the effect of transmitted signal from $B$) assuming that both relays use Relay Strategy 0.

First, consider the case that the repetitions in favor of the forward and backward channels happen in different relays, i.e., $m_1, n_2=0$. In this case, $R_2$ repeats in favor of the forward channel (uses Relay Strategy $(2,0)$) and $R_1$ repeats in favor of the backward channel (uses Relay Strategy $(0,2)$ or $(0,6)$). Repeating $a_{n_{A1}-n_{A2}+1},...,a_{n_{A1}-n_{A2}+n_{2B}-n_{1B}}$ by $R_1$ in favor of the backward direction communication does not affect the achievability of the forward channel because these signals have been decoded from the top levels of the received signal from the other relay ($R_2$) where there is no interference from $R_1$. Also, repeating the $a_{1},...,a_{n_{A1}-n_{A2}}$ by $R_1$ within the top $n_{A1}-n_{A2}$ streams (i.e., on blocks 2 and 3 in Figure \ref{fig:2}) does not affect the decoding since the signals can be decoded from top and their effect can be cancelled. Repeating any of them on the next $2 n_{A2}-n_{A1}+n_{1B}-n_{2B}$ lower levels by $R_1$ (i.e., on block 3 in Figure \ref{fig:2}) does not affect the decoding because we can decode those upper $n_{A1}-n_{A2}$ levels ($a_{1},...,a_{n_{A1}-n_{A2}}$) first and cancel the effect of the repeated signals. If $a_{1},...,a_{n_{A1}-n_{A2}}$ is repeated by $R_1$ on the next lower $2(n_{2B}-n_{1B}-n_{A2})+n_{A1}$ levels (i.e., on block 5 in Figure \ref{fig:2}), in case that the repetitions of some streams from two relays are from the same level and are repeated on the same level, i.e., they create an equation as higher levels, $R_2$ does not repeat for those levels as was explained in the definition of the strategies, so there is no problem in decoding of forward direction channel. Also if any of $a_{1},...,a_{n_{A1}-n_{A2}}$ is repeated by $R_1$ on the next $2 n_{1B}-n_{2B}+n_{A2}-n_{A1}$ levels (i.e., on block 6 in Figure \ref{fig:2}), it does not affect the decoding since the signals can be decoded from top and their effect can be cancelled. Also, if $a_{n_{A1}-n_{A2}+n_{2B}-n_{1B}+1},...,a_{n_{1B}}$ is repeated by $R_1$ in the same level range that they are located (i.e., on block 6 in Figure \ref{fig:2}), it does not affect the decoding since the signals can be decoded from top and their effect can be cancelled. If any of them is repeated in any lower level by $R_1$, it is out of range and does not have effect on decoding of forward channel.

Now, take the case that the repetitions in favor of the forward and backward channels happen in the same relay ($R_2$), i.e., $(m_1,n_1)=(0,0)$. In this case, $R_2$ repeats in favor of both directions ($(m_2,m_2)\in \{(2,2), (2,6)\}$) and $R_1$ uses Relay Strategy $(0,0)$. Repeating $a_{n_{A1}-n_{A2}+1},...,a_{n_{A1}-n_{A2}+n_{2B}-n_{1B}}$ in favor of the backward channel does not affect the achievability of the forward channel because these signals have been decoded from the top levels of the received signal from the same relay (i.e., on block 1 in Figure \ref{fig:2}). Also, repeating the signals $a_{n_{A1}-n_{A2}+n_{2B}-n_{1B}+1},...,a_{n_{1B}}$ does not affect the decoding because these signals are decoded from the last $2 n_{1B}-n_{2B}+n_{A2}-n_{A1}$ levels from the other relay, $R_1$ (i.e., on block 6 in Figure \ref{fig:2}). Also, repeating the $a_{n_{1B}+1},...,a_{2n_{A1}-2n_{A2}+n_{2B}-n_{1B}}$ in the same level range that they are located (i.e., on block 3 in Figure \ref{fig:2}) does not affect the decoding since the signals can be decoded from top and their effect can be cancelled. Repeating any of $a_{n_{1B}+1},...,a_{2n_{A1}-2n_{A2}+n_{2B}-n_{1B}}$ on the next $2 n_{A2}-n_{A1}+n_{1B}-n_{2B}$ lower levels (i.e., on block 4 in Figure \ref{fig:2}) does not affect the decoding because node $B$ decodes the upper $n_{A1}-n_{A2}$ levels ($a_{n_{1B}+1},...,a_{2n_{A1}-2n_{A2}+n_{2B}-n_{1B}}$) first and cancel the effect of the repeated signals. Repeating any of $a_{n_{1B}+1},...,a_{2n_{A1}-2n_{A2}+n_{2B}-n_{1B}}$ on the next lower $2(n_{2B}-n_{1B}-n_{A2})+n_{A1}$ levels (i.e., on block 5 in Figure \ref{fig:2}), it does not affect the decoding because as it was explained in the definition of the strategies, in case that the repetitions of some streams from two relays are from the same levels and are supposed to be repeated on the same level, the repeating relay ($R_2$) does not repeat for those levels and repeats those streams only one time. Also, repeating any of the $a_{2(n_{A1}-n_{A2})+n_{2B}-n_{1B}+1},...,a_{n_{A1}}$ in the same level range that they are located (i.e., on block 4 in Figure \ref{fig:2}), does not affect the decoding since the signals can be decoded from top and their effect can be cancelled. Repeating any of $a_{2(n_{A1}-n_{A2})+n_{2B}-n_{1B}+1},...,a_{n_{A1}}$ on the next lower $2(n_{2B}-n_{1B}-n_{A2})+n_{A1}$ levels (i.e., on block 5 in Figure \ref{fig:2}), does not affect the decoding because they are already decoded. If any of them is repeated in any lower level by $R_2$, it is out of range and does not have effect on decoding of forward channel.

3. $(u_1,v_3)$: Figure \ref{fig:3} depicts the received signal at node $B$ (ignoring the effect of transmitted signal from $B$) assuming that both relays use Relay Strategy 0.

First, consider the case that the repetitions in favor of the forward and backward channels happen in different relays, i.e., $m_1, n_2=0$. In this case, $R_2$ repeats in favor of the forward channel (uses Relay Strategy $(2,0)$) and $R_1$ repeats in favor of the backward channel (uses Relay Strategy $(0,2)$ or $(0,6)$). Repeating $a_{n_{A1}-n_{A2}+1},...,a_{n_{A1}-n_{A2}+n_{2B}-n_{1B}}$ by $R_1$ in favor of the backward direction communication does not affect the achievability of the forward channel because these signals have been decoded from the top levels of the received signal from the other relay ($R_2$) where there is no interference from $R_1$. Also, repeating the $a_{1},...,a_{n_{A1}-n_{A2}}$ by $R_1$ within the top $n_{A1}-n_{A2}$ streams (i.e., on block 2 in Figure \ref{fig:3}) does not affect the decoding since the signals can be decoded from top and their effect can be cancelled. Repeating any of them on the next $2 n_{A2}-n_{A1}+n_{1B}-n_{2B}$ lower levels by $R_1$ (i.e., on blocks 3 and 4 in Figure \ref{fig:3}) does not affect the decoding because we can decode those upper $n_{A1}-n_{A2}$ levels ($a_{1},...,a_{n_{A1}-n_{A2}}$) first and cancel the effect of the repeated signals. If $a_{1},...,a_{n_{A1}-n_{A2}}$ is repeated by $R_1$ on the next lower $2(n_{2B}-n_{1B}-n_{A2})+n_{A1}$ levels (i.e., on block 5 in Figure \ref{fig:3}), $R_2$ does not repeat for those levels as was explained in the definition of the strategies, so there is no problem in decoding of forward direction channel. Also if any of $a_{1},...,a_{n_{A1}-n_{A2}}$ is repeated by $R_1$ on the next lower $2 n_{1B}-n_{2B}+n_{A2}-n_{A1}$ levels (i.e., on block 6 in Figure \ref{fig:3}), it does not affect the decoding since the signals can be decoded from top and their effect can be cancelled. Also if $a_{n_{A1}-n_{A2}+n_{2B}-n_{1B}+1},...,a_{n_{1B}}$ is repeated by $R_1$ in the same level range that they are located (i.e., on block 5 in Figure \ref{fig:3}), it does not affect the decoding since the signals can be decoded from top and their effect can be cancelled. If any of them is repeated in any lower level by $R_1$, it is out of range and does not have effect on decoding of forward channel.

Now, take the case that the repetitions in favor of the forward and backward channels happen in the same relay ($R_2$), i.e., $(m_1,n_1)=(0,0)$. In this case, $R_2$ repeats in favor of both directions ($(m_2,m_2)\in \{(2,2), (2,6)\}$) and $R_1$ uses Relay Strategy $(0,0)$. Repeating $a_{n_{A1}-n_{A2}+1},...,a_{n_{A1}-n_{A2}+n_{2B}-n_{1B}}$ in favor of the backward channel does not affect the achievability of the forward channel because these signals have been decoded from the top levels of the received signal from the same relay (i.e., on block 1 in Figure \ref{fig:3}). Also, repeating the signals $a_{n_{A1}-n_{A2}+n_{2B}-n_{1B}+1},...,a_{n_{1B}}$ does not affect the decoding because these signals are decoded from the last $2 n_{1B}-n_{2B}+n_{A2}-n_{A1}$ levels from the other relay, $R_1$ (i.e., on block 6 in Figure \ref{fig:3}). Also, repeating the $a_{n_{1B}+1},...,a_{n_{A1}}$ in the same level range that they are located (i.e., on block 4 in Figure \ref{fig:3}), does not affect the decoding since the signals can be decoded from top and their effect can be cancelled. Repeating any of $a_{n_{1B}+1},...,a_{n_{A1}}$ on the next $n_{A1}-2(n_{A2}+n_{1B}-n_{2B})$ lower levels (i.e., on block 5 and 6 in Figure \ref{fig:3}) does not affect the decoding, since $R_2$ does not repeat in favor of the forward direction for those levels with these parameters as was explained in the definition of the strategies, so there is no problem in decoding of forward direction channel. If any of them is repeated in any lower level by $R_2$, it is out of range and does not have effect on decoding of forward channel.

4. $(u_1,v_4,r_1)$: Figure \ref{fig:4.1} depicts the received signal at node $B$ (ignoring the effect of transmitted signal from $B$) assuming that both relays use Relay Strategy 0.

First, consider the case that the repetitions in favor of the forward and backward channels happen in different relays, i.e., $m_1, n_2=0$. In this case, $R_2$ repeats in favor of the forward channel (uses Relay Strategy $(2,0)$) and $R_1$ repeats in favor of the backward channel (uses Relay Strategy $(0,2)$ or $(0,6)$). Repeating $a_{n_{A1}-n_{A2}+1},...,a_{n_{A1}-n_{A2}+n_{2B}-n_{1B}}$ by $R_1$ in favor of the backward direction communication does not affect the achievability of the forward channel because these signals have been decoded from the top levels of the received signal from the other relay ($R_2$) where there is no interference from $R_1$. Also, repeating the $a_{1},...,a_{n_{A1}-n_{A2}}$ by $R_1$ within the top $n_{A1}-n_{A2}$ streams (i.e., on blocks 2 and 3 in Figure \ref{fig:4.1}) does not affect the decoding since the signals can be decoded from top and their effect can be cancelled. Repeating any of them on the next $n_{2B}-n_{1B}$ lower levels by $R_1$ (i.e., on blocks 4 and 5 in Figure \ref{fig:4.1}) does not affect the decoding because we can decode those upper $n_{A1}-n_{A2}$ levels ($a_{1},...,a_{n_{A1}-n_{A2}}$) first and cancel the effect of the repeated signals. If $a_{1},...,a_{n_{A1}-n_{A2}}$ is repeated by $R_1$ on the next lower $2(n_{1B}-n_{2B}+n_{A2})-n_{A1}$ levels (i.e., on block 6 in Figure \ref{fig:4.1}), does not affect the decoding because node $B$ decodes these levels at the end. If $a_{1},...,a_{n_{A1}-n_{A2}}$ is repeated on the next $n_{2B}-n_{A2}$ levels (i.e., on blocks 7 and 8 in Figure \ref{fig:4.1}), in case that the repetitions of some streams from two relays are from the same level and are repeated on the same level, i.e., they create an equation as higher levels, $R_2$ does not repeat for those levels as was explained in the definition of the strategies, so there is no problem in decoding of forward direction channel. Also if $a_{n_{A1}-n_{A2}+n_{2B}-n_{1B}+1},...,a_{n_{A2}-n_{2B}+n_{1B}}$ is repeated by $R_1$ in the same level range that they are located (i.e., on block 6 in Figure \ref{fig:4.1}), it does not affect the decoding since the signals can be decoded from top and their effect can be cancelled. If $a_{n_{A1}-n_{A2}+n_{2B}-n_{1B}+1},...,a_{n_{A2}-n_{2B}+n_{1B}}$ is repeated on the next $n_{2B}-n_{A2}$ levels (i.e., on blocks 7 and 8 in Figure \ref{fig:4.1}), in case that the repetitions of some streams from two relays are from the same level and are repeated on the same level, i.e., they create an equation as higher levels, $R_2$ does not repeat for those levels as was explained in the definition of the strategies, so there is no problem in decoding of forward direction channel. Also if $a_{n_{A1}-n_{A2}+n_{2B}-n_{1B}+1},...,a_{n_{1B}}$ is repeated by $R_1$ in the same level range that they are located, it does not affect the decoding since the signals can be decoded from top and their effect can be cancelled. If any of them is repeated in any lower level by $R_1$, it is out of range and does not have effect on decoding of forward channel.

Now, take the case that the repetitions in favor of the forward and backward channels happen in the same relay ($R_2$), i.e., $(m_1,n_1)=(0,0)$. In this case, $R_2$ repeats in favor of both directions ($(m_2,m_2)\in \{(2,2), (2,6)\}$) and $R_1$ uses Relay Strategy $(0,0)$. Repeating $a_{n_{A1}-n_{A2}+1},...,a_{n_{A1}-n_{A2}+n_{2B}-n_{1B}}$ in favor of the backward channel does not affect the achievability of the forward channel because these signals have been decoded from the top levels of the received signal from the same relay (i.e., on block 1 in Figure \ref{fig:4.1}). Also, repeating the $a_{n_{A1}-n_{A2}+n_{2B}-n_{1B}+1},...,a_{n_{A2}-n_{2B}+n_{1B}}$ does not affect the decoding because these signals are decoded from the repetitions by $R_2$ (i.e., on block 8 in Figure \ref{fig:4.1}). Also, repeating the $a_{n_{A2}-n_{2B}+n_{1B}+1},...,a_{2n_{A1}-2n_{A2}+n_{2B}-n_{1B}}$ does not affect the decoding because these signals are decoded from the low levels from the other relay without interference, ($R_1$)  (i.e., on block 3 in Figure \ref{fig:4.1}). Also, repeating the $a_{2n_{A1}-2n_{A2}+n_{2B}-n_{1B}+1},...,a_{2(n_{A1}-n_{A2}+n_{2B}-n_{1B})}$ in the same level range that they are located  (i.e., on blocks 4 and 5 in Figure \ref{fig:4.1}), does not affect the decoding since their interference is already decoded from highest block and the signals can be decoded from top and their effect can be cancelled. Repeating any of $a_{2n_{A1}-2n_{A2}+n_{2B}-n_{1B}+1},...,a_{2(n_{A1}-n_{A2}+n_{2B}-n_{1B})}$ on the next $2(n_{A2}-n_{2B}+n_{1B})-n_{A1}$ lower levels (i.e., on block 6 in Figure \ref{fig:4.1}) does not affect the decoding because $B$ decodes the upper $n_{2B}-n_{1B}$ levels ($a_{2n_{A1}-2n_{A2}+n_{2B}-n_{1B}+1},...,a_{2(n_{A1}-n_{A2}+n_{2B}-n_{1B})}$) first and cancel the effect of the repeated signals. Repeating any of $a_{2n_{A1}-2n_{A2}+n_{2B}-n_{1B}+1},...,a_{2(n_{A1}-n_{A2}+n_{2B}-n_{1B})}$ on the next $n_{2B}-n_{A2}$ levels (i.e., on blocks 7 and 8 in Figure \ref{fig:4.1}), does not affect the decoding because as it was explained in the definition of the strategies, in case that the repetitions of some streams from two relays are from the same levels and are supposed to be repeated on the same level, the repeating relay ($R_2$) does not repeat for those levels and repeats those streams only one time. Also, repeating any of the $a_{2(n_{A1}-n_{A2}+n_{2B}-n_{1B})+1},...,a_{n_{A1}}$ does not affect the decoding since the signals can be decoded from top and their effect can be cancelled. If any of them is repeated in any lower level by $R_2$, it is out of range and does not have effect on decoding of forward channel.

5. $(u_1,v_4,r_2,s_1)$: Figure \ref{fig:4.21} depicts the received signal at node $B$ (ignoring the effect of transmitted signal from $B$) assuming that both relays use Relay Strategy 0.

First, consider the case that the repetitions in favor of the forward and backward channels happen in different relays, i.e., $m_1, n_2=0$. In this case, $R_2$ repeats in favor of the forward channel (uses Relay Strategy $(2,0)$) and $R_1$ repeats in favor of the backward channel (uses Relay Strategy $(0,2)$ or $(0,6)$). Repeating $a_{n_{A1}-n_{A2}+1},...,a_{n_{A1}-n_{A2}+n_{2B}-n_{1B}}$ by $R_1$ in favor of the backward direction communication does not affect the achievability of the forward channel because they have been decoded from the top levels of the received signal from the other relay ($R_2$) where there is no interference from $R_1$. Also, repeating the $a_{1},...,a_{n_{A1}-n_{A2}}$ by $R_1$ within the top $n_{A1}-n_{A2}$ streams (i.e., on blocks 2 and 3 in Figure \ref{fig:4.21}) does not affect the decoding since the signals can be decoded from top and their effect can be cancelled. Repeating any of them on the next $n_{2B}-n_{1B}$ lower levels by $R_1$ (i.e., on block 4 in Figure \ref{fig:4.21}) does not affect the decoding because we can decode those upper $n_{A1}-n_{A2}$ levels ($a_{1},...,a_{n_{A1}-n_{A2}}$) first and cancel the effect of the repeated signals. If $a_{1},...,a_{n_{A1}-n_{A2}}$ is repeated by $R_1$ on the next lower $2(n_{A2}-n_{A1}-n_{2B})+3n_{1B}$ levels (i.e., on block 5 in Figure \ref{fig:4.21}) does not affect the decoding because node $B$ decodes them from the last $2(n_{A2}-n_{A1}-n_{2B})+3n_{1B}$ levels from $R_1$ (i.e., on block 9 in Figure \ref{fig:4.21}). If $a_{1},...,a_{n_{A1}-n_{A2}}$ is repeated on the next lower $n_{A1}-n_{1B}$ levels (i.e., on blocks 6 and 7 in Figure \ref{fig:4.21}), it does not affect the decoding because node $B$ decodes these levels at the end. Also if $a_{1},...,a_{n_{A1}-n_{A2}}$ is repeated on the next lower $n_{2B}-n_{A2}$ levels, in case that the repetitions of some streams from two relays are from the same level and are repeated on the same level, i.e., they create an equation as higher levels, $R_2$ does not repeat for those levels as was explained in the definition of the strategies, so there is no problem in decoding of forward direction channel. Also if $a_{n_{A1}-n_{A2}+n_{2B}-n_{1B}+1},...,a_{n_{1B}}$ is repeated by $R_1$ in the same level range that they are located, it does not affect the decoding since the signals can be decoded from top and their effect can be cancelled.  If any of them is repeated in any lower level by $R_1$, it is out of range and does not have effect on decoding of forward channel.

Now, take the case that the repetitions in favor of the forward and backward channels happen in the same relay ($R_2$), i.e., $(m_1,n_1)=(0,0)$. In this case, $R_2$ repeats in favor of both directions ($(m_2,m_2)\in \{(2,2), (2,6)\}$) and $R_1$ the Relay Strategy $(0,0)$. Repeating $a_{n_{A1}-n_{A2}+1},...,a_{n_{A1}-n_{A2}+n_{2B}-n_{1B}}$ in favor of the backward channel does not affect the achievability of the forward channel because they have been decoded from the top levels of the received signal from the same relay (i.e., on block 1 in Figure \ref{fig:4.21}). Also, repeating the $a_{2(n_{A1}-n_{A2}+n_{2B}-n_{1B})+1},...,a_{n_{1B}}$ does not affect the decoding because they are decoded from the last $3 n_{1B}+2(n_{A2}-n_{A1}-n_{2B})$ levels from the other relay ($R_1$). Also, repeating the $a_{n_{A1}-n_{A2}+n_{2B}-n_{1B}+1},...,a_{n_{A2}-n_{A1}+2n_{1B}-n_{2B}}$ does not affect the decoding because these signals are decoded from lower levels (block 5 in Figure \ref{fig:4.21}). Also, repeating the $a_{n_{A2}-n_{A1}+2n_{1B}-n_{2B}+1},...,a_{2(n_{A1}-n_{A2})+n_{2B}-n_{1B}}$ does not affect the decoding because these signals are decoded from the repetitions by $R_2$ (i.e., on block 8 in Figure \ref{fig:4.21}). Repeating any of $a_{2n_{1B}-n_{2B}+n_{A2}-n_{A1}+1},...,a_{2n_{A1}-2n_{A2}+n_{2B}-n_{1B}}$ on the low levels (i.e., on block 8 in Figure \ref{fig:4.21}), does not affect the decoding because as it was explained in the definition of the strategies, in case that the repetitions of some streams from two relays are from the same levels and are supposed to be repeated on the same level, the repeating relay ($R_2$) does not repeat for those levels and repeats those streams only one time. If $a_{2(n_{A1}-n_{A2})+n_{2B}-n_{1B}+1},...,a_{2(n_{A1}-n_{A2}+n_{2B}-n_{1B})}$ is repeated within the range that they are located (i.e., on block 4 in Figure \ref{fig:4.21}), it does not affect the decoding since the signals can be decoded from highest level of it and their effect can be cancelled. Also if $a_{2(n_{A1}-n_{A2})+n_{2B}-n_{1B}+1},...,a_{2(n_{A1}-n_{A2}+n_{2B}-n_{1B})}$ is repeated within the lower levels, it does not affect the decoding since they are decoded first from upper block. If $a_{n_{1B}+1},...,a_{n_{A1}}$ is repeated within the range that they are located (i.e., on blocks 6 and 7 in Figure \ref{fig:4.21}), it does not affect the decoding since the signals can be decoded from highest level of it and their effect can be cancelled. Also if If $a_{n_{1B}+1},...,a_{n_{A1}}$ is repeated within the lower levels, it does not affect the decoding since they are decoded first from upper block. If any of them is repeated in any lower level by $R_2$, it is out of range and does not have effect on decoding of forward channel.

6. $(u_1,v_4,r_2,s_2)$: Figure \ref{fig:4.22} depicts the received signal at node $B$ (ignoring the effect of transmitted signal from $B$) assuming that both relays use Relay Strategy 0.

First, consider the case that the repetitions in favor of the forward and backward channels happen in different relays, i.e., $m_1, n_2=0$. In this case, $R_2$ repeats in favor of the forward channel (uses Relay Strategy $(2,0)$) and $R_1$ repeats in favor of the backward channel (uses Relay Strategy $(0,2)$ or $(0,6)$). Repeating $a_{n_{A1}-n_{A2}+1},...,a_{n_{A1}-n_{A2}+n_{2B}-n_{1B}}$ by $R_1$ in favor of the backward direction communication does not affect the achievability of the forward channel because they have been decoded from the top levels of the received signal from the other relay ($R_2$) where there is no interference from $R_1$ (i.e., on block 1 in Figure \ref{fig:4.22}). Also, repeating the $a_{1},...,a_{n_{A1}-n_{A2}}$ by $R_1$ within the top $n_{A1}-n_{A2}$ streams (i.e., on block 2 in Figure \ref{fig:4.22}) does not affect the decoding since the signals can be decoded from top and their effect can be cancelled. Repeating any of them on the next $n_{2B}-n_{1B}$ lower levels by $R_1$ (i.e., on block 3 in Figure \ref{fig:4.22}) does not affect the decoding because we can decode those upper $n_{A1}-n_{A2}$ levels ($a_{1},...,a_{n_{A1}-n_{A2}}$) first and cancel the effect of the repeated signals. If $a_{1},...,a_{n_{A1}-n_{A2}}$ is repeated by $R_1$ on the next lower $2(n_{A2}-n_{A1}-n_{2B})+3n_{1B}$ levels (i.e., on blocks 4 and 5 in Figure \ref{fig:4.22}) does not affect the decoding because node $B$ decodes them from the last $2(n_{A2}-n_{A1}-n_{2B})+3n_{1B}$ levels from $R_1$ (i.e., on block 8 in Figure \ref{fig:4.22}). If $a_{1},...,a_{n_{A1}-n_{A2}}$ is repeated on the next lower $n_{A1}-n_{1B}$ levels (i.e., on block 6 in Figure \ref{fig:4.22}), it does not affect the decoding because node $B$ decodes these levels at the end. Also if $a_{1},...,a_{n_{A1}-n_{A2}}$ is repeated on the next $n_{2B}-n_{A2}$ lower levels, $R_2$ does not repeat for those levels as was explained in the definition of the strategies, so there is no problem in decoding of forward direction channel. Also if $a_{n_{A1}-n_{A2}+n_{2B}-n_{1B}+1},...,a_{n_{1B}}$ is repeated by $R_1$ in the same level range that they are located, it does not affect the decoding since the signals can be decoded from top and their effect can be cancelled.  If any of them is repeated in any lower level by $R_1$, it is out of range and does not have effect on decoding of forward channel.

Now, take the case that the repetitions in favor of the forward and backward channels happen in the same relay ($R_2$), i.e., $(m_1,n_1)=(0,0)$. In this case, $R_2$ repeats in favor of both directions ($(m_2,m_2)\in \{(2,2), (2,6)\}$) and $R_1$ uses Relay Strategy $(0,0)$. Repeating $a_{n_{A1}-n_{A2}+1},...,a_{n_{A1}-n_{A2}+n_{2B}-n_{1B}}$ in favor of the backward channel does not affect the achievability of the forward channel because they have been decoded from the top levels of the received signal from the same relay (i.e., on block 1 in Figure \ref{fig:4.22}). Also, repeating the $a_{2(n_{A1}-n_{A2}+n_{2B}-n_{1B})+1},...,a_{n_{1B}}$ does not affect the decoding because they are decoded from the last $3 n_{1B}+2(n_{A2}-n_{A1}-n_{2B})$ levels from the other relay ($R_1$) (i.e., on block 8 in Figure \ref{fig:4.22}). Also, repeating the $a_{n_{A1}-n_{A2}+n_{2B}-n_{1B}+1},...,a_{2(n_{A1}-n_{A2})+n_{2B}-n_{1B}}$ does not affect the decoding because these signals are decoded from lower levels (block 4 in Figure \ref{fig:4.22}). Repeating any of $a_{n_{2B}-n_{1B}+n_{A1}-n_{A2}+1},...,a_{2n_{A1}-2n_{A2}+n_{2B}-n_{1B}}$ on the low levels (i.e., on blocks 7 and 8 in Figure \ref{fig:4.21}), does not affect the decoding because as it was explained in the definition of the strategies, the repeating relay ($R_2$) does not repeat in favor of the forward channel for these set of parameters. If $a_{2(n_{A1}-n_{A2})+n_{2B}-n_{1B}+1},...,a_{2(n_{A1}-n_{A2}+n_{2B}-n_{1B})}$ is repeated within the range that they are located (i.e., on block 3 in Figure \ref{fig:4.22}), it does not affect the decoding since the signals can be decoded from highest level of it and their effect can be cancelled. Also if $a_{2(n_{A1}-n_{A2})+n_{2B}-n_{1B}+1},...,a_{2(n_{A1}-n_{A2}+n_{2B}-n_{1B})}$ is repeated within the lower levels, it does not affect the decoding since they are decoded first from upper block. If $a_{n_{1B}+1},...,a_{n_{A1}}$ is repeated within the range that they are located (i.e., on block 6 in Figure \ref{fig:4.22}), it does not affect the decoding since the signals can be decoded from highest level of it and their effect can be cancelled. Also if If $a_{n_{1B}+1},...,a_{n_{A1}}$ is repeated within the lower levels, it does not affect the decoding since they are decoded first from upper block. If any of them is repeated in any lower level by $R_2$, it is out of range and does not have effect on decoding of forward channel.

7. $(u_1,v_4,r_2,s_3)$: Figure \ref{fig:4.23} depicts the received signal at node $B$ (ignoring the effect of transmitted signal from $B$) assuming that both relays use Relay Strategy 0.

First, consider the case that the repetitions in favor of the forward and backward channels happen in different relays, i.e., $m_1, n_2=0$. In this case, $R_2$ repeats in favor of the forward channel (uses Relay Strategy $(2,0)$) and $R_1$ repeats in favor of the backward channel (uses Relay Strategy $(0,2)$ or $(0,6)$). Repeating $a_{n_{A1}-n_{A2}+1},...,a_{n_{A1}-n_{A2}+n_{2B}-n_{1B}}$ by $R_1$ in favor of the backward direction communication does not affect the achievability of the forward channel because they have been decoded from the top levels of the received signal from the other relay ($R_2$) where there is no interference from $R_1$ (i.e., on block 1 in Figure \ref{fig:4.23}). Also, repeating the $a_{1},...,a_{n_{A1}-n_{A2}}$ by $R_1$ within the top $n_{A1}-n_{A2}$ streams (i.e., on block 2 in Figure \ref{fig:4.23}) does not affect the decoding since the signals can be decoded from top and their effect can be cancelled. Repeating any of them on the next $n_{2B}-n_{1B}$ lower levels by $R_1$ (i.e., on block 3 in Figure \ref{fig:4.23}) does not affect the decoding because we can decode those upper $n_{A1}-n_{A2}$ levels ($a_{1},...,a_{n_{A1}-n_{A2}}$) first and cancel the effect of the repeated signals. If $a_{1},...,a_{n_{A1}-n_{A2}}$ is repeated by $R_1$ on the next lower $2(n_{A2}-n_{A1}-n_{2B})+3n_{1B}$ levels (i.e., on blocks 4 and 5 in Figure \ref{fig:4.23}) does not affect the decoding because node $B$ decodes them from the last $2(n_{A2}-n_{A1}-n_{2B})+3n_{1B}$ levels from $R_1$. If $a_{1},...,a_{n_{A1}-n_{A2}}$ is repeated on the next lower $n_{A1}-n_{1B}$ levels (i.e., on blocks 6 and 7 in Figure \ref{fig:4.23}), it does not affect the decoding because node $B$ decodes these levels at the end. Also if $a_{1},...,a_{n_{A1}-n_{A2}}$ is repeated on the next lower $n_{2B}-n_{A2}$ levels (i.e., on block 8 in Figure \ref{fig:4.23}), in case that the repetitions of some streams from two relays are from the same level and are repeated on the same level, i.e., they create an equation as higher levels, $R_2$ does not do the repeating for those levels as was explained in the definition of the strategies, so there is no problem in decoding of forward direction channel. Also if $a_{n_{A1}-n_{A2}+n_{2B}-n_{1B}+1},...,a_{n_{1B}}$ is repeated by $R_1$ in the same level range that they are located, it does not affect the decoding since the signals can be decoded from top and their effect can be cancelled.  If any of them is repeated in any lower level by $R_1$, it is out of range and does not have effect on decoding of forward channel.

Now, take the case that the repetitions in favor of the forward and backward channels happen in the same relay ($R_2$), i.e., $(m_1,n_1)=(0,0)$. In this case, $R_2$ repeats in favor of both directions ($(m_2,m_2)\in \{(2,2), (2,6)\}$) and $R_1$ uses Relay Strategy $(0,0)$. Repeating $a_{n_{A1}-n_{A2}+1},...,a_{n_{A1}-n_{A2}+n_{2B}-n_{1B}}$ in favor of the backward channel does not affect the achievability of the forward channel because they have been decoded from the top levels of the received signal from the same relay (i.e., on block 1 in Figure \ref{fig:4.23}). Also, repeating the $a_{2(n_{A1}-n_{A2}+n_{2B}-n_{1B})+1},...,a_{n_{1B}}$ does not affect the decoding because they are decoded from the last $3 n_{1B}+2(n_{A2}-n_{A1}-n_{2B})$ levels from the other relay ($R_1$). Repeating any of $a_{2(n_{2B}-n_{1B}+n_{A1}-n_{A2})+1},...,a_{3(n_{A1}-n_{A2})+2(n_{2B}-n_{1B})}$ on the lowest levels (i.e., on block 8 in Figure \ref{fig:4.23}), does not affect the decoding because as it was explained in the definition of the strategies, in case that the repetitions of some streams from two relays are from the same levels and are supposed to be repeated on the same level, the repeating relay ($R_2$) does not do the repeating for those levels and repeats those streams only one time. Also, repeating the $a_{n_{A1}-n_{A2}+n_{2B}-n_{1B}+1},...,a_{2(n_{A1}-n_{A2})+n_{2B}-n_{1B}}$ does not affect the decoding because these signals are decoded from lower levels (i.e. the block 4 in Figure \ref{fig:4.23}). If $a_{2(n_{A1}-n_{A2})+n_{2B}-n_{1B}+1},...,a_{2(n_{A1}-n_{A2}+n_{2B}-n_{1B})}$ is repeated within the range that they are located (i.e., on block 3 in Figure \ref{fig:4.23}), it does not affect the decoding since the signals can be decoded from highest level of it and their effect can be cancelled. Also if $a_{2(n_{A1}-n_{A2})+n_{2B}-n_{1B}+1},...,a_{2(n_{A1}-n_{A2}+n_{2B}-n_{1B})}$ is repeated within the lower levels, it does not affect the decoding since they are decoded first from upper block. If $a_{n_{1B}+1},...,a_{n_{A1}}$ is repeated within the range that they are located (i.e., on blocks 6 and 7 in Figure \ref{fig:4.23}), it does not affect the decoding since the signals can be decoded from highest level of it and their effect can be cancelled. Also if If $a_{n_{1B}+1},...,a_{n_{A1}}$ is repeated within the lower levels, it does not affect the decoding since they are decoded first from upper block. If any of them is repeated in any lower level by $R_2$, it is out of range and does not have effect on decoding of forward channel.

8. $(u_2,w_1)$: Figure \ref{fig:5} depicts the received signal at node $B$ (ignoring the effect of transmitted signal from $B$) assuming that both relays use Relay Strategy 0.

First, consider the case that the repetitions in favor of the forward and backward channels happen in different relays, i.e., $m_1, n_2=0$. In this case, $R_2$ repeats in favor of the forward channel (uses Relay Strategy $(2,0)$) and $R_1$ repeats in favor of the backward channel (uses Relay Strategy $(0,2)$ or $(0,6)$). Repeating $a_{n_{A1}-n_{A2}+1},...,a_{n_{1B}}$ by $R_1$ in favor of the backward direction communication does not affect the achievability of the forward channel because they have been decoded from the top levels of the received signal from the other relay ($R_2$) where there is no interference from $R_1$ (i.e., on block 1 in Figure \ref{fig:5}). Also, repeating the $a_{1},...,a_{n_{A2}-(n_{2B}-n_{1B})}$ by $R_1$ within the top $n_{A2}-n_{2B}+n_{1B}$ streams  (i.e., on block 2 in Figure \ref{fig:5}) does not affect the decoding since the signals can be decoded from top and their effect can be cancelled. Repeating any of them on the next $n_{A1}-2n_{A2}+n_{2B}-n_{1B}$ lower levels by $R_1$ (i.e., on block 3 in Figure \ref{fig:5}) does not affect the decoding because we can decode those upper $n_{A2}-(n_{2B}-n_{1B})$ levels ($a_{1},...,a_{n_{A2}-(n_{2B}-n_{1B})}$) first and cancel the effect of the repeated signals. If $a_{1},...,a_{n_{A2}-(n_{2B}-n_{1B})}$ is repeated by $R_1$ on the next lower $n_{1B}-n_{A1}+n_{A2}$ levels (i.e., on block 3 in Figure \ref{fig:5}), in case that the repetitions of some streams from two relays are from the same level and are repeated on the same level, i.e., they create an equation as higher levels, $R_2$ does not do the repeating for those levels as was explained in the definition of the strategies, so there is no problem in decoding of forward direction channel. If $a_{n_{A2}-n_{2B}+n_{1B}+1},...,a_{n_{A1}-n_{A2}}$ is repeated within the range that they are located (i.e., on block 3 in Figure \ref{fig:5}), it does not affect the decoding since the signals can be decoded from highest level of it and their effect can be cancelled. Also if $a_{n_{A2}-n_{2B}+n_{1B}+1},...,a_{n_{A1}-n_{A2}}$ is repeated within the next lower $n_{1B}-n_{A1}+n_{A2}$ levels, it does not affect the decoding since they are decoded first from upper block. If any of them is repeated in any lower level by $R_1$, it is out of range and does not have effect on decoding of forward channel.

Now, take the case that the repetitions in favor of the forward and backward channels happen in the same relay ($R_2$), i.e., $(m_1,n_1)=(0,0)$. In this case, $R_2$ repeats in favor of both directions ($(m_2,m_2)\in \{(2,2), (2,6)\}$) and $R_1$ uses Relay Strategy $(0,0)$. Repeating $a_{n_{A1}-n_{A2}+1},...,a_{n_{A1}-n_{A2}+n_{2B}-n_{1B}}$ in favor of the backward channel does not affect the achievability of the forward channel because these signals have been decoded from the top levels of the received signal from the same relay ($R_2$). Also, repeating the $a_{n_{A1}-n_{A2}+n_{2B}-n_{1B}+1},...,a_{n_{A1}}$ in the same level range that they are located (i.e., on block 2 in Figure \ref{fig:5}) does not affect the decoding since the signals can be decoded from top and their effect can be cancelled. Repeating any of them on the next lower $n_{A1}-2n_{A2}+n_{2B}-n_{1B}$ levels (i.e., on block 3 in Figure \ref{fig:5}) does not affect the decoding because node $B$ decodes those upper $n_{A2}-(n_{2B}-n_{1B})$ levels ($a_{n_{A1}-n_{A2}+n_{2B}-n_{1B}+1},...,a_{n_{A1}}$) first and cancel the effect of the repeated signals. If $a_{n_{A1}-n_{A2}+n_{2B}-n_{1B}+1},...,a_{n_{A1}}$ is repeated on the next $n_{1B}+n_{A2}-n_{A1}$ levels (i.e., on block 4 in Figure \ref{fig:5}), it does not affect the decoding because as it was explained in the definition of the strategies, in case that the repetitions of some streams from two relays are from the same levels and are supposed to be repeated on the same level, the repeating relay ($R_2$) does not do the repeating for those levels and repeats those streams only one time. If any of them is repeated in any lower level by $R_2$, it is out of range and does not have effect on decoding of forward channel.

9. $(u_2,w_2)$: Figure \ref{fig:6} depicts the received signal at node $B$ (ignoring the effect of transmitted signal from $B$) assuming that both relays use Relay Strategy 0.

First, consider the case that the repetitions in favor of the forward and backward channels happen in different relays, i.e., $m_1, n_2=0$. In this case, $R_2$ repeats in favor of the forward channel (uses Relay Strategy $(2,0)$) and $R_1$ repeats in favor of the backward channel (uses Relay Strategy $(0,2)$ or $(0,6)$). Repeating $a_{n_{A1}-n_{A2}+1},...,a_{n_{A1}-n_{A2}+n_{2B}-n_{1B}}$ by $R_1$ in favor of the backward direction communication does not affect the achievability of the forward channel because they have been decoded from the top levels of the received signal from the other relay ($R_2$) where there is no interference from $R_1$. Also, repeating the $a_{1},...,a_{n_{A2}-(n_{2B}-n_{1B})}$ by $R_1$ within the top $n_{A2}-n_{2B}+n_{1B}$ streams  (i.e., on block 2 in Figure \ref{fig:6}) does not affect the decoding since the signals can be decoded from top and their effect can be cancelled. Repeating any of them on the next lower $n_{A1}-2n_{A2}+n_{2B}-n_{1B}$ levels by $R_1$ (i.e., on blocks 3 and 4 in Figure \ref{fig:6}) does not affect the decoding because node $B$ decodes those upper $n_{A2}-(n_{2B}-n_{1B})$ levels ($a_{1},...,a_{n_{A2}-(n_{2B}-n_{1B})}$) first and cancel the effect of the repeated signals. If $a_{1},...,a_{n_{A2}-(n_{2B}-n_{1B})}$ is repeated by $R_1$ on the next lower $n_{2B}-n_{1B}$ levels (i.e., on block 5 in Figure \ref{fig:6}), in case that the repetitions of some streams from two relays are from the same level and are repeated on the same level, i.e., they create an equation as higher levels, $R_2$ does not do the repeating for those levels as was explained in the definition of the strategies, so there is no problem in decoding of forward direction channel. If $a_{n_{A2}-n_{2B}+n_{1B}+1},...,a_{n_{A1}-n_{A2}}$ is repeated within the range that they are located (i.e., on block 4 in Figure \ref{fig:6}), it does not affect the decoding since the signals can be decoded from highest level of it and their effect can be cancelled. Also if $a_{n_{A2}-n_{2B}+n_{1B}+1},...,a_{n_{A1}-n_{A2}}$ is repeated within the next lower $n_{1B}-n_{A1}+n_{A2}$ levels, it does not affect the decoding since they are decoded first from upper block. If any of them is repeated in any lower level by $R_1$, it is out of range and does not have effect on decoding of forward channel.

Now, take the case that the repetitions in favor of the forward and backward channels happen in the same relay ($R_2$), i.e., $(m_1,n_1)=(0,0)$. In this case, $R_2$ repeats in favor of both directions ($(m_2,m_2)\in \{(2,2), (2,6)\}$) and $R_1$ uses Relay Strategy $(0,0)$. Repeating $a_{n_{A1}-n_{A2}+1},...,a_{n_{A1}-n_{A2}+n_{2B}-n_{1B}}$ in favor of the backward channel does not affect the achievability of the forward channel because these signals have been decoded from the top levels of the received signal from the same relay. Also, repeating the $a_{n_{A1}-n_{A2}+n_{2B}-n_{1B}+1},...,a_{n_{A1}}$ in the same level range that they are located does not affect the decoding since the signals can be decoded from top and their effect can be cancelled. Repeating any of them on the next $n_{A1}-2n_{A2}+n_{2B}-n_{1B}$ lower levels (i.e., on block 4 in Figure \ref{fig:6}), does not affect the decoding because $B$ decodes those upper $n_{A2}-(n_{2B}-n_{1B})$ levels ($a_{n_{A1}-n_{A2}+n_{2B}-n_{1B}+1},...,a_{n_{A1}}$) first and cancel the effect of the repeated signals. If $a_{n_{A1}-n_{A2}+n_{2B}-n_{1B}+1},...,a_{n_{A1}}$ is repeated on the next $n_{1B}+n_{A2}-n_{A1}$ lower levels (i.e., on block 5 in Figure \ref{fig:6}), it does not affect the decoding because as it was explained in the definition of the strategies, in case that the repetitions of some streams from two relays are from the same levels and are supposed to be repeated on the same level, the repeating relay ($R_2$) does not do the repeating for those levels and repeats those streams only one time. If any of them is repeated in any lower level by $R_2$, it is out of range and does not have effect on decoding of forward channel.

\subsection{Forward channel is of Case 4.1.2}

Now let assume that the forward channel is of Case 4.1.2 Type 1. For Type 2 proof is similar and thus is omitted. Then we have $C_{AB}=\max\{n_{1B},n_{2B}\}=n_{B1}$. We consider the partition shown in Figure \ref{fig:sup2} and show that the message can be decoded when the backward channel is in any of Cases 3.1.2 or 4.1.2. It can be seen that all the streams can be decoded with the same order as in Relay Strategy 6 in Appendix \ref{apdx_sc2}.

1. $(u_1,v_1)$: Figure \ref{fig:11.1} depicts the received signal at node $B$ (ignoring the effect of transmitted signal from $B$) assuming that both relays use Relay Strategy 0.

First, consider the case that the repetitions in favor of the forward and backward channels happen in different relays, i.e., $m_2, n_1=0$. In this case, $R_1$ repeats in favor of the forward channel (uses Relay Strategy $(6,0)$) and $R_2$ repeats in favor of the backward channel (uses Relay Strategy $(0,2)$ or $(0,6)$). Repeating $a_{n_{A2}-n_{A1}+1},...,a_{n_{1B}-n_{2B}+n_{A2}-n_{A1}}$ by $R_2$ in favor of the backward direction communication does not affect the achievability of the forward channel because they have been decoded from the top levels of the received signal from the other relay ($R_1$) where there is no interference from $R_2$. Also, repeating the $a_{1},...,a_{n_{A2}-n_{A1}}$ by $R_2$ within the top $n_{A2}-n_{A1}$ streams  (i.e., on block 3 in Figure \ref{fig:11.1}) does not affect the decoding since the signals can be decoded from top and their effect can be cancelled. Repeating any of them on the next $n_{2B}-2(n_{A2}-n_{A1})$ lower levels by $R_2$ (i.e., on block 4 in Figure \ref{fig:11.1}) does not affect the decoding because we can decode those upper $n_{A2}-n_{A1}$ levels ($a_{1},...,a_{n_{A2}-n_{A1}}$) first and cancel the effect of the repeated signals. If $a_{1},...,a_{n_{A2}-n_{A1}}$ is repeated by $R_2$ on the next lower levels (i.e., on block 5 in Figure \ref{fig:11.1}), in case that the repetitions of some streams from two relays are from the same level and are repeated on the same level, i.e., they create an equation as higher levels, $R_1$ does not do the repeating for those levels as was explained in the definition of the strategies, so there is no problem in decoding of forward direction channel. If any of them is repeated in any lower level by $R_2$, it is out of range and does not have effect on decoding of forward channel.

Now, take the case that the repetitions in favor of the forward and backward channels happen in the same relay ($R_1$), i.e., $(m_2,n_2)=(0,0)$. In this case, $R_1$ repeats in favor of both directions ($(m_1,m_1)\in \{(6,2), (6,6)\}$) and $R_2$ uses Relay Strategy $(0,0)$. Repeating $a_{n_{A2}-n_{A1}+1},...,a_{n_{1B}-n_{2B}+n_{A2}-n_{A1}}$ in favor of the backward channel does not affect the achievability of the forward channel because these signals have been decoded from the top levels of the received signal from the same relay. Also, repeating the $a_{n_{1B}-n_{2B}+n_{A2}-n_{A1}+1},...,a_{n_{1B}-n_{2B}+2(n_{A2}-n_{A1})}$ in the same level range that they are located (i.e., on block 3 in Figure \ref{fig:11.1}) does not affect the decoding since the signals can be decoded from top and their effect can be cancelled. Repeating any of them on the next $2(n_{A1}-n_{A2})+n_{2B}$ lower levels (i.e., on block 4 in Figure \ref{fig:11.1}) does not affect the decoding because node $B$ decodes the upper $n_{A2}-n_{A1}$ levels ($a_{n_{1B}-n_{2B}+n_{A2}-n_{A1}+1},...,a_{n_{1B}-n_{2B}+2(n_{A2}-n_{A1})}$) first and cancel the effect of the repeated signals. If $a_{n_{1B}-n_{2B}+n_{A2}-n_{A1}+1},...,a_{n_{1B}-n_{2B}+2(n_{A2}-n_{A1})}$ is repeated on the next $n_{A2}-n_{A1}$ lower levels (i.e., on block 5 in Figure \ref{fig:11.1}), it does not affect the decoding because as it was explained in the definition of the strategies, in case that the repetitions of some streams from two relays are from the same levels and are supposed to be repeated on the same level, the repeating relay ($R_1$) does not do the repeating for those levels and repeats those streams only one time. If any of them is repeated in any lower level by $R_1$, it is out of range and does not have effect on decoding of forward channel.

2. $(u_1,v_2,r_1)$: Figure \ref{fig:11.21} depicts the received signal at node $B$ (ignoring the effect of transmitted signal from $B$) assuming that both relays use Relay Strategy 0. The actual repetitions will be described below to show that messages can be decoded with the proposed strategies.

First, consider the case that the repetitions in favor of the forward and backward channels happen in different relays, i.e., $m_2, n_1=0$. In this case, $R_1$ repeats in favor of the forward channel (uses Relay Strategy $(6,0)$) and $R_2$ repeats in favor of the backward channel (uses Relay Strategy $(0,2)$ or $(0,6)$). Repeating $a_{n_{A2}-n_{A1}+1},...,a_{n_{1B}-n_{2B}+n_{A2}-n_{A1}}$ by $R_2$ in favor of the backward direction communication does not affect the achievability of the forward channel because they have been decoded from the top levels of the received signal from the other relay ($R_1$) (i.e., on block 1 in Figure \ref{fig:11.21}) where there is no interference from $R_2$. Also, repeating the $a_{1},...,a_{n_{A2}-n_{A1}}$ by $R_2$ within the top $n_{A2}-n_{A1}$ streams  (i.e., on blocks 2 and 3 in Figure \ref{fig:11.21}) does not affect the decoding since the signals can be decoded from top and their effect can be cancelled. Repeating any of them on the next $n_{2B}-2(n_{A2}-n_{A1})$ lower levels (i.e., on block 4 in Figure \ref{fig:11.21}) by $R_2$ does not affect the decoding because we can decode those upper $n_{A2}-n_{A1}$ levels ($a_{1},...,a_{n_{A2}-n_{A1}}$) first and cancel the effect of the repeated signals. If $a_{1},...,a_{n_{A2}-n_{A1}}$ is repeated by $R_2$ on the next lower levels (i.e., on blocks 5 and 6 in Figure \ref{fig:11.21}), in case that the repetitions of some streams from two relays are from the same level and are repeated on the same level, i.e., they create an equation as higher levels, $R_1$ does not do the repeating for those levels as was explained in the definition of the strategies, so there is no problem in decoding of forward direction channel. If any of them is repeated in any lower level by $R_2$, it is out of range and does not have effect on decoding of forward channel.

Now, take the case that the repetitions in favor of the forward and backward channels happen in the same relay ($R_1$), i.e., $(m_2,n_2)=(0,0)$. In this case, $R_1$ repeats in favor of both directions ($(m_1,m_1)\in \{(6,2), (6,6)\}$) and $R_2$ uses Relay Strategy $(0,0)$. Repeating $a_{n_{A2}-n_{A1}+1},...,a_{n_{1B}-n_{2B}+n_{A2}-n_{A1}}$ in favor of the backward channel does not affect the achievability of the forward channel because these signals have been decoded from the top levels of the received signal from the same relay. Also, repeating the $a_{n_{1B}-n_{2B}+n_{A2}-n_{A1}+1},...,a_{n_{1B}-n_{2B}+2(n_{A2}-n_{A1})}$ in the same level range that they are located (i.e., on blocks 2 and 3 in Figure \ref{fig:11.21}) does not affect the decoding since the signals can be decoded from top and their effect can be cancelled. Repeating any of them on the next $2(n_{A1}-n_{A2})+n_{2B}$ lower levels (i.e., on block 4 in Figure \ref{fig:11.21}) does not affect the decoding because node $B$ decodes the upper $n_{A2}-n_{A1}$ levels ($a_{n_{1B}-n_{2B}+n_{A2}-n_{A1}+1},...,a_{n_{1B}-n_{2B}+2(n_{A2}-n_{A1})}$) first and cancel the effect of the repeated signals. If $a_{n_{1B}-n_{2B}+n_{A2}-n_{A1}+1},...,a_{n_{1B}-n_{2B}+2(n_{A2}-n_{A1})}$ is repeated on the next $n_{A2}-n_{A1}$ lower levels (i.e., on blocks 5 and 6 in Figure \ref{fig:11.21}), it does not affect the decoding because as it was explained in the definition of the strategies, in case that the repetitions of some streams from two relays are from the same levels and are supposed to be repeated on the same level, the repeating relay ($R_1$) does not do the repeating for those levels and repeats those streams only one time. If $a_{2(n_{A2}-n_{A1})+n_{1B}-n_{2B}+1},...,a_{n_{1B}}$ is repeated within the range that they are located (i.e., on block 4 in Figure \ref{fig:11.21}), it does not affect the decoding since the signals can be decoded from highest level of it and their effect can be cancelled. Also if $a_{2(n_{A2}-n_{A1})+n_{1B}-n_{2B}+1},...,a_{n_{1B}}$ is repeated on the lower levels, it does not affect the decoding since they are decoded first from upper block. If any of them is repeated in any lower level by $R_1$, it is out of range and does not have effect on decoding of forward channel.

3. $(u_1,v_2,r_2,s_1,q_1)$: Figure \ref{fig:11.22.1.a} depicts the received signal at node $B$ (ignoring the effect of transmitted signal from $B$) assuming that both relays use Relay Strategy 0.

First, consider the case that the repetitions in favor of the forward and backward channels happen in different relays, i.e., $m_2, n_1=0$. In this case, $R_1$ repeats in favor of the forward channel (uses Relay Strategy $(6,0)$) and $R_2$ repeats in favor of the backward channel (uses Relay Strategy $(0,2)$ or $(0,6)$). Repeating $a_{n_{A2}-n_{A1}+1},...,a_{n_{1B}-n_{2B}+n_{A2}-n_{A1}}$ by $R_2$ in favor of the backward direction communication does not affect the achievability of the forward channel because they have been decoded from the top levels of the received signal from the other relay ($R_1$) where there is no interference from $R_2$. Also, repeating the $a_{1},...,a_{n_{A2}-n_{A1}}$ by $R_2$ within the top $n_{A2}-n_{A1}$ streams  (i.e., on block 2 in Figure \ref{fig:11.22.1.a}) does not affect the decoding since the signals can be decoded from top and their effect can be cancelled. Repeating any of them on the next $n_{1B}-n_{2B}$ lower levels by $R_2$ (i.e., on block 3 in Figure \ref{fig:11.22.1.a}) does not affect the decoding because we can decode those upper $n_{A2}-n_{A1}$ levels ($a_{1},...,a_{n_{A2}-n_{A1}}$) first and cancel the effect of the repeated signals. If $a_{1},...,a_{n_{A2}-n_{A1}}$ is repeated by $R_2$ on the next lower levels, in case that the repetitions of some streams from two relays are from the same level and are repeated on the same level, i.e., they create an equation as higher levels, $R_1$ does not do the repeating for those levels as was explained in the definition of the strategies, so there is no problem in decoding of forward direction channel. If any of them is repeated in any lower level by $R_2$, it is out of range and does not have effect on decoding of forward channel.

Now, take the case that the repetitions in favor of the forward and backward channels happen in the same relay ($R_1$), i.e., $(m_2,n_2)=(0,0)$. In this case, $R_1$ repeats in favor of both directions ($(m_1,m_1)\in \{(6,2), (6,6)\}$) and $R_2$ uses Relay Strategy $(0,0)$. Repeating $a_{n_{A2}-n_{A1}+1},...,a_{n_{1B}-n_{2B}+n_{A2}-n_{A1}}$ in favor of the backward channel does not affect the achievability of the forward channel because these signals have been decoded from the top levels of the received signal from the same relay. Also, repeating the $a_{n_{1B}-n_{2B}+n_{A2}-n_{A1}+1},...,a_{n_{1B}-n_{2B}+2(n_{A2}-n_{A1})}$ in the same level range that they are located (i.e., on block 2 in Figure \ref{fig:11.22.1.a}) does not affect the decoding since the signals can be decoded from top and their effect can be cancelled. Repeating any of them on the next $2(n_{A1}-n_{A2})+n_{2B}$ lower levels does not affect the decoding because node $B$ decodes the upper $n_{A2}-n_{A1}$ levels ($a_{n_{1B}-n_{2B}+n_{A2}-n_{A1}+1},...,a_{n_{1B}-n_{2B}+2(n_{A2}-n_{A1})}$) first and cancel the effect of the repeated signals. If $a_{n_{1B}-n_{2B}+n_{A2}-n_{A1}+1},...,a_{n_{1B}-n_{2B}+2(n_{A2}-n_{A1})}$ is repeated on the lowest $n_{A2}-n_{A1}$ levels (i.e., on block 8 in Figure \ref{fig:11.22.1.a}), it does not affect the decoding because as it was explained in the definition of the strategies, in case that the repetitions of some streams from two relays are from the same levels and are supposed to be repeated on the same level, the repeating relay ($R_1$) does not do the repeating for those levels and repeats those streams only one time. If any of them is repeated in any lower level by $R_1$, it is out of range and does not have effect on decoding of forward channel.

4. $(u_1,v_2,r_2,s_1,q_2)$: Figure \ref{fig:11.22.1.b} depicts the received signal at node $B$ (ignoring the effect of transmitted signal from $B$) assuming that both relays use Relay Strategy 0.

First, consider the case that the repetitions in favor of the forward and backward channels happen in different relays, i.e., $m_2, n_1=0$. In this case, $R_1$ repeats in favor of the forward channel (uses Relay Strategy $(6,0)$) and $R_2$ repeats in favor of the backward channel (uses Relay Strategy $(0,2)$ or $(0,6)$). Repeating $a_{n_{A2}-n_{A1}+1},...,a_{n_{1B}-n_{2B}+n_{A2}-n_{A1}}$ by $R_2$ in favor of the backward direction communication does not affect the achievability of the forward channel because they have been decoded from the top levels of the received signal from the other relay ($R_1$) where there is no interference from $R_2$. Also, repeating the $a_{1},...,a_{n_{A2}-n_{A1}}$ by $R_2$ within the top $n_{A2}-n_{A1}$ streams (i.e., on blocks 2 and 3 in Figure \ref{fig:11.22.1.b}) does not affect the decoding since the signals can be decoded from top and their effect can be cancelled. Repeating any of them on the next $n_{1B}-n_{2B}$ lower levels by $R_2$ (i.e., on block 4 in Figure \ref{fig:11.22.1.b}) does not affect the decoding because we can decode those upper $n_{A2}-n_{A1}$ levels ($a_{1},...,a_{n_{A2}-n_{A1}}$) first and cancel the effect of the repeated signals. If $a_{1},...,a_{n_{A2}-n_{A1}}$ is repeated by $R_2$ on the next lower levels, in case that the repetitions of some streams from two relays are from the same level and are repeated on the same level, i.e., they create an equation as higher levels, $R_1$ does not do the repeating for those levels as was explained in the definition of the strategies, so there is no problem in decoding of forward direction channel. If any of them is repeated in any lower level by $R_2$, it is out of range and does not have effect on decoding of forward channel.

Now, take the case that the repetitions in favor of the forward and backward channels happen in the same relay ($R_1$), i.e., $(m_2,n_2)=(0,0)$. In this case, $R_1$ repeats in favor of both directions ($(m_1,m_1)\in \{(6,2), (6,6)\}$) and $R_2$ uses Relay Strategy $(0,0)$. Repeating $a_{n_{A2}-n_{A1}+1},...,a_{n_{1B}-n_{2B}+n_{A2}-n_{A1}}$ in favor of the backward channel does not affect the achievability of the forward channel because these signals have been decoded from the top levels of the received signal from the same relay ($R_1$). Also, repeating the $a_{n_{1B}-n_{2B}+n_{A2}-n_{A1}+1},...,a_{n_{1B}-n_{2B}+2(n_{A2}-n_{A1})}$ in the same level range that they are located (i.e., on blocks 2 and 3 in Figure \ref{fig:11.22.1.b}) does not affect the decoding since the signals can be decoded from top and their effect can be cancelled. Repeating any of them on the next $2(n_{A1}-n_{A2})+n_{2B}$ lower levels (i.e., on blocks 4 and 5 in Figure \ref{fig:11.22.1.b}) does not affect the decoding because node $B$ decodes the upper $n_{A2}-n_{A1}$ levels ($a_{n_{1B}-n_{2B}+n_{A2}-n_{A1}+1},...,a_{n_{1B}-n_{2B}+2(n_{A2}-n_{A1})}$) first and cancel the effect of the repeated signals. If $a_{n_{1B}-n_{2B}+n_{A2}-n_{A1}+1},...,a_{n_{1B}-n_{2B}+2(n_{A2}-n_{A1})}$ is repeated on the lowest $n_{A2}-n_{A1}$ levels (i.e., on blocks 8 and 9 in Figure \ref{fig:11.22.1.b}), it does not affect the decoding because as it was explained in the definition of the strategies, in case that the repetitions of some streams from two relays are from the same levels and are supposed to be repeated on the same level, the repeating relay ($R_1$) does not do the repeating for those levels and repeats those streams only one time. If any of them is repeated in any lower level by $R_1$, it is out of range and does not have effect on decoding of forward channel.

5. $(u_1,v_2,r_2,s_2)$: Figure \ref{fig:11.22.2} depicts the received signal at node $B$ (ignoring the effect of transmitted signal from $B$) assuming that both relays use Relay Strategy 0.

First, consider the case that the repetitions in favor of the forward and backward channels happen in different relays, i.e., $m_2, n_1=0$. In this case, $R_1$ repeats in favor of the forward channel (uses Relay Strategy $(6,0)$) and $R_2$ repeats in favor of backward channel (uses Relay Strategy $(0,2)$ or $(0,6)$). Repeating $a_{n_{A2}-n_{A1}+1},...,a_{n_{1B}-n_{2B}+n_{A2}-n_{A1}}$ by $R_2$ in favor of the backward direction communication does not affect the achievability of the forward channel because they have been decoded from the top levels of the received signal from the other relay ($R_1$) where there is no interference from $R_2$. Also, repeating the $a_{1},...,a_{n_{A2}-n_{A1}}$ by $R_2$  within the top $n_{A2}-n_{A1}$ streams (i.e., on blocks 2 and 3 in Figure \ref{fig:11.22.2}) does not affect the decoding since the signals can be decoded from top and their effect can be cancelled. Repeating any of them on the next $n_{1B}-n_{2B}$ lower levels by $R_2$ (i.e., on blocks 4 and 5 in Figure \ref{fig:11.22.2}) does not affect the decoding because we can decode those upper $n_{A2}-n_{A1}$ levels ($a_{1},...,a_{n_{A2}-n_{A1}}$) first and cancel the effect of the repeated signals. If $a_{1},...,a_{n_{A2}-n_{A1}}$ is repeated by $R_2$ on the next lower levels, in case that the repetitions of some streams from two relays are from the same level and are repeated on the same level, i.e., they create an equation as higher levels, $R_1$ does not do the repeating for those levels as was explained in the definition of the strategies, so there is no problem in decoding of forward direction channel. If any of them is repeated in any lower level by $R_2$, it is out of range and does not have effect on decoding of forward channel.

Now, take the case that the repetitions in favor of the forward and backward channels happen in the same relay ($R_1$), i.e., $(m_2,n_2)=(0,0)$. In this case, $R_1$ repeats in favor of both directions ($(m_1,m_1)\in \{(6,2), (6,6)\}$) and $R_2$ uses Relay Strategy $(0,0)$. Repeating $a_{n_{A2}-n_{A1}+1},...,a_{n_{1B}-n_{2B}+n_{A2}-n_{A1}}$ in favor of the backward channel does not affect the achievability of the forward channel because these signals have been decoded from the top levels of the received signal from the same relay. Also, repeating the $a_{n_{1B}-n_{2B}+n_{A2}-n_{A1}+1},...,a_{n_{1B}-n_{2B}+2(n_{A2}-n_{A1})}$ in the same level range that they are located (i.e., on blocks 2 and 3 in Figure \ref{fig:11.22.2}) does not affect the decoding since the signals can be decoded from top and their effect can be cancelled. Repeating any of them on the next $2(n_{A1}-n_{A2})+n_{2B}$ lower levels does not affect the decoding because node $B$ decodes the upper $n_{A2}-n_{A1}$ levels ($a_{n_{1B}-n_{2B}+n_{A2}-n_{A1}+1},...,a_{n_{1B}-n_{2B}+2(n_{A2}-n_{A1})}$) first and cancel the effect of the repeated signals. If $a_{n_{1B}-n_{2B}+n_{A2}-n_{A1}+1},...,a_{n_{1B}-n_{2B}+2(n_{A2}-n_{A1})}$ is repeated on the next $n_{A2}-n_{A1}$ levels, it does not affect the decoding because as it was explained in the definition of the strategies, in case that the repetitions of some streams from two relays are from the same levels and are supposed to be repeated on the same level, the repeating relay ($R_1$) does not do the repeating for those levels and repeats those streams only one time. If any of them is repeated in any lower level by $R_1$, it is out of range and does not have effect on decoding of forward channel.

6. $(u_2,w_1)$: Figure \ref{fig:22.1} depicts the received signal at node $B$ (ignoring the effect of transmitted signal from $B$) assuming that both relays use Relay Strategy 0.

First, consider the case that the repetitions in favor of the forward and backward channels happen in different relays, i.e., $m_2, n_1=0$. In this case, $R_1$ repeats in favor of the forward channel (uses Relay Strategy $(6,0)$) and $R_2$ repeats in favor of the backward channel (uses Relay Strategy $(0,2)$ or $(0,6)$). Repeating $a_{n_{A2}-n_{A1}+1},...,a_{n_{2B}}$ by $R_2$ in favor of the backward direction communication does not affect the achievability of the forward channel because they have been decoded from the top levels of the received signal from the other relay ($R_1$) where there is no interference from $R_2$. Also, repeating the $a_{1},...,a_{n_{2B}+n_{A1}-n_{A2}}$ by $R_2$  within the top $n_{2B}-n_{A2}+n_{A1}$ streams (i.e., on block 2 in Figure \ref{fig:22.1}) does not affect the decoding since the signals can be decoded from top and their effect can be cancelled. Repeating any of them on the next $2(n_{A2}-n_{A1})-n_{2B}$ lower levels by $R_2$ (i.e., on block 3 in Figure \ref{fig:22.1}) does not affect the decoding because node $B$ decodes those upper $n_{2B}+n_{A1}-n_{A2}$ levels ($a_{1},...,a_{n_{2B}+n_{A1}-n_{A2}}$) first and cancel the effect of the repeated signals. If $a_{1},...,a_{n_{2B}+n_{A1}-n_{A2}}$ is repeated by $R_2$ on the next lower $n_{2B}+n_{A1}-n_{A2}$ levels (i.e., on block 4 in Figure \ref{fig:22.1}), in case that the repetitions of some streams from two relays are from the same level and are repeated on the same level, i.e., they create an equation as higher levels, $R_1$ does not do the repeating for those levels as was explained in the definition of the strategies, so there is no problem in decoding of forward direction channel. If $a_{n_{2B}-(n_{A2}-n_{A1})+1},...,a_{n_{A2}-n_{A1}}$ is repeated within the range that they are located (i.e., on block 3 in Figure \ref{fig:22.1}), it does not affect the decoding since the signals can be decoded from highest level of it and their effect can be cancelled. Also if $a_{n_{2B}-(n_{A2}-n_{A1})+1},...,a_{n_{A2}-n_{A1}}$ is repeated on the $n_{2B}-(n_{A2}-n_{A1})$ lower levels (i.e., on block 4 in Figure \ref{fig:22.1}), it does not affect the decoding since they are decoded first from upper block. If any of them is repeated in any lower level by $R_2$, it is out of range and does not have effect on decoding of forward channel.

Now, take the case that the repetitions in favor of the forward and backward channels happen in the same relay ($R_1$), i.e., $(m_2,n_2)=(0,0)$. In this case, $R_1$ repeats in favor of both directions ($(m_1,m_1)\in \{(6,2), (6,6)\}$) and $R_2$ uses the Relay Strategy $(0,0)$. Repeating $a_{n_{A2}-n_{A1}+1},...,a_{n_{1B}-n_{2B}+n_{A2}-n_{A1}}$ in favor of the backward channel does not affect the achievability of the forward channel because these signals have been decoded from the top levels of the received signal from the same relay. Also, repeating the $a_{n_{1B}-n_{2B}+n_{A2}-n_{A1}+1},...,a_{n_{1B}}$ in the same level range that they are located (i.e., on block 2 in Figure \ref{fig:22.1}) does not affect the decoding since the signals can be decoded from top and their effect can be cancelled. Repeating any of them on the next $2(n_{A2}-n_{A1})-n_{2B}$ lower levels (i.e., on block 3 in Figure \ref{fig:22.1}) does not affect the decoding because node $B$ decodes the upper $n_{2B}+n_{A1}-n_{A2}$ levels ($a_{n_{1B}-n_{2B}+n_{A2}-n_{A1}+1},...,a_{n_{1B}}$) first and cancel the effect of the repeated signals. If $a_{n_{1B}-n_{2B}+n_{A2}-n_{A1}+1},...,a_{n_{1B}}$ is repeated on the next $n_{2B}+n_{A1}-n_{A2}$ lower levels (i.e., on block 4 in Figure \ref{fig:22.1}), it does not affect the decoding because as it was explained in the definition of the strategies, in case that the repetitions of some streams from two relays are from the same level and are supposed to be repeated on the same level, the repeating relay ($R_1$) does not do the repeating for those levels and repeats those streams only one time. If any of them is repeated in any lower level by $R_1$, it is out of range and does not have effect on decoding of forward channel.

7. $(u_2,w_2)$: Figure \ref{fig:22.2} depicts the received signal at node $B$ (ignoring the effect of transmitted signal from $B$) assuming that both relays use Relay Strategy 0.

First, consider the case that the repetitions in favor of the forward and backward channels happen in different relays, i.e., $m_2, n_1=0$. In this case, $R_1$ repeats in favor of the forward channel (uses Relay Strategy $(6,0)$) and $R_2$ repeats in favor of the backward channel (uses Relay Strategy $(0,2)$ or $(0,6)$). Repeating $a_{n_{A2}-n_{A1}+1},...,a_{n_{A2}-n_{A1}+n_{1B}-n_{2B}}$ by $R_2$ in favor of the backward direction communication does not affect the achievability of the forward channel because they have been decoded from the top levels of the received signal from the other relay ($R_1$) where there is no interference from $R_2$. Also, repeating the $a_{1},...,a_{n_{2B}+n_{A1}-n_{A2}}$ by $R_2$  within the top $n_{2B}-n_{A2}+n_{A1}$ streams (i.e., on blocks 2 and 3 in Figure \ref{fig:22.2}) does not affect the decoding since the signals can be decoded from top and their effect can be cancelled. Repeating any of them on the next $2(n_{A2}-n_{A1})-n_{2B}$ lower levels by $R_2$ (i.e., on block 4 in Figure \ref{fig:22.2}) does not affect the decoding because node $B$ decodes the upper $n_{2B}+n_{A1}-n_{A2}$ levels ($a_{1},...,a_{n_{2B}+n_{A1}-n_{A2}}$) first and cancel the effect of the repeated signals. If $a_{1},...,a_{n_{2B}+n_{A1}-n_{A2}}$ is repeated by $R_1$ on the next $n_{1B}-n_{2B}$ lower levels (i.e., on block 5 in Figure \ref{fig:22.2}), in case that the repetitions of some streams from two relays are from the same level and are repeated on the same level, i.e., they create an equation as higher levels, $R_1$ does not do the repeating for those levels as was explained in the definition of the strategies, so there is no problem in decoding of forward direction channel. If any of them is repeated in any lower level by $R_2$, it is out of range and does not have effect on decoding of forward channel.

Now, take the case that the repetitions in favor of the forward and backward channels happen in the same relay ($R_1$), i.e., $(m_2,n_2)=(0,0)$. In this case, $R_1$ repeats in favor of both directions ($(m_1,m_1)\in \{(6,2), (6,6)\}$) and $R_2$ uses the Relay Strategy $(0,0)$. Repeating $a_{n_{A2}-n_{A1}+1},...,a_{n_{1B}-n_{2B}+n_{A2}-n_{A1}}$ in favor of the backward channel does not affect the achievability of the forward channel because these signals have been decoded from the top levels of the received signal from the same relay. Also, repeating the $a_{n_{1B}-n_{2B}+n_{A2}-n_{A1}+1},...,a_{n_{1B}}$ in the same level range that they are located does not affect the decoding since the signals can be decoded from top and their effect can be cancelled. Repeating any of them on the next $2(n_{A2}-n_{A1})-n_{2B}$ lower levels (i.e., on block 4 in Figure \ref{fig:22.2}), does not affect the decoding because node $B$ decodes the upper $n_{2B}+n_{A1}-n_{A2}$ levels ($a_{n_{1B}-n_{2B}+n_{A2}-n_{A1}+1},...,a_{n_{1B}}$) first and cancel the effect of the repeated signals. If $a_{n_{1B}-n_{2B}+n_{A2}-n_{A1}+1},...,a_{n_{1B}}$ is repeated on the next $n_{1B}-n_{2B}$ lower levels (i.e., on block 5 in Figure \ref{fig:22.2}), it does not affect the decoding because as it was explained in the definition of the strategies, in case that the repetitions of some streams from two relays are from the same levels and are supposed to be repeated on the same level, the repeating relay ($R_1$) does not do the repeating for those levels and repeats those streams only one time. If any of them is repeated in any lower level by $R_1$, it is out of range and does not have effect on decoding of forward channel.

\end{appendices}

\bibliographystyle{IEEETran}
\bibliography{bib,fdbib}

\end{document}